%%%%%%%%%%%%%%%%%%%%%%%%%%%%%%%%%%%%%%%%%%%%%%%%%%%%%%%%%%%%%%%%%%%%%%%%

%%% LaTeX Template for AAMAS-2024 (based on sample-sigconf.tex)
%%% Prepared by the AAMAS-2024 Program Chairs based on the version from AAMAS-2023. 

%%%%%%%%%%%%%%%%%%%%%%%%%%%%%%%%%%%%%%%%%%%%%%%%%%%%%%%%%%%%%%%%%%%%%%%%

%%% Start your document with the \documentclass command.

%%% == IMPORTANT ==
%%% Use the first variant below for the final paper (including auithor information).
%%% Use the second variant below to anonymize your submission (no authoir information shown).
%%% For further information on anonymity and double-blind reviewing, 
%%% please consult the call for paper information
%%% https://www.aamas2024-conference.auckland.ac.nz/calls/submission-instruction/

\documentclass[sigconf]{aamas} 
%\documentclass[sigconf,anonymous]{aamas} 

%%% Load required packages here (note that many are included already).

\usepackage{balance} % for balancing columns on the final page

%%%%%%%%%%%%%%%%%%%%%%%%%%%%%%%%%%%%%%%%%%%%%%%%%%%%%%%%%%%%%%%%%%%%%%%%

%%% AAMAS-2024 copyright block (do not change!)

\setcopyright{ifaamas}
\acmConference[AAMAS '24]{Proc.\@ of the 23rd International Conference
on Autonomous Agents and Multiagent Systems (AAMAS 2024)}{May 6 -- 10, 2024}
{Auckland, New Zealand}{N.~Alechina, V.~Dignum, M.~Dastani, J.S.~Sichman (eds.)}
\copyrightyear{2024}
\acmYear{2024}
\acmDOI{}
\acmPrice{}
\acmISBN{}

%%%%%%%%%%%%%%%%%%%%%%%%%%%%%%%%%%%%%%%%%%%%%%%%%%%%%%%%%%%%%%%%%%%%%%%%

%%% == IMPORTANT ==
%%% Use this command to specify your EasyChair submission number.
%%% In anonymous mode, it will be printed on the first page.

\acmSubmissionID{211}

%%% Use this command to specify the title of your paper.

\title{Is Limited Information Enough? An Approximate Multi-agent Coverage Control in Non-Convex Discrete Environments}

%%% Provide names, affiliations, and email addresses for all authors.

\iftrue
\author{Tatsuya Iwase}
\affiliation{
  \institution{Toyota Motor Europe NV/SA}
  \city{Zaventem}
  \country{Belgium}}
\email{tiwase@mosk.tytlabs.co.jp}

\author{Aur\'{e}lie Beynier}
\affiliation{
  \institution{Sorbonne Universit\'{e}, CNRS \\ LIP6}
  \city{F-75005 Paris}
  \country{France}}
\email{aurelie.beynier@lip6.fr}

\author{Nicolas Bredeche}
\affiliation{
  \institution{Sorbonne Universit\'{e}, CNRS \\ISIR}
  \city{F-75005 Paris}
  \country{France}}
\email{nicolas.bredeche@isir.upmc.fr}

\author{Nicolas Maudet}
\affiliation{
  \institution{Sorbonne Universit\'{e}, CNRS \\ LIP6}
  \city{F-75005 Paris}
  \country{France}}
\email{nicolas.maudet@lip6.fr}

\author{Jason R. Marden}
\affiliation{
  \institution{University of California, Santa Barbara}
  \city{Santa Barbara}
  \country{USA}}
\email{jrmarden@ece.ucsb.edu}

\begin{comment}    
\author{

Tatsuya Iwase$^1$
\and
Aur\'{e}lie Beynier$^2$\and
Nicolas Bredeche$^{3}$\And
Nicolas Maudet$^2$
\and
Jason R. Marden$^4$
\affiliations
$^1$Toyota Motor Europe NV/SA\\
$^2$Laboratoire d’Informatique de Paris 6 (LIP6), Sorbonne Universit\'{e}, CNRS\\
$^3$Institut des Syst\`{e}mes Intelligents et de Robotique (ISIR), Sorbonne Universit\'{e}, CNRS\\
$^4$University of California, Santa Barbara\\
\emails
tiwase@mosk.tytlabs.co.jp,
aurelie.beynier@lip6.fr,
nicolas.bredeche@sorbonne-universite.fr,
nicolas.maudet@lip6.fr,
jrmarden@ece.ucsb.edu
}
\end{comment}

\fi

%%% Use this environment to specify a short abstract for your paper.

\begin{abstract}
Conventional distributed approaches to coverage control may suffer from lack of convergence and poor performance, due to the fact that agents have limited information, especially in non-convex discrete environments.
To address this issue, we extend the approach of \cite{marden2016role} 
which demonstrates how a limited degree of inter-agent communication can be exploited to overcome such pitfalls in one-dimensional discrete environments. 
The focus of this paper is on extending such results
to general dimensional settings. 
We show that the extension is convergent and keeps the approximation ratio of 2, meaning that 
any stable solution is guaranteed to have a performance within 50\% of the optimal one.
The experimental results exhibit that our algorithm outperforms several state-of-the-art algorithms, and also that the runtime is scalable.
\end{abstract}

\begin{comment}
% old version
\begin{abstract}
\aamas{Conventional approaches to coverage control often suffer from poor performance of suboptimal solutions which result from the fact that agents have limited information related to the environment.}
To address this issue, we extend the approach of \cite{marden2016role} 
\aamas{which demonstrates how a limited degree of inter-agent communication can be exploited to overcome such pitfalls in one-dimensional discrete environments.}
\aamas{The focus of this paper is on extending such results}
to general dimensional settings for both continuous and discrete environments.
We show that the extension keeps the approximation ratio of 2 when applied to multi-agent coverage control with submodular objectives, meaning that any stable solution has a performance guarantee that is at least $50\%$ of the quality of the optimal solution. We also show that the computational complexity and communication complexity are both polynomial in the size of the problem.
The experimental results exhibit that our algorithm outperforms several state-of-the-art algorithms, and also that the runtime is scalable as per theory. 
\end{abstract}
\end{comment}

%%% The code below was generated by the tool at http://dl.acm.org/ccs.cfm.
%%% Please replace this example with code appropriate for your own paper.

%%% Use this command to specify a few keywords describing your work.
%%% Keywords should be separated by commas.

\keywords{Multi-agent systems; Coverage control; Communication}

%%%%%%%%%%%%%%%%%%%%%%%%%%%%%%%%%%%%%%%%%%%%%%%%%%%%%%%%%%%%%%%%%%%%%%%%

%%% Include any author-defined commands here.
         
\newcommand{\BibTeX}{\rm B\kern-.05em{\sc i\kern-.025em b}\kern-.08em\TeX}

%%%%%%%%%%%%%%%%%%%%%%%%%%%%%%%%%%%%%%%%%%%%%%%%%%%%%%%%%%%%%%%%%%%%%%%%

%\usepackage[noadjust]{cite}
%\usepackage{biblatex}
%\usepackage[a-1b]{pdfx} 
%\usepackage{amssymb,amsfonts}
\usepackage{textcomp}
\usepackage{xcolor}

\usepackage{comment}
\usepackage{url}
\usepackage[T1]{fontenc}
\usepackage[utf8]{inputenc}
\usepackage{blindtext}
\usepackage{pdfpages}

%\usepackage{times}
%\usepackage[numbers]{natbib}
%\PassOptionsToPackage{square,numbers}{natbib}
%\usepackage[title,titletoc]{appendix}
%\usepackage[dvipdfmx]{color}
%\usepackage[dvipdfmx]{graphicx}
\usepackage{here}
\usepackage{fancyhdr}
\excludecomment{hdn}
%\usepackage{xr}
%\externaldocument{body}
%\usepackage[title,titletoc]{appendix}

% aamas error
%\usepackage{bm,enumerate,amsmath,amssymb,amsfonts}

%\usepackage{ascmac}
%\usepackage{bm,enumerate,amsmath,amssymb,amsthm}
%\usepackage{bm,enumerate,amsmath,amsthm}
%\usepackage{amsthm}
%\theoremstyle{break}

\begin{comment}
\newtheoremstyle{mystyle}% name
  {\topsep}% space above
  {\topsep}% space below
  {}% body font
  {1em}% indent amount
  {\itshape}% theorem head font
  {}% punctuation after theorem head
  {.5em}% space after theorem head
  {\thmname{#1}\thmnumber{ #2}\thmnote{ (#3)}}% theorem head spec
\theoremstyle{mystyle}
\end{comment}

\newtheorem{lemma}{Lemma}

\newtheorem{theorem}{Theorem}
\newtheorem{example}{Example}

\usepackage{algorithm}
\usepackage[noend]{algpseudocode}

\algdef{SE}[FOR]{ForTo}{EndFor}[2]{\algorithmicfor\ #1\ \algorithmicto\ #2\ \algorithmicdo}{\algorithmicend\ \algorithmicfor}%
\algtext*{EndFor}

\newcommand{\mc}{\mathcal}

\newcommand{\argmax}{\mathop{\rm argmax~}\limits}
\newcommand{\argmin}{\mathop{\rm argmin~}\limits}

\newcommand{\ball}{{\rm ball}}

\newcommand{\adj}{{\rm adj}}

\newcommand{\lmin}{{\rm lmin}}

\newcommand{\neigh}{{E}}

\newcommand{\newcolor}{black}
\newcommand{\memo}[1]{\textcolor{red}{#1}}
\newcommand{\aamas}[1]{\textcolor{black}{#1}}
\newcommand{\aamasoct}[1]{\textcolor{black}{#1}}

\newcommand{\editNico}[1]{\textcolor{black}{#1}}
\newcommand{\forecai}[1]{\textcolor{black}{#1}}

\def\arxiv{0}
\newcommand{\cameracolor}{black}

\if\arxiv0
\newcommand{\camera}[1]{\textcolor{\cameracolor}{#1}}
\else
\newcommand{\camera}[1]{\textcolor{\cameracolor}{#1}}
\fi

\usepackage{balance} % for balancing columns on the final page
\usepackage{tikz}
\usetikzlibrary{arrows,shapes,backgrounds}

\newcommand*\circled[1]{\tikz[baseline=(char.base)]{
            \node[shape=circle,draw,thick,inner sep=2pt] (char) {#1};}}
            
\newcommand*\circledfilled[1]{\tikz[baseline=(char.base)]{
            \node[shape=circle,draw,thick,inner sep=2pt, fill=green!10] (char) {#1};}}
            
\newcommand*\circledred[1]{\tikz[baseline=(char.base)]{
            \node[shape=circle,draw,thick,inner sep=2pt, fill=red!10] (char) {#1};}}
            
\newcommand*\circledblue[1]{\tikz[baseline=(char.base)]{
            \node[shape=circle,draw,thick,inner sep=2pt, fill=blue!10] (char) {#1};}}
            
\newcommand*\circledorange[1]{\tikz[baseline=(char.base)]{
            \node[shape=circle,draw,thick,inner sep=2pt, fill=orange!30] (char) {#1};}}    
            
\newcommand*\circledbrown[1]{\tikz[baseline=(char.base)]{
            \node[shape=circle,draw,thick,inner sep=2pt, fill=brown!30] (char) {#1};}}

\newcommand*\circledfilleds[1]{\tikz[baseline=(char.base)]{
            \node[shape=circle,draw,thick,inner sep=1pt, fill=green!10] (char) {#1};}}
            
\newcommand*\circledreds[1]{\tikz[baseline=(char.base)]{
            \node[shape=circle,draw,thick,inner sep=1pt, fill=red!10] (char) {#1};}}
            
\newcommand*\circledblues[1]{\tikz[baseline=(char.base)]{
            \node[shape=circle,draw,thick,inner sep=1pt, fill=blue!10] (char) {#1};}}
            
\newcommand*\circledoranges[1]{\tikz[baseline=(char.base)]{
            \node[shape=circle,draw,thick,inner sep=1pt, fill=orange!30] (char) {#1};}}

\newcommand*\robot[1]{\tikz[baseline=(char.base)]{
        \node[shape=circle,draw,minimum size=4.5mm, inner sep=0pt, thick, fill=black!20] (char)
        {\rule[-3pt]{0pt}{\dimexpr2ex+2pt}#1};}}
        
\newcommand*\robotred[1]{\tikz[baseline=(char.base)]{
        \node[shape=circle,draw,minimum size=4.5mm, inner sep=0pt, thick, fill=red!50] (char)
        {\rule[-3pt]{0pt}{\dimexpr2ex+2pt}#1};}}
        
\newcommand*\robotgreen[1]{\tikz[baseline=(char.base)]{
        \node[shape=circle,draw,minimum size=4.5mm, inner sep=0pt, thick, fill=green!50] (char)
        {\rule[-3pt]{0pt}{\dimexpr2ex+2pt}#1};}}
        
\newcommand*\robotblue[1]{\tikz[baseline=(char.base)]{
        \node[shape=circle,draw,minimum size=4.5mm, inner sep=0pt, thick, fill=blue!50] (char)
        {\rule[-3pt]{0pt}{\dimexpr2ex+2pt}#1};}}
        
\newcommand*\robotorange[1]{\tikz[baseline=(char.base)]{
        \node[shape=circle,draw,minimum size=4.5mm, inner sep=0pt, thick, fill=orange!80] (char)
        {\rule[-3pt]{0pt}{\dimexpr2ex+2pt}#1};}}
        
\newcommand*\robotbrown[1]{\tikz[baseline=(char.base)]{
        \node[shape=circle,draw,minimum size=4.5mm, inner sep=0pt, thick, fill=brown!70] (char)
        {\rule[-3pt]{0pt}{\dimexpr2ex+2pt}#1};}}
        
\newcommand*\robotviolet[1]{\tikz[baseline=(char.base)]{
        \node[shape=circle,draw,minimum size=4.5mm, inner sep=0pt, thick, fill=violet!50] (char)
        {\rule[-3pt]{0pt}{\dimexpr2ex+2pt}#1};}}

\newcommand*\robotgreens[1]{\tikz[baseline=(char.base)]{
        \node[shape=circle,draw,minimum size=4.mm, inner sep=0pt, thick, fill=green!50] (char)
        {\rule[-3pt]{0pt}{\dimexpr2ex+2pt}#1};}}
        
\newcommand*\robotblues[1]{\tikz[baseline=(char.base)]{
        \node[shape=circle,draw,minimum size=4.mm, inner sep=0pt, thick, fill=blue!50] (char)
        {\rule[-3pt]{0pt}{\dimexpr2ex+2pt}#1};}}
        
\newcommand*\robotoranges[1]{\tikz[baseline=(char.base)]{
        \node[shape=circle,draw,minimum size=4.mm, inner sep=0pt, thick, fill=orange!80] (char)
        {\rule[-3pt]{0pt}{\dimexpr2ex+2pt}#1};}}

\newcommand*\zoneviolet{\tikz[baseline=(char.base)]{
        \node[shape=circle, minimum size=4.5mm, inner sep=0pt, fill=violet!30] (char){} ;}}
        
\newcommand*\zoneorange{\tikz[baseline=(char.base)]{
        \node[shape=circle, minimum size=4.5mm, inner sep=0pt, fill=orange!30] (char){} ;}}
        
\newcommand*\zonered{\tikz[baseline=(char.base)]{
        \node[shape=circle, minimum size=4.5mm, inner sep=0pt, fill=red!10] (char){} ;}}
  
\newcommand*\zoneblue{\tikz[baseline=(char.base)]{
        \node[shape=circle, minimum size=4.5mm, inner sep=0pt, fill=blue!10] (char){} ;}}

\makeatletter
\gdef\@copyrightpermission{
	\begin{minipage}{0.3\columnwidth}
		%\href{https://creativecommons.org/licenses/by/4.0/}{\includegraphics[width=0.90\textwidth]{figs/by.eps}}
		\href{https://creativecommons.org/licenses/by/4.0/}{\includegraphics[width=0.90\textwidth]{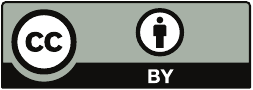}}
        \end{minipage}\hfill
	\begin{minipage}{0.7\columnwidth}
		\href{https://creativecommons.org/licenses/by/4.0/}{This work is licensed under a Creative Commons Attribution International 4.0 License.}
	\end{minipage}
	\vspace{5pt}
}
\makeatother

\begin{document}

%%% The following commands remove the headers in your paper. For final 
%%% papers, these will be inserted during the pagination process.

\pagestyle{fancy}
\fancyhead{}

%%% The next command prints the information defined in the preamble.

\maketitle 

%%%%%%%%%%%%%%%%%%%%%%%%%%%%%%%%%%%%%%%%%%%%%%%%%%%%%%%%%%%%%%%%%%%%%%%%
%Conventional coverage control based on Voronoi partition suffers from poor performance of suboptimal solutions, especially in the pathological case where a small number of agents occupy the large part of the environment by blocking other agents.

\section{Introduction}

Coverage control is a fundamental problem in the field of multiagent systems (MAS). The objective of a coverage control problem is to deploy homogeneous agents to maximize a given objective function, which basically captures how distant the group of agents as a whole is from a pre-defined set of Points of Interest (PoI). Coverage control has a wide range of applications, such as tracking, mobile sensing networks or formation control of autonomous mobile robots~\cite{cortes2004coverage}. % In such systems, agents usually have limited sensing and communication capabilities.  
%Different distributed algorithms have been studied in both continuous and discrete settings, based only on local information and communication without any global knowledge or support of infrastructures \cite{cortes2004coverage,gao2008notes}. 

It is known that, even in a centralized context, finding an optimal solution for the coverage problem is an NP-hard problem~\cite{megiddo1984complexity}. Hence, most studies focus on approximate approaches. %While this result hold even with full access to information regarding the allocation, 
In distributed settings, game-theoretical control approaches seek to design agents that will be incentivized to behave autonomously in a way that is well-aligned with the designer's objective. This strategy has proven to be successful in a number of applications (see, e.g.~\cite{douchan2019}). 
The situation is made more difficult in practice since agents often have restricted sensing and communication capabilities. Consequently, agents must make decisions based on local information about their environment and the other agents. 
Unfortunately, algorithms based on local information may \aamas{suffer from lack of convergence and degraded performance}
%perform poorly 
due to miscoordination between the agents. \aamas{As for the convergence issue, it is known that a move of an agent can affect the cost of agents outside of the neighborhood, and thus, the decrease of the global cost cannot be guaranteed locally, especially in the case of discrete non-convex environment \cite{yun_rus_2014}. 
%To see why this may be the case, think of 
The degradation of performance can be explained with} a worst-case scenario in which only a single agent can perceive a large number of valuable locations within an environment, while a number of other agents cannot perceive these locations.  
 
%a small number of agents occupy the large part of the environment by blocking other agents from coming into the area. 
%Different distributed algorithms have been studied in both continuous and discrete settings, based only on local information and communication without any global knowledge or support of infrastructures \cite{cortes2004coverage,gao2008notes}. 

Recently, Marden \cite{marden2016role} made more precise the  connection between the degree of locality
related  to the  available  information and the achievable efficiency guarantees. He showed that the achievable approximation ratio depends on the individual amount of information available to the agents. Consequently, distributed algorithms are inevitably subjected to poor worst-case guarantees because of the locality of the information used to make decisions. 
If all agents have full global information as in the case of centralized control, there exist decentralized algorithms that give a 2 approximation ratio. Conversely, under limited information (e.g. Voronoi partitions), no such algorithm provides such an approximation ratio. Rather, the best decentralized algorithm that adheres to these informational dependencies achieves, at most, an $n$ approximation factor, where $n$ is the number of agents.
%In case of traditional coverage control where each agent has access only to the local information in its Voronoi region, the efficiency can be $1/N$ of the optimal solution where $N$ is the number of agents.
%, especially due to the blocking issues. 
%Meanwhile, if all agents have fully global information as in the case of centralized control, the greedy algorithm can exploit the submodularity of the objective function and achieve approximation ratio of 2. 
Then, the focus in MAS settings is on how to design agents that, through sharing a limited amount of information with neighborhood communications, achieve an approximation ratio close to 2.
%\modified{If we assume a centralized controller that has access to global information, it is already known that the greedy algorithm exploiting the submodularity of the objective function achieves approximation ratio of 2 \cite{marden2016role}. Then the focus in MAS settings is how to design agents that share their local information with neighborhood communications, to achieve closer to the approximation ratio of 2.}

%The challenge thus becomes to overcome local knowledge about the other agents by communicating a limited amount of information: 

%It is useful to distinguish existing approaches according to the information agents may communicate 
Indeed, different settings exist  \camera{depending on the assumed communication model, that is, how  information can be shared within the system in order to potentially coordinate moves}:

\begin{enumerate}
    \item agents may not communicate any information and can only be  guided by their local perception; 
        \item  agents may communicate to their  neighbors only (where neighbors can be defined as agents in a limited range, or connected via a network of communication); 
    \item agents may communicate beyond their neighbors, either via broadcasting, or indirectly via gossip protocols.
    %\item agents may communicate several candidate moves to their neighbors, including coordinated moves involving several other agents. 
    
    %\item Sadeghi \textit{et al.} \cite{sadeghi2020approximation} proposed a distributed algorithm where agents share all their possible actions and resulting performances via neighborhood communication. 
%This algorithm pushes the blocking agents toward the open space, making a space for other agents to move, and achieves the approximation ratio of almost 5. 
    %\item Marden \cite{marden2016role} achieved the theoretically best approximation ratio of 2, by sharing \emph{only the information of the minimum utility among agents}. However, he assumes a one dimensional line environment, which severely limits  real-world applications.
%\item Other studies try to achieve the global optimality by developing approaches similar to simulated annealing. For example, \cite{arslan2007autonomous} achieves global optimum with Spatial Adaptive Play (SAP, a.k.a Boltzmann exploration) and uses random search to escape from the local optimum. However, this approach suffers from slow convergence rate when the search space is large. 
\end{enumerate}

Existing game-theoretical approaches can be classified into these types: classical Voronoi-based best-response approaches fall into either (1) or (2), depending on the assumption. 
%Clearly, type (3) may be refined further along two dimensions: \emph{how many} agents to consider for the coordinated moves, and \emph{which} agents to consider (neighbors only, neighbors of neighbors, etc.). 
Two recent approaches in type (3) 
%make different choices as to what dimension to explore. 
explore different directions.
Sadeghi \textit{et al.} \cite{sadeghi2020approximation} proposed a distributed algorithm \aamas{for non-convex discrete settings} in which agents have the possibility to coordinate a move with a single other agent (meaning that an agent moves and assumes another agent takes her place simultaneously), possibly beyond their neighbors, when individual moves are not sufficient. \aamas{However, the algorithm sacrifices the approximation ratio for convergence.}
%share all their possible actions and resulting performances via neighborhood communication. 
\camera{Another related distributed approach proposed in  \cite{durham2011discrete} is to aim for pairwise optimal partitions (which, they show, are also Voronoi partitions with the additional property that no pairs of agents can cooperate to find a better partitioning of their two regions). Again, while their approach ensures convergence (to good solutions in practice), it does not come with a guaranteed approximation ratio.}
In a similar spirit, Marden~\cite{marden2016role} allows coordinated moves with several \camera{(possibly, beyond two)} agents, but only under the restriction that these agents are within the neighborhood. 
While this approach achieves \aamas{a convergent algorithm with} an approximation ratio of 2,  by sharing only the information of the minimum utility among agents, it is limited in that the only investigated case is the one-dimensional (line) environment. This severely limits real-world applications. 

Other studies \camera{rely on different techniques. Distributed constraint optimization is a natural way to model this problem \cite{zivan-etal-jaamas15}, and several standard algorithms can be exploited. However, they do not offer approximation guarantees either.}
Finally, it is possible to try to achieve global optimality by developing approaches akin to simulated annealing. For example, \cite{arslan2007autonomous} attains global optimum with Spatial Adaptive Play (SAP, a.k.a Boltzmann exploration) and uses random search to escape from the local optimum. However, this approach suffers from a slow convergence rate when the search space is large. There is also no discussion about the relationship between information and efficiency.
Note that we focus on the coverage problem, and  multiagent path-planning issues  \cite{galceran2013survey,stern2019multi,okumura2022solving} are out of the scope of the paper. 

In this paper, we extend the algorithm of \cite{marden2016role} to any-dimensional, \aamas{non-convex discrete} space and compare this approach with the aforementioned alternative variants of game-theoretical control. 
The remainder of this paper is as follows. 
Section~\ref{sec:model} introduces the model and existing approaches. In Section \ref{sec:algo} we detail the algorithm and prove that it 
guarantees \aamas{convergence to} a neighborhood optimum solution with an approximation ratio of 2 without any restriction on the dimensionality of the environment, i.e., the same guarantee as in the 1D case. Additionally, we propose a scalable extension of the algorithm and discuss the computation and communication complexity.
%The price to pay comes from the fact the approach makes use of a subroutine that solves an NP-hard problem. However, the bottleneck lies in the size of the neighborhood, which proves to be manageable in practice. 
%\item We also show that simultaneous moves of agents almost surely converges to the neighborhood optimum if we introduce a randomized algorithm. This parallelization can speed up the convergence of the algorithm.
Experiments reported in Section \ref{sec:exps} indeed show that our algorithm outperforms existing ones, along with adhering to the theoretical approximation ratio.
Furthermore, the runtime results confirm the scalability of the proposed algorithm.
%Experimental results also confirm the fact that our algorithm is scalable when the environment is "thin" and the size of the neighborhood is bounded\forecai{, which covers a wide range of indoor applications \cite{brass2011multirobot}}. % NB: same comment as for the abstract: missing definition of "thin"
%\end{itemize}

%%%%%%%%%%%%%%%%%%%%%%%%%%%%%%%%%%%%%%%%%%%%%%%%%%%%%%%%%%%%%%%%%%%%%%%%
\section{Model}
\label{sec:model}
%%%%%%%%%%%%%%%%%%%%%%%%%%%%%%%%%%%%%%%%%%%%%%%%%%%%%%%%%%%%%%%%%%%%%%%%

%We first discuss a general problem of resource allocation and how information impacts on the efficiency of the distributed algorithms for the problem. Then we proceed to our main focus on the coverage problem, which is a special class of the resource allocation problem.

\subsection{Coverage Problems}
\label{sec:ra}

We start with a set of agents $\mc{N}=\{1,\ldots,n\}$ and a set of resources $C=\{c_1,\ldots,c_m\}$.   
In a coverage problem, resources are locations (or points) in a connected metric space. \forecai{We assume that this is discrete finite space that is modelled as a connected graph $(C,\mc{E})$, where $\mc{E}$ is the set of edges that connect two adjacent} \camera{resources}. Though our approach can be extended to continuous settings, we omit the detail due to the space limit.
%We assume that $C$ is compact in a continuous setting, and it is finite in a discrete setting. 
We denote the distance between two \camera{resources} $a,b\in C$ as $|a-b|$ that is the length of the shortest path. 
%\aamas{Let $x_i$ denote the location of agent $i$. We also denote $-i=\mc{N}\setminus\{i\}$.} 

An allocation $x$ maps each agent $i$ in  $\mc{N}$ to a resource (i.e., a point) in $C$. An allocation is thus defined as a vector of resources $x=\langle x_i\in C|i\in\mc{N} \rangle$ where $x_i$ is the resource assigned to agent $i$. Note that each agent must be allocated one and only one resource.
An allocation is exclusive i.e., $x_i \neq x_j$ $ \forall i, j \in \mc{N}$ such that $i \neq j$. 
%Indeed,  a location cannot be occupied by two agents since we assume $|x_i-x_j|\geq 2r_a$  $\forall i,j\in\mc{N}$ and $i\not=j $, with $r_a$ is the radius of agents. In the following,  we consider that $r_a=0.5$, WLOG. 
%An allocation of agents is a combination of locations $x=\{x_i\in C|i\in\mc{N}\}$, where $x_i\not=x_j, \forall j\in -i$. We denote a set of all possible allocations as $\mc{X}$. 
We denote the set of all possible allocations as $\mc{X}=\underset{\substack{i\in\mc{N}}}{\times}\mc{X}_i$ where $\mc{X}_i$ is the set of possible positions for agent $i$. 

Let $g:\mathbb{R}_+\to\mathbb{R}_+$ denote a non-increasing function, $v_c\in\mathbb{R}_+$ be the weight of \camera{resource} $c\in C$, and $\mc{C}\subseteq C$ be a partial space. Then, the objective function for $\mc{C}$  
%in a continuous setting 
is defined as follows:
\begin{comment}
\begin{equation}
\left.
\begin{array}{l}
G(x;\mc{C})=\underset{\substack{c\in\mc{C}}}{\int}\underset{\substack{i\in\mc{N}}}{\max}~ v_c ~g(|x_i-c|)dc,
\end{array}
\right.
\label{eq:Gcont}
\end{equation}
while the one in a discrete setting is as follows:
\end{comment}
\begin{equation}
\left.
\begin{array}{l}
G(x;\mc{C})=\underset{\substack{c\in\mc{C}}}{\sum}\underset{\substack{i\in\mc{N}}}{\max}~ v_c ~g(|x_i-c|).
\end{array}
\right.
\label{eq:Gdisc}
\end{equation}

For simplicity, we denote $G(x)=G(x,C)$.  The goal of the coverage problem is then to find an optimal allocation $x^*\in\mc{X}$ such that:

\begin{equation}
\left.
\begin{array}{l}
x^*\in\underset{\substack{x\in\mc{X}}}{\argmax}~G(x).
\end{array}
\right.
\label{eq:globalobj}
\end{equation}

%[ADD THE EXAMPLE HERE]

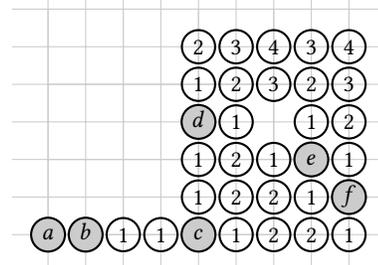
\begin{figure}
\centering
\tikzstyle{background grid}=[draw, black!20,step=.5cm]
%\tikzstyle{robot}=[circle,draw, fill=black!20, line width=1pt]
\begin{tikzpicture}[show background grid]
\draw[black!20] (0,0) grid (4,3);
            % Put the graphic inside a node. This makes it easy to place the
            % graphic and to draw on top of it. 
            % The above right option is used to place the lower left corner
            % of the image at the (0,0) coordinate. 
            % show origin
            %\fill (0,0) circle (5pt);
            \draw (0,0) node {$\robot{$a$}$}; 
            \draw (0.5,0) node {$\robot{$b$}$};
            \draw (1,0) node {$\circled{1}$};
            \draw (1.5,0) node {$\circled{1}$};
            \draw (2,0) node {$\robot{$c$}$};
            \draw (2.5,0) node {$\circled{1}$};
            \draw (3,0) node {$\circled{2}$};
            \draw (3.5,0) node {$\circled{2}$};
            \draw (4,0) node {$\circled{1}$};
            \draw (2,0.5) node {$\circled{1}$};
            \draw (2.5,0.5) node {$\circled{2}$};
            \draw (3,0.5) node {$\circled{2}$};
            \draw (3.5,0.5) node {$\circled{1}$};
            \draw (4,0.5) node {$\robot{$f$}$};
            \draw (2,1) node {$\circled{1}$};
            \draw (2.5,1) node {$\circled{2}$};
            \draw (3,1) node {$\circled{1}$};
            \draw (3.5,1) node {$\robot{$e$}$};
            \draw (4,1) node {$\circled{1}$};
            \draw (2,1.5) node {$\robot{$d$}$};
            \draw (2.5,1.5) node {$\circled{1}$};
            %\draw (3,1.5) node {$\circled{1}$};
            \draw (3.5,1.5) node {$\circled{1}$};
            \draw (4,1.5) node {$\circled{2}$};
            \draw (2,2) node {$\circled{1}$};
            \draw (2.5,2) node {$\circled{2}$};
            \draw (3,2) node {$\circled{3}$};
            \draw (3.5,2) node {$\circled{2}$};
            \draw (4,2) node {$\circled{3}$};
            \draw (2,2.5) node {$\circled{2}$};
            \draw (2.5,2.5) node {$\circled{3}$};
            \draw (3,2.5) node {$\circled{4}$};
            \draw (3.5,2.5) node {$\circled{3}$};
            \draw (4,2.5) node {$\circled{4}$};
\end{tikzpicture}
\vspace{-0.2cm}
\caption{Coverage example with 6 agents: Circles in grey are agents. A number in a white circle shows the Manhattan distance to the closest agent.} \label{fig:ex1}
\end{figure}

%\medskip
\begin{example}
\label{ex:cov}
Let us then consider the coverage problem depicted in Figure \ref{fig:ex1}. For the sake of exposure, let us assume that %$v_c=1$ for all points in $C$, 
$g(d)=1/(1+d)$, and $|\cdot|$ is Manhattan distance. Agents are identified by letters $a, b, \dots f$. The environment is a grid world, where \camera{each junction is a resource (i.e. location)}. Circled locations are valued $v_c=1$, while the others are valued $v_c=0$. The locations covered by an agent are represented in grey and we indicate the name of the agent. For unoccupied locations, we indicate on the node the (Manhattan) distance to the closest agent. Thus, a covered location gives a utility of 1, while a location at a distance of 1 from the closest agent gives a utility 1/2, and so on. 
The depicted allocation $x$ has value $G(x) = 1 + 1+ \frac{1}{2} + \dots = 16.4 $. 
It is clearly sub-optimal. The reader can check that the allocation $x'$ where agent $e$ has moved one location up would induce $G(x') \simeq 16.8$. %The optimal allocation $x*$ would induce $G(x*) =TOCOMPUTE$. 

\end{example}

\editNico{
It is well-known that solving optimally this problem is NP-hard \cite{megiddo1984complexity}. Nevertheless, thanks to the submodular nature of the objective function $G$, the centralized greedy algorithm which allocates agents one by one (starting from the empty environment) guarantees $1-1/e\approx$ 63\% of optimal \cite{calinescu2011maximizing}. We will use this algorithm as a baseline for comparison. 
}

\subsection{Game-Theoretic Control: Generalities}

Since we focus on distributed algorithms, each agent has to make a choice about her position based on the partial information she has about the other agents. We thus adopt a game theoretic approach where each agent computes her best response based on her individual information \aamasoct{\cite{nisan2007algorithmic,Shoham2008}}. 

%Since we focus on distributed algorithms, we consider game theoretic settings.

%The set of choices of agent $i\in\mc{N}$ is denoted as $\mc{X}_i\subseteq C$. Then the allocation of agents is denoted as $x=(x_1,\ldots,x_n)$ where $x_i\in\mc{X}_i$ denotes the allocation of agent $i$. We denote the set of all possible allocations as $\mc{X}=\underset{\substack{i\in\mc{N}}}{\times}\mc{X}_i$. 

In the following,  $-i$ will denote the set of agents excluding $i$: $-i=\mc{N}\setminus\{i\}$. A partial allocation for a subset of agents $S\subseteq \mc{N}$ will be denoted as $x_{S}=\{x_i\}_{i\in S}$.

%Let $u_i: \mc{X}\to\mathbb{R}$ be a utility function of agent $i$. 
\aamasoct{This paper will focus on providing each agent $i$ with a utility function of the form $u_i: \mc{X}\to\mathbb{R}$ that will ultimately guide their individual behavior.}
In the rest of the paper, we formulate the set of choices of agent $i$ as $\mc{X}_i(x)$ assuming that it depends on a given allocation of all agents. 
 %(Note that this assumption holds in the coverage problems described later). 
For simplicity, we denote $\mc{X}(x)=\underset{\substack{i\in\mc{N}}}{\times}\mc{X}_i(x)$.

When each agent $i\in\mc{N}$ selects the position  $x_i\in\mc{X}_i(x)$ that maximizes her utility given the other agent's positions $x_{-i}$, the resulting allocation $x = \langle x_1 \cdots x_n \rangle $ can be a (pure) Nash equilibrium such that:

\begin{equation}
\left.
\begin{array}{l}
\forall i \in \mc{N}  \textrm{~}  u_i((x_i,x_{-i}))\geq u_i((x_i',x_{-i})), \forall x_i'\in \mc{X}_i(x), x_i \neq x_i'.
\end{array}
\right.
\label{eq:ne}
\end{equation}

In general, the efficiency of a Nash equilibrium $G(x)$ can be smaller than the optimal value $G(x^*)$. 
\aamasoct{One of the reasons for this suboptimality is the miscoordination among self-interested agents, as agents are required to make independent decisions in response to available information.  Other sources of suboptimality come from the structure of the agents' utility functions and available choice sets.}
%One of the reasons for this suboptimality is the miscoordination among self-interested agents. 
%, and it can be addressed by communication among agents.
%[DEFINE PoA/PoS]
The (worst-case, among all instances) ratio between the worst Nash equilibria and the social optimum is known as the price of anarchy (PoA) \cite{KoutsoupiasEtAl2009}. % \forecai{We will show a concrete example of game-theoretic control later in this section.}
Here, the choice set $\mc{X}_i$ could encode physical constraints on choices, i.e. an agent can only physically choose \camera{among a} subset of choices given the behavior of the collective $x$. Alternatively, the choice set could encode information availability of the agent, e.g. it is the set of choices for which the agent can evaluate its utility. Regardless of the interpretation, it is important to highlight that the structure of the choice sets can significantly alter the structure and efficiency of the resulting equilibria which we discuss in the next section.

\subsection{Local Information}
\label{sec:info}

%[CITE OTHER APPROACHES OF LOCAL INFORMATION, IN PARTICULAR GRAPHICAL GAMES]

Several approaches in game theory seek to exploit the fact that agents typically have limited sensing power and only a local view of the situation. 
Marden \cite{marden2016role} \aamasoct{analyzed how miscoordination among agents with limited information leads to inefficient Nash equilibria. To this end, he introduced the concept of \emph{information set} which is the set of choices each agent can perceive and compute the resulting utilities.
%he formulated that agents choose actions from their information set to maximize their own utilities. 
In this paper, $\mc{X}_i(x)$ corresponds to the information set that is the set of locations agent $i$ can perceive based on spatial proximity. With this notion, an allocation is a Nash equilibrium (Equation ~\ref{eq:ne}) if agents are at least as happy as any choice for which they can evaluate their utility.}
%proposed to model the local information available to each agent $i \in \mc{N}$ in allocation $x$ by the set $F_i(x) \subseteq C$, where $F_i(x)$ identifies all possible decisions for which agent $i$ can evaluate her utility given that the decision $x_{-i}$ of all other agents remains fixed. Hence, given a set $F_i(x)$, agent $i$ can evaluate all the decisions $(x_i,x_{-i})$ such that $x_i \in F_i(x)$. 
Observe that the information set $\mc{X}_i(x)$ is a state-dependent notion: the local information available may vary depending on the current allocation $x$. 

To model to what extent the information is localized, Marden defined
the following \emph{redundancy index} associated to the agents' local information sets $\{\mc{X}_i\}_{i \in \mc{N}}$:

\begin{equation}
\left.
\begin{array}{l}
f=\underset{\substack{x\in\mc{X}}}{\min}\underset{\substack{y\in C}}{\textrm{~}\min} \textrm{~} |\{i\in\mc{N}:y\in\mc{X}_i(x)\}|.
\end{array}
\right.
\label{eq:f}
\end{equation}

Intuitively, $f$ represents the minimum number of agents that perceive the same resource available for their choice. 
%The redundancy index depends on the set of available resources $\mc{X}_i(x)$. 
In particular, we note that: 
\begin{itemize}
    \item $f>0$ guarantees that all the locations are always  a possible choice for some agent of the system; 
    \item If there is an allocation  $x$ where  a resource is a possible choice for only one agent and all other resources are possible choices for at least one agent,  then $f=1$. 
    \item $f = n$ is the extreme case of full information access where all resources are possible choices for all agents. 
\end{itemize}

\subsection{Linking Information and Inefficiency}

Intuitively, local information can cause distributed systems based on game-theoretic control to get stuck in an inefficient allocation. Indeed, some agents who may obtain higher utilities for a resource may not have  access to this information.  The redundancy index hence gives insights into the amount of information available to the agents and about guarantees on the worst-case ratio. 
  
%We will see an example of the conventional coverage control which falls in this pathological case in the later section.
%This impact of sharing information via communication is studied in \cite{marden2016role}.
Marden \cite{marden2016role} investigated this interplay, in the broader context of resource allocation games. 
The global objective function $G$ \camera{is assumed to be} monotone submodular i.e., \camera{it }satisfies the following conditions:
%A function $G$ is anonymous if it is invariant to any permutation $p$ of the agents as $G(x_1,\ldots,x_N)=G(x_p(1),\ldots,x_p(N))$. 
%A set function $G$ is monotone if $G(x_{T})\geq G(x_{S}), \forall S\subseteq T\subseteq N$. A set function $G$ is submodular if:
\begin{equation}
\left.
\begin{array}{l}
G(x_{T})\geq G(x_{S}), \forall S\subseteq T\subseteq N. \\
G(x_S)-G(x_{S\setminus \{i\}})\geq G(x_T)-G(x_{T\setminus\{i\}}),  \forall S\subseteq T \subseteq \mc{N}, \forall i\in S.\\
\end{array}
\right.
\label{eq:submodular}
\end{equation}
and satisfies two further properties. First, the utility of each agent is greater than her marginal contribution:
\begin{equation}
\left.
\begin{array}{l}
u_i(x) \geq G(x)-G(x_{-i}).
\end{array}
\right.
\label{eq:wlu}
\end{equation}
Second, social welfare is less than the global objective:
\begin{equation}
\left.
\begin{array}{l}
\underset{\substack{i\in\mc{N}}}{\sum}u_i(x) \leq G(x).
\end{array}
\right.
\label{eq:bb}
\end{equation}
In the later section, we will see that the global objective function and the utility function of the conventional coverage control  both satisfy these assumptions.

%First, the blocking situation is formulated with $\mc{X}_i(x_{-i})$, that is the set of location choices available to agent $i$, with given allocation of other agents $x_{-i}$.  
%The miscoordination between agents becomes serious especially when some important information is limited to only small number of agents. 

%Note that $f=1$ is the case of coverage control based on Voronoi partition. When $\mc{P}_i=\mc{X}_i(x)$, agent $i$ tries to maximize its utility $u_i$ taking care of the local situation in $\mc{X}_i(x)$. 
%The other extreme case is $f=N$, which is the case of full information in the sense that all agents can choose any resources at any time and evaluate the utility of taking the resources. 
The following theorem shows how the value of  $f$ impacts the efficiency of  Nash equilibrium allocations:

\begin{comment}
Assuming partition $\mc{P}_i=\mc{X}_i(x), \forall i\in\mc{N}$, a (pure) Nash equilibrium $x\in\mc{X}$ is defined as follows:

\begin{equation}
\left.
\begin{array}{l}
u_i((x_i,x_{-i}),\mc{X}(x))\geq u_i((x_i',x_{-i}),\mc{X}(x)), \forall x_i'\in \mc{X}_i(x), \forall i \in \mc{N}.
\end{array}
\right.
\label{eq:ne}
\end{equation}
\end{comment}

%\medskip
\begin{theorem}[from \cite{marden2016role}]
If $G$ is a monotone submodular set function and $u_i$ satisfies (\ref{eq:wlu}) and (\ref{eq:bb}), the worst case efficiency of Nash equilibrium $x$ is lower bounded by
\begin{equation}
\left.
\begin{array}{l}
G(x)\geq \frac{f}{n+f}G(x^*).
\end{array}
\right.
\label{eq:poa}
\end{equation}
%where $x$ and $x^*$ represent a Nash equilibrium and optimal allocation, respectively. 
Furthermore, there exists a case such that: 
\begin{equation}
\left.
\begin{array}{l}
G(x)= \frac{f}{n}G(x^*).
\end{array}
\right.
\label{eq:poaeq}
\end{equation}
\label{thm:poa}
\end{theorem}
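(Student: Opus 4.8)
The plan is to prove the two parts separately, the lower bound \eqref{eq:poa} by a smoothness-style argument and the tightness claim \eqref{eq:poaeq} by exhibiting an explicit instance. For the lower bound, let $x$ be a Nash equilibrium and $x^*$ an optimal allocation. The standard trick is to bound $G(x^*)$ by considering, for each agent $i$, the deviation from $x$ to $x^*_i$ — but this is only legitimate if $x^*_i \in \mc{X}_i(x)$, which need not hold. This is exactly where the redundancy index $f$ enters: since every resource $y$ (in particular each $x^*_i$) lies in the information set of at least $f$ agents, I can build a correspondence that assigns each target location $x^*_i$ to one of the $\geq f$ agents who can actually perceive it, in such a way that no agent is assigned more than $\lceil n/f \rceil$ targets (a simple counting/pigeonhole balancing). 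Then for each such assignment of target $x^*_j$ to agent $i$, the Nash condition gives $u_i(x) \geq u_i(x^*_j, x_{-i})$.

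Next I would chain inequalities. Using submodularity of $G$ together with \eqref{eq:wlu}, the marginal value of inserting the optimal locations one at a time into $x$ is controlled: $G(x^* \cup x) - G(x) \leq \sum_j \big(G(x_{-i}\cup\{x_j^*\}) - G(x_{-i})\big) \leq \sum_j u_i(x^*_j,x_{-i})$, where the outer sum ranges over the target-to-agent assignment. Monotonicity gives $G(x^*) \leq G(x^* \cup x)$. Collecting these, $G(x^*) - G(x) \leq \sum_{\text{assignments}} u_i(x) \leq \lceil n/f\rceil \sum_{i} u_i(x) \leq \frac{n}{f} G(x)$ by \eqref{eq:bb}, which rearranges to $G(x) \geq \frac{f}{n+f} G(x^*)$. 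I expect the delicate point here is getting the multiplicity bound tight — one must be careful that the factor is $n/f$ and not something weaker, and that the telescoping of marginal contributions is applied to a valid ordering; the use of submodularity (not just monotonicity) in the marginal-sum step is essential and should be stated carefully.

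For the tightness statement \eqref{eq:poaeq}, the approach is to construct a family of instances parameterized so that the redundancy index is exactly $f$ and a Nash equilibrium achieves precisely $\frac{f}{n}G(x^*)$. The natural construction has $n$ agents; a block of resources that all $n$ agents can reach, worth a small amount, and a block of high-value resources that only $f$ designated agents can perceive under the equilibrium allocation $x$ (so $f = f$ by \eqref{eq:f}), while the optimal allocation $x^*$ places $n$ agents onto $n$ copies of the high-value configuration. Tuning the weights $v_c$ so that at $x$ the $f$ informed agents together collect a $\frac{f}{n}$ fraction of what all $n$ agents collect at $x^*$, and checking that $x$ is genuinely a Nash equilibrium (no informed or uninformed agent has a profitable deviation within its information set), yields the equality. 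I would only sketch this instance rather than verify every utility value; the main obstacle is arranging the information sets $\mc{X}_i(x)$ to be simultaneously consistent with a spatial-proximity interpretation and with the required redundancy index, but since the theorem is quoted from \cite{marden2016role} in the abstract resource-allocation setting, I can invoke that construction directly and merely note it transfers.

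Overall, the genuinely non-routine step is the balanced assignment of optimal targets to perceiving agents together with the submodular telescoping bound; the rest is bookkeeping with \eqref{eq:wlu}, \eqref{eq:bb}, and monotonicity.
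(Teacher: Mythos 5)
You should first note that the paper does not actually prove this statement: Theorem~\ref{thm:poa} is imported from \cite{marden2016role}, and the appendix block headed ``Proof of Theorem \ref{thm:poa}'' is really the approximation-ratio argument for Theorem~\ref{thm:main}, i.e.\ the full-information analogue in which the telescoping chain $G(x^*)\leq G(x,x^*)\leq G(x)+n\,M_1(x,\mc{P})\leq G(x)+n\,u_{\min}(x,\mc{P})\leq 2G(x)$ closes the loop. Your plan for the lower bound is that same telescoping-with-submodularity argument augmented with the $f$-counting needed for the general statement, and its architecture is right: the Nash condition applied to deviations $x^*_j\in\mc{X}_i(x)$, the chain $u_i(x^*_j,x_{-i})\geq G(x^*_j,x_{-i})-G(x_{-i})\geq G(x,x^*_1,\ldots,x^*_j)-G(x,x^*_1,\ldots,x^*_{j-1})$ from (\ref{eq:wlu}) plus submodularity, monotonicity to pass from $G(x^*)$ to $G(x,x^*)$, and (\ref{eq:bb}) to finish. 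Deferring the tightness instance to \cite{marden2016role} is legitimate given that the theorem is explicitly quoted from there.

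The one concrete gap is the multiplicity accounting you yourself flag. A balanced one-to-one assignment of the $n$ targets to perceiving agents only guarantees at most $\lceil n/f\rceil$ targets per agent, which yields $G(x)\geq \frac{1}{1+\lceil n/f\rceil}G(x^*)$, strictly weaker than $\frac{f}{n+f}$ whenever $f$ does not divide $n$. The fix is to drop the assignment and double-count \emph{all} incidence pairs $(i,j)$ with $x^*_j\in\mc{X}_i(x)$: the key observation is that your lower bound $u_i(x^*_j,x_{-i})\geq G(x,x^*_1,\ldots,x^*_j)-G(x,x^*_1,\ldots,x^*_{j-1})$ does not depend on \emph{which} agent $i$ perceives $x^*_j$, so each marginal term is collected at least $f$ times on one side, while each agent trivially occurs in at most $n$ pairs on the other. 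This gives $f\bigl(G(x^*)-G(x)\bigr)\leq \sum_{(i,j)}u_i(x)\leq n\sum_i u_i(x)\leq nG(x)$, hence the stated constant exactly. A smaller point worth a sentence in a full write-up: allocations in this paper are exclusive, so one must check that deviating to $x^*_j$ is admissible when that resource is currently occupied (or handle that case separately); this is taken care of in the abstract resource-allocation model of \cite{marden2016role} but does not follow automatically from (\ref{eq:ne}) as written here.
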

%\medskip
%\vspace{-0.2cm}
Thus, we know that the approximation ratio of a distributed allocation algorithm can be 2 in the case of full information ($f=n$), because $G(x)\geq \frac{1}{2}G(x^*)$  in this case (from Equation \ref{eq:poa}). Also, it is impossible to guarantee an approximation ratio better than $n$ %without communication among agents ($f=1$)
in case of $f=1$ (from Equation \ref{eq:poaeq}). We will use this result in the later section to analyze the efficiency of coverage control algorithms.

%%%%%%%%%%%%%%%%%%%%%%%%%%%%%%%%%%%%%%%%%%%%%%%%%%%%%%%%%%%%%%%%%%%%%%%%%%%%%%%%%%%%%%%%%
\subsection{Voronoi-Based Control}
\label{sec:distcontrol}
%%%%%%%%%%%%%%%%%%%%%%%%%%%%%%%%%%%%%%%%%%%%%%%%%%%%%%%%%%%%%%%%%%%%%%%%%%%%%%%%%%%%%%%%%

%Coverage control theory studies distributed control algorithms to solve the coverage problem. 
%In some setting, this includes navigating agents without colliding with other agents. 
\camera{Most} standard approaches for coverage control \camera{are} based on Voronoi partitioning \cite{gao2008notes}. 
%One of the most standard approaches to the coverage problem is a distributed control based on Voronoi partitioning \cite{gao2008notes}.
%Here, we show an equivalent formulation and its issues. 
A Voronoi partition divides the space into local regions for each agent. Formally, for a given allocation $x$, a Voronoi partition $\mc{V}_i(x;\mc{C})$ of space $\mc{C}$ is defined as follows:
\begin{equation}
\left.
\begin{array}{l}
%\mc{V}_i(x;\mc{C})=\{c\in\mc{C}|i=\underset{\substack{j\in\mc{N}}}{\argmax}v_c \times g(|x_j-c|)\}
%\mc{V}_i(x;\mc{C})=\{c\in\mc{C}|i=\underset{\substack{j\in\mc{N}}}{\argmax}v_c \times g(|x_j-c|)\}
\mc{V}_i(x;\mc{C})=\{c\in\mc{C}|i=\underset{\substack{j\in\mc{N}}}{\argmin}|x_j-c|\}.

\end{array}
\right.
\label{eq:vor}
\end{equation}
%[TO CHECK]
When ties occur (i.e., when several agents are at the same minimal distance from some location), agents are prioritized lexicographically. Let $\mc{P}=(\mc{P}_1,\ldots,\mc{P}_N)$ denote a partition of the space. In the rest of the paper, we define the utility function of agent $i$ depending on partition $\mc{P}$ as follows:
\begin{comment}
in a continuous setting 
%The control is distributed, in a way that self-interested agents move to maximize their own utility. 
\begin{equation}
\left.
\begin{array}{l}
u_i(x)=\underset{\substack{c\in\mc{X}_i(x)}}{\int}v_c ~ g(|x_i-c|)dc,
\end{array}
\right.
\label{eq:ucontcov}
\end{equation}
while the utility in a discrete setting is defined as:
\end{comment}
\begin{equation}
\left.
\begin{array}{l}
u_i(x,\mc{P})=\underset{\substack{c\in\mc{P}_i}}{\sum}v_c ~ g(|x_i-c|).
\end{array}
\right.
\label{eq:udisccov}
\end{equation}

In this subsection, we assume that the locations an agent can perceive are limited to those within their Voronoi region, that is $\mc{P}_i=\mc{X}_i(x)=\mc{V}_i(x;C)$. Therefore, the utility function (Equation \ref{eq:udisccov}) satisfies the assumptions of Equation \ref{eq:wlu} and Equation \ref{eq:bb}. 
We also denote the neighborhood of $i$ as the agents connected in the dual Delaunay graph of the Voronoi partition of $i$, i.e., the neighborhood of $\mc{N}_i\subseteq \mc{N}$ are the agents $j \in \mc{N}$ whose  Voronoi region is connected to the Voronoi region of $i$. %are connected to $i$ when their Voronoi partition are connex. 
Voronoi partition plays an important role in distributed coverage control because 
%if we assume $\mc{P}_i=\mc{X}_i(x,C)$, 
%a gradient method can be computed locally. 
agents can compute the best responses improving the objective function locally, with limited communication. 
%With a proper $g$, such as $g(d)=\exp(-d)$ for example, we can regard $u_i$ as a gradient of $G(x)$ and the allocation $x$ converges to a local optimum of $G(x)$ when agents follow the gradient ascent \cite{gao2008notes}. 
%In this paper, we shall use two variants as baselines: one where an agent simply considers its best response in the sense of Eq. \ref{eq:udisccov} (VVP, Vanilla Voronoi Partition), and a neighbors-aware one where the best response is evaluated with respect to the average utility gain within a neighborhood (CVP, Cooperative Voronoi Partition). 
The process is just a sequence of best response updates (in the sense defined above) of the different agents to compute their next locations in their local Voronoi region. Once agents are assigned these new locations, Voronoi regions are updated and the process iterates, until convergence. \aamas{However, in the non-convex discrete} setting the move of an agent within its partition can affect not only neighbors, but \aamas{the whole set of agents in the worst case \cite{yun_rus_2014}. We will see an example later in Figure \ref{fig:externality}.}  
In theory, the guarantee of convergence requires avoiding moves that could impact beyond an agent's neighbors \cite{yun_rus_2014, sadeghi2020approximation}.

Observe that the Voronoi partition induces a redundancy index of $f=1$ because all the points in Voronoi regions $\mc{X}_i$ are available only for agent $i$. Then, the efficiency $G(x)$ can be $1/n$ of the optimum value in the worst case. 

\begin{figure}
\begin{minipage}{0.25\textwidth}
\centering
\tikzstyle{background grid}=[draw, black!20,step=.5cm]
%\tikzstyle{robot}=[circle,draw, fill=black!20, line width=1pt]
\begin{tikzpicture}[show background grid]
\draw[black!20] (0,0) grid (4,2.5);
            % Put the graphic inside a node. This makes it easy to place the
            % graphic and to draw on top of it. 
            % The above right option is used to place the lower left corner
            % of the image at the (0,0) coordinate. 
            % show origin
            %\fill (0,0) circle (5pt);
            \draw (0,0) node {$\robotviolet{$a$}$}; 
            \draw (0.5,0) node {$\robotorange{$b$}$};
            \draw (1,0) node {$\circledorange{1}$};
            \draw (1.5,0) node {$\circledblue{1}$};
            \draw (2,0) node {$\robotblue{$c$}$};
            \draw (2.5,0) node {$\circledblue{1}$};
            \draw (3,0) node {$\circledblue{2}$};
            \draw (3.5,0) node {$\circledfilled{2}$};
            \draw (4,0) node {$\circledbrown{1}$};
            \draw (2,0.5) node {$\circledblue{1}$};
            \draw (2.5,0.5) node {$\circledblue{2}$};
            \draw (3,0.5) node {$\circledfilled{2}$};
            \draw (3.5,0.5) node {$\circledfilled{1}$};
            \draw (4,0.5) node {$\robotbrown{$f$}$};
            \draw (2,1) node {$\circledred{1}$};
            \draw (2.5,1) node {$\circledred{2}$};
            \draw (3,1) node {$\circledfilled{1}$};
            \draw (3.5,1) node {$\robotgreen{$e$}$};
            \draw (4,1) node {$\circledfilled{1}$};
            \draw (2,1.5) node {$\robotred{$d$}$};
            \draw (2.5,1.5) node {$\circledred{1}$};
            %\draw (3,1.5) node {$\circled{1}$};
            \draw (3.5,1.5) node {$\circledfilled{1}$};
            \draw (4,1.5) node {$\circledfilled{2}$};
            \draw (2,2) node {$\circledred{1}$};
            \draw (2.5,2) node {$\circledred{2}$};
            \draw (3,2) node {$\circledred{3}$};
            \draw (3.5,2) node {$\circledfilled{2}$};
            \draw (4,2) node {$\circledfilled{3}$};
            \draw (2,2.5) node {$\circledred{2}$};
            \draw (2.5,2.5) node {$\circledred{3}$};
            \draw (3,2.5) node {$\circledred{4}$};
            \draw (3.5,2.5) node {$\circledfilled{3}$};
            \draw (4,2.5) node {$\circledfilled{4}$};
            %%%%%%%%%%%%%%%%%%%%%%%%%%%%%%%%%%%%%%%%%%%%%
            \foreach \x in {0.5,1,1.5,2,2.5}
            \draw (0,\x) node {$\zoneviolet$};
            \foreach \x in {0.5,1,1.5,2,2.5}
            \draw (0.5,\x) node {$\zoneorange$};
            \foreach \x in {0.5,1}
            \draw (1,\x) node {$\zoneorange$};
            \draw (1.5,0.5) node {$\zoneblue$};
            \foreach \x in {1,1.5,2,2.5}
            \draw (1.5,\x) node {$\zonered$};
            \foreach \x in {1.5,2,2.5}
            \draw (1,\x) node {$\zonered$};
            \draw (3,1.5) node {$\zonered$};
\end{tikzpicture}
\end{minipage}\hfill
    \begin{minipage}{0.2\textwidth}
%\tikzstyle{background grid}=[ black!20,step=.5cm]
%\tikzstyle{robot}=[circle,draw, fill=black!20, line width=1pt]
\begin{tikzpicture}[scale=0.9]
\draw[black!00] (0,0) grid (3.5,2.5);
            % Put the graphic inside a node. This makes it easy to place the
            % graphic and to draw on top of it. 
            % The above right option is used to place the lower left corner
            % of the image at the (0,0) coordinate. 
            % show origin
            %\fill (0,0) circle (5pt);
            \draw (0,0) node (roba) {$\robotviolet{$a$}$};
            \draw (1,0) node (robb) {$\robotorange{$b$}$};
            \draw (2,0) node (robc) {$\robotblue{$c$}$};
            \draw (2,1.5) node (robd) {$\robotred{$d$}$};
            \draw (3,1) node (robe) {$\robotgreen{$e$}$};
            \draw (3.5,0) node (robf) {$\robotbrown{$f$}$};
            \draw (roba) -- (robb); %TO BE CHECKED
            \draw (robb) -- (robc);
            \draw (robc) -- (robd);
            \draw (robe) -- (robd);
            %\draw (robc) -- (robf); TO BE CHECKED
            \draw (robc) -- (robe);
            \draw (robb) -- (robd);
            \draw (robe) -- (robf);
\end{tikzpicture}

    \end{minipage}
    \caption{The environment of Example \ref{ex:cov} with Voronoi partitions and the corresponding neighborhood graph}
%\label{fig:voronoi}
% \caption{The neighborhood graph}
\label{fig:neighbors}
\end{figure}
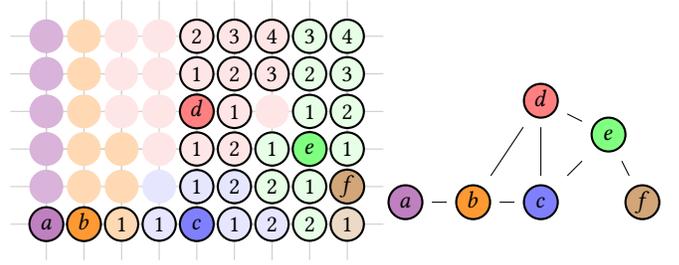

%\vspace{-0.2cm}
%\medskip
%\smallskip
\begin{example} (Ex. 1, cont.). 
Figure~\ref{fig:neighbors} indicates the Voronoi region of each agent by coloring the locations in the same color. Recall that ties are broken lexicographically. Figure~\ref{fig:neighbors} gives the corresponding neighboring graph: agent $e$ has 3 neighbors $\{c,d,f\}$.  
The utilities of the agents are as follows: $u_a= 1$, $u_b = 1.5$, 
$u_c \simeq 3.2$ %\frac{19}{6}$, 
$u_d = 4.2$ %\frac{42}{10}$, 
$u_e \simeq 5 $ % $\frac{47}{10}$, 
and $u_f= 1.5$ %\frac{3}{2}$

\end{example}
\medskip

\camera{As mentioned in the introduction, the communication requirements of these approaches is very low, as agents never need to exchange beyond  their neighbors in the Delaunay graph (note that this can also be achieved with a range-limited gossip protocol, as long as an appropriate motion protocol is designed \cite{durham2011discrete}). 
In the next section, we introduce our protocol which assumes agents can always communicate with their neighbors, and requires them in addition to build a communication tree to spread a limited amount of information to facilitate coordination. Going beyond this communication model is left for future work.}

%%%%%%%%%%%%%%%%%%%%%%%%%%%%%%%%%%%%%%%%%%%%%%%%%%%%%%%%%%%%%%%%%%%%%%%%%
\section{A Neighborhood Optimal Algorithm}
%Algorithm and theoretical analysis}
\label{sec:algo}
%%%%%%%%%%%%%%%%%%%%%%%%%%%%%%%%%%%%%%%%%%%%%%%%%%%%%%%%%%%%%%%%%%%%%%%%%

To address the inefficiency issue of coverage control through inter-agent communication, we extend the solution for 1-dimensional space \cite{marden2016role} to general settings by making agents share additional information. 
%We first present our main result on convergence and approximation ratio and then discuss the details of the algorithm.
%and parallelization.

%Observe that these computations need to broadcast the global information of $u_{\min}(x,\mc{P})$, $x_{i_{\min}}(x,\mc{P})$ and $i_{\max}^+(x,\mc{P})$. %This requires message exchanges among neighbors and can be done with existing approaches such as the consensus algorithm \cite{cortes2008distributed}. 
%For this purpose, we propose an algorithm based on message exchanges among neighbors. 
%\camera{
%In what follows an agent $i$'s neighborhood $\mc{N}_i$ is defined as the set of agents whose partitions are adjacent to $\mc{P}_i$ as follows:
%\begin{equation}
%\mc{N}_i=\{j\in\mc{N}\setminus\{i\}|\adj(\mc{P}_i,\mc{P}_j) \},
%\label{eq:Ni}
%\end{equation}
%where $\adj(\mc{P}_i,\mc{P}_j)=(\exists (c_1,c_2)\in\mc{E}|c_1\in\mc{P}_i,c_2\in\mc{P}_j)$ returns true if $\mc{P}_i$ and $\mc{P}_j$ are adjacent.
%}

\subsection{Key Notions}
\label{sec:mardenext}

%Let $\mc{P}=\{\mc{P}_i\subseteq C|i\in\mc{N},x_i\in\mc{P}_i\}$ be a partition of $C$. 

The idea of the algorithm is to incrementally compute an allocation $x$ based on a partition $\mc{P}$, which is not necessarily a Voronoi partition. Note that the neighborhood $\mc{N}_i(\mc{P})$ is also defined over $\mc{P}$, as the set of agents that have partitions adjacent to $\mc{P}_i$.
%, where $C=\underset{\substack{i\in\mc{N}}}{\cup}\mc{P}_i$ and $\mc{P}_i\cap\mc{P}_j=\emptyset$ if $i\not=j$. 
The algorithm is anytime and updates a solution $(x,\mc{P})$ for each iteration. 
%The goal of the algorithm is to address the blocking issue due to the locality ($f=1$), as in Figure \ref{fig:blocking}.

%As in the case of Voronoi-based control, agents' utilities $u_i(x,\mc{P})$ are localized and depend on the current partitioning of the space, i.e., we have $\mc{X}_i(x)=\mc{P}_i(x;C)$. 

%One approach to managing the multiple-tree case is introducing a tie-breaking rule (for example, by referring to agent ID) to determine a single $i_{\min}$. 

 %=\underset{\substack{i\in\mc{N}}}{\argmin}u_i(x,\mc{P})$. 

We follow  \cite{marden2016role} and define the maximum gain in $G$ when adding $k$ new agents into space $\mc{C}$, as:

\begin{equation}
\left.
\begin{array}{l}
M_k(x,\mc{C})=\underset{\substack{y_1,\ldots,y_k\in\mc{C}}}{\max}G(y_1,\ldots,y_k,x;\mc{C})-G(x;\mc{C}),
\end{array}
\right.
\label{eq:Mk}
\end{equation}
where this best allocation of the $k$ new agents is denoted as follows:
\begin{equation}
\left.
\begin{array}{l}
B_k(x,\mc{C})=\underset{\substack{y_1,\ldots,y_k\in\mc{C}}}{\argmax}G(y_1,\ldots,y_k,x;\mc{C})-G(x;\mc{C}).
\end{array}
\right.
\label{eq:Bk}
\end{equation}

Note that determining the locations $y_1, \dots y_k$ maximizing Equation~\ref{eq:Bk} is actually an NP-hard problem since it consists in general in  solving a multiagent coverage problem. 

\camera{We denote $\mc{P}_{ij}= \mc{P}_i \cup \mc{P}_j$. 
Informally, we say that an allocation is \emph{neighborhood optimal} when no coalition of agents within the same neighborhood could reallocate themselves in a way which would improve the global objective, and would not benefit enough from accommodating a third agent in their combined regions. 
We will make these notions precise below.  
%\begin{equation}
%\underset{\substack{j\in\mc{N}_i}}{\sum}u_j(x,\mc{P})=M_{2}(\emptyset,\mc{P}_{ij}), ~\forall i\in\mc{N}
%\end{equation}
}
\camera{
But before going into the details, we state our main result:
\begin{theorem}[Convergence with Performance guarantee]
%Assuming that agents have access to $u_{\min}$ and $i_{\max}^+$ given any configuration, 
Under our communication model, Algorithm \ref{alg:extmarden} terminates in a neighborhood optimum allocation. Its approximation ratio is 2.
\label{thm:main}
\end{theorem}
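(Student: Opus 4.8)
The plan is to split the statement into its two halves — termination (convergence to a neighborhood optimum) and the approximation ratio — and to handle them by adapting the 1D argument of \cite{marden2016role} to the general partition-based setting.

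For \emph{termination}, the natural candidate is a potential-function argument. I would show that the pair $(x,\mc{P})$ maintained by Algorithm~\ref{alg:extmarden} induces a monotone sequence of values of the global objective $G$ (or, more precisely, a monotone sequence in some lexicographically refined quantity built from $G$ together with a tie-breaking term such as the vector of sorted utilities). The key point is that, unlike plain Voronoi best-response, every update performed by the algorithm is a coordinated reallocation \emph{within a single neighborhood} on the combined regions $\mc{P}_{ij}$ (possibly involving a third agent), and by construction such a move is only taken when it strictly increases $G$ restricted to the affected region while leaving the complement untouched — so $G$ strictly increases globally. Since the state space $\mc{X}\times(\text{partitions})$ is finite and $G$ is bounded, strict monotonicity forces termination. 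I would then argue that the algorithm halts exactly when no neighborhood coalition can make such an improving move, i.e. precisely at a neighborhood optimum; this requires checking that the stopping condition of the pseudocode is the negation of the ``improving coordinated move exists'' predicate, using the $M_k$, $B_k$ notation of Equations~\ref{eq:Mk}--\ref{eq:Bk} to phrase ``would not benefit enough from accommodating a third agent.''

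For the \emph{approximation ratio of $2$}, I would invoke Theorem~\ref{thm:poa}: it suffices to show that a neighborhood-optimal allocation is a Nash equilibrium with respect to an information structure whose redundancy index satisfies $f \ge n$, equivalently $f/(n+f)\ge 1/2$, so that $G(x)\ge \tfrac12 G(x^*)$. Concretely, I would define for a neighborhood-optimal state the effective information set $\mc{X}_i(x)$ to be the union of the regions $\mc{P}_j$ over $j$ in $i$'s neighborhood (together with $i$ itself) — the set of locations $i$ could reach via a coordinated neighborhood move — and verify the two structural hypotheses (\ref{eq:wlu}) and (\ref{eq:bb}) for the induced utilities, as well as monotone submodularity of $G$ (both already asserted to hold for coverage objectives in the excerpt). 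Neighborhood optimality then says no agent can profitably deviate within this enlarged set, i.e. it is a Nash equilibrium; and because every location lies in at least one agent's region and that region's owner plus \emph{all its neighbors} perceive it, combined with the extra ``third agent'' condition one gets enough redundancy to push the bound to $1/2$. The matching tightness (ratio exactly $2$, not better) follows from the worst-case instance in Equation~\ref{eq:poaeq} adapted to the coverage setting.

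The main obstacle I anticipate is the approximation-ratio half, specifically the bookkeeping that turns ``neighborhood optimality'' into an exact statement of the form $f\ge n$ (or directly $G(x)\ge\frac12 G(x^*)$) in the general-dimensional, non-convex discrete case. In 1D the neighborhood structure is a path, so each location is ``claimed'' by at most two adjacent agents and the coalition analysis is essentially pairwise; in higher dimensions the Delaunay neighborhood of an agent can be large and irregular, and one must carefully argue that the coordinated moves the algorithm permits (and the third-agent accommodation clause) still certify that \emph{every} resource is adequately covered by the equilibrium condition — in particular that an agent holding a valuable cluster of resources that a distant agent ``wants'' is always linked through a chain of improving neighborhood moves, which is exactly the non-local externality phenomenon (Figure~\ref{fig:externality}) that breaks naive Voronoi best-response. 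Getting the definition of the information set and the neighborhood-optimum predicate to line up so that this externality is neutralized — without inflating the communication model beyond neighbor-to-neighbor plus the shared minimum-utility scalar — is where the real work lies; the termination half, by contrast, should be a fairly standard finite-strict-improvement argument once the potential is identified.
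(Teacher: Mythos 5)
Your decomposition into termination and approximation ratio matches the paper, but both halves contain genuine gaps. For \emph{termination}, your candidate potential ($G$, or a lexicographic refinement of it) fails on Step~b: when a pair $(i,j)$ reallocates to $B_3(\emptyset,\mc{P}_{ij})$ minus one position in order to make room for a third agent, the sum of their utilities drops from $M_2(\emptyset,\mc{P}_{ij})$ to $M_3(\emptyset,\mc{P}_{ij})-M_1(x'_{i_+},\mc{P}'_{i_+})$, which can be strictly smaller --- the move is deliberately sacrificial, so neither $G$ nor $\sum_i u_i$ is monotone along the algorithm's trajectory. The paper's potential is $\phi(x,\mc{P})=\sum_{i}u_i(x,\mc{P})+[V(x,\mc{P})-u_{\min}(x,\mc{P})]_+$ (Equation~\ref{eq:pote}): the extra term credits the vacated space via the increase in $V$, and the monotonicity of $\phi$ is then established case by case through the state decomposition $Z_1\to Z_2\to Z_3\to Z_4$ of Equations~\ref{eq:Z1}--\ref{eq:Z4}. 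Without some such compensating term your finite-strict-improvement argument does not go through.

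For the \emph{approximation ratio}, routing through Theorem~\ref{thm:poa} by certifying $f\ge n$ cannot work: taking $\mc{X}_i(x)$ to be the union of $i$'s own and its neighbors' regions makes each resource visible to at most (neighborhood size) agents, so $f$ is bounded by the minimum degree plus one, not by $n$, and $\frac{f}{n+f}$ degrades toward $0$ as $n$ grows. Indeed the paper's point is precisely that $f=n$ is unattainable with local information, yet the ratio of 2 is recovered by a different mechanism: the globally shared scalar $u_{\min}$. The actual argument is direct. At convergence ($Z_4$) one shows $M_1(x;\mc{P})\le M_3(\emptyset,\mc{P}_{ij})-M_2(\emptyset,\mc{P}_{ij})\le u_{\min}(x,\mc{P})$ (the paper's Lemma on $u_{\min}$), then telescopes over the optimal positions using submodularity of $G$:
\begin{equation*}
G(x^*)\le G(x,x^*)\le G(x)+n\,M_1(x,\mc{P})\le G(x)+n\,u_{\min}(x,\mc{P})\le G(x)+\sum_{i}u_i(x,\mc{P})\le 2G(x).
\end{equation*}
You correctly sensed that the approximation half is where the difficulty lies, but the fix is not tighter bookkeeping on $f$; it is abandoning the PoA reduction in favor of this submodular telescoping anchored on the $Z_3$/$Z_4$ stopping conditions.
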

Note that this approximation ratio of 2 is equal to the lower bound predicted by Equation~\ref{eq:poa} 
%theoretically best reachable ratio
in case of full information ($f=n$). 
In our case it is achieved by only requiring agents to communicate beyond their neighborhood (i) the utility and location of the worst-off agent, and (ii) the identity of the agent which would contribute the most to the global objective by adding an agent in her region. 
In case of ties a deterministic choice mechanism is used, for instance based on agents' id.  
}
 %$M_k$ to measure how much free space we have. 
 %Formally, we say space $\mc{C}$ with current allocation $x$ has a free space if $M_1(x,\mc{C})>u_{\min}$, which means we can improve $G$ by moving $i_{\min}$ to the space $\mc{C}$. 
 %Note that this definition of $M_k$ can be applied to any weight distributions $v_c\in\mathbb{R}_+$. 
 \camera{
 %We now explain the intuition of the algorithm. 
 %We denote  by $i_{\max}^+$ %with the following measure of maximum free space in its region:
%the agent which would contribute to $G(x)$ the most from having an additional agent in her region}: 
Hence, the following notations will be useful: }

\begin{equation*}
\left.
\begin{array}{ll}
i_{\max}^+(x,\mc{P})=\underset{\substack{i\in\mc{N}}}{\argmax}~\!\!M_1(x_i;\mc{P}_i)
&
V(x,\mc{P})=\underset{\substack{i\in\mc{N}}}{\max}~M_1(x_i;\mc{P}_i) \\
i_{\min}(x,\mc{P})=\underset{\substack{i\in\mc{N}}}{\argmin}~\!\!u_i(x;\mc{P})
&
 u_{\min}(x,\mc{P})=\underset{\substack{i\in\mc{N}}}{\min}~u_i(x;\mc{P})
 \end{array}
 \right.
%\label{eq:umin}
%\end{array}
%\right.
%\label{eq:imax}
\end{equation*}

\begin{comment}
\begin{equation}
\left.
\begin{array}{l}
V(x,\mc{P})=\underset{\substack{i\in\mc{N}}}{\max}~M_1(x_i;\mc{P}_i),
\end{array}
\right.
\label{eq:V}
\end{equation}
\end{comment}

\begin{comment}
\begin{equation*}
\left.
\begin{array}{ll}

i_{\min}(x,\mc{P})=\underset{\substack{i\in\mc{N}}}{\argmin}~\!\!u_i(x;\mc{P})
&
 u_{\min}(x,\mc{P})=\underset{\substack{i\in\mc{N}}}{\min}~u_i(x;\mc{P})
 \end{array}
 \right.
\label{eq:umin}
\end{equation*}
\end{comment}

\medskip
\begin{example}(Ex. 1, cont.). 
We have $i_{\min} = a$, since $u_a = 1$. 
Furthermore, \camera{the agents which would contribute to $G(x)$ the most from adding a single agent within their regions are agents $e$ and $d$. For instance,}  adding an agent (depicted as '+') would induce $G$ of $6.5$ in her region, thus $M_1(x_e,\mc{P}_e) \simeq 6.5 - 5 = 1.5$  
(See Figure~\ref{fig:addone} for the illustration). \camera{We assume $i^+_{\max} = d$ by lexicographic tie-breaking.}
%This is thus a state $Z_1$ since $\frac{18}{10} > 1$. 

\end{example}
\medskip

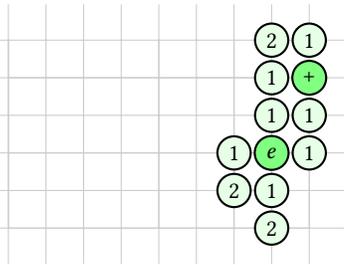
\begin{figure}[h]
\centering
\tikzstyle{background grid}=[draw, black!20,step=.5cm]
%\tikzstyle{robot}=[circle,draw, fill=black!20, line width=1pt]
\begin{tikzpicture}[show background grid]
\draw[black!20] (0,0) grid (4,2.5);
            % Put the graphic inside a node. This makes it easy to place the
            % graphic and to draw on top of it. 
            % The above right option is used to place the lower left corner
            % of the image at the (0,0) coordinate. 
            % show origin
            %\draw (0,0) node {$\robotviolet{$a$}$}; 
            %\draw (0.5,0) node {$\robotorange{$b$}$};
            %\draw (1,0) node {$\circledorange{1}$};
            %\draw (1.5,0) node {$\circledblue{1}$};
            %\draw (2,0) node {$\robotblue{$c$}$};
            %\draw (2.5,0) node {$\circledblue{1}$};
            %\draw (3,0) node {$\circledblue{2}$};
            \draw (3.5,0) node {$\circledfilled{2}$};
            %\draw (4,0) node {$\circledbrown{1}$};
            %\draw (2,0.5) node {$\circledblue{1}$};
            %\draw (2.5,0.5) node {$\circledblue{2}$};
            \draw (3,0.5) node {$\circledfilled{2}$};
            \draw (3.5,0.5) node {$\circledfilled{1}$};
            %\draw (4,0.5) node {$\robotbrown{$f$}$};
            %\draw (2,1) node {$\circledred{1}$};
            %\draw (2.5,1) node {$\circledred{2}$};
            \draw (3,1) node {$\circledfilled{1}$};
            \draw (3.5,1) node {$\robotgreen{$e$}$};
            \draw (4,1) node {$\circledfilled{1}$};
            %\draw (2,1.5) node {$\robotred{$d$}$};
            %\draw (2.5,1.5) node {$\circledred{1}$};
            \draw (3.5,1.5) node {$\circledfilled{1}$};
            \draw (4,1.5) node {$\circledfilled{1}$};
            %\draw (2,2) node {$\circledred{1}$};
            %\draw (2.5,2) node {$\circledred{2}$};
            %\draw (3,2) node {$\circledred{3}$};
            \draw (3.5,2) node {$\circledfilled{1}$};
            \draw (4,2) node {$\robotgreen{+}$};
            %\draw (2,2.5) node {$\circledred{2}$};
            %\draw (2.5,2.5) node {$\circledred{3}$};
            %\draw (3,2.5) node {$\circledred{4}$};
            \draw (3.5,2.5) node {$\circledfilled{2}$};
            \draw (4,2.5) node {$\circledfilled{1}$};
            %%%%%%%%%%%%%%%%%%%%%%%%%%%%%%%%%%%%%%%%%%%%%
            %\foreach \x in {0.5,1,1.5,2,2.5}
            %\draw (0,\x) node {$\zoneviolet$};
            %\foreach \x in {0.5,1,1.5,2,2.5}
            %\draw (0.5,\x) node {$\zoneorange$};
            %\foreach \x in {0.5,1}
            %\draw (1,\x) node {$\zoneorange$};
            %\draw (1.5,0.5) node {$\zoneblue$};
            %\foreach \x in {1,1.5,2,2.5}
            %\draw (1.5,\x) node {$\zonered$};
            %\foreach \x in {1.5,2,2.5}
            %\draw (1,\x) node {$\zonered$};
            %\draw (3,1.5) node {$\zonered$};
\end{tikzpicture}
\caption{Agent $e$ would contribute to $G(x)$ the most from the addition of an agent (depicted as `+') in her region}
\label{fig:addone}
\end{figure}

%\aamasoct{ Now we explain how we construct Algorithm \ref{alg:extmarden} and reach this conclusion.} 
%\paragraph{}
%The minimum utility enjoyed by any agent of the system is   $u_{\min}(x,\mc{P})=\underset{\substack{i\in\mc{N}}}{\min}~u_i(x,\mc{P})$, and the agent with this minimum utility is denoted as  $i_{\min}(x,\mc{P})$. 
%\aamasoct{When there are multiple candidates for $i_{\min}$ it can be resolved using a tie-breaking rule. We first discuss the case of single $i_{\min}$. The case of multiple candidates of $i_{\min}$ is discussed later.}

\subsection{High Level Description of the Algorithm}

\camera{
The principle of our algorithm follows \cite{marden2016role}, but is adapted to the 2D setting. Essentially, the idea is for neighboring agents to compare what they would gain by optimizing over their combined regions, or by optimizing over their combined regions with a third agent (which is necessarily at least as good). When the difference between these two situations is significant enough (larger than the utility of the worst-off agent), the decision is to make room for a third agent. This way, the space left open can eventually be filled.   
}

\camera{
As mentioned before, agents will communicate a limited amount of information.  
Spreading the information of $i_{\min}$ and $i^+_{\max}$ within the system can be achieved by standard distributed algorithms, eg. flooding and gossip protocols \cite{wattenhoffer2016}. At the end of this process, a communication spanning tree is built. 
Figure \ref{fig:commtree} shows a communication tree obtained for Example \ref{ex:cov}. In our algorithm, interaction takes place in the neighborhood defined by such communication trees, which are updated when required. In the following we denote by \emph{parent(i,$\mc{P}$)} the parent of agent $i$, by \emph{neigh}$(S,\mc{P})$ the neighbors of a set of agents $S$, and by $\neigh(\mc{P})$ the set of all pairs of neighboring agents. All these notions are understood as  restricted to the current communication tree (in its undirected version for the notion of neighborhood). }

\begin{comment}
The communication tree plays an important role not only in sharing global information but also in algorithm scalability. As we mentioned, the computation of Equation~\ref{eq:Bk} is NP-hard and is not scalable in the size of the neighborhood $n_i$. To address this issue, we define a fixed size neighborhood, $\mc{N}^2_i=\{i,D_i\}$, where each agent $i$ has only its parent $D_i$ as a single neighbor and therefore $n_i=2$. We also update the definition of adjacency function to maintain consistency with 
$\mc{N}^2_i$ as $\adj(\mc{P}_i,\mc{P}_j)=(\exists (c_1,c_2)\in\mc{E}|c_1\in\mc{P}_i,c_2\in\mc{P}_j,j\in\mc{N}^2_i)$. The distance function $|\cdot|$ also needs to be redefined only on the edges in the communication tree, $\mc{E}\setminus\{(c_1,c_2)\in\mc{E}|c_1\in\mc{P}_i,c_2\in\mc{P}_j,\neg \adj(\mc{P}_i,\mc{P}_j)\}$.
\end{comment}

As in \cite{marden2016role}, it is useful to classify the solution $(x,\mc{P})$ into 4 states. First, the solution space is divided into the following two states $Z_1$ and $Z_2$, by checking if $i_{\max}^+$ would gain more from adding an agent in her region: 
\begin{align}
Z_1=&\{(x,\mc{P})|V(x,\mc{P})>u_{\min}(x,\mc{P})\},
\label{eq:Z1} \\
Z_2=&\{(x,\mc{P})|V(x,\mc{P})\leq u_{\min}(x,\mc{P})\}. \label{eq:Z2}
\end{align}

In $Z_2$, no single agent would contribute enough from accommodating $i_{\min}$. 
Next the solutions in $Z_2$ are further classified depending on whether integrating a further agent in the neighborhood of two agents could induce a significant enough marginal gain. %a space larger than $\mc{P}_{i_{\min}}$ together or not. 
In the following state $Z_3$, it is not the case:
\begin{align}
Z_3=&\{(x,\mc{P})\subseteq Z_2| \nonumber \\ 
&M_{3}(\emptyset,\mc{P}_{ij})-M_{2}(\emptyset,\mc{P}_{ij})\leq u_{\min}(x,\mc{P}), \label{eq:Z3} 
~\forall (i,j) \in \neigh(\mc{P}) \} %\nonumber
\end{align}
%where $\mc{N}_i$ is the neighbors of $i$ and $n_i=|\mc{N}_i|$ and $\mc{P}_{n_i}=\cup_{j\in\mc{N}_i}\mc{P}_j$. 
%Intuitively, $M_{n_i}(\emptyset,\mc{P}_{n_i})$ is the efficiency of a neighborhood optimum of $\mc{N}_i$. 
Equation~\ref{eq:Z3} means that the gain in the neighborhood optimum by accommodating $i_{\min}$ in $\mc{P}_{ij}$ cannot be larger than $u_{\min}$. % because $\mc{P}_{n_i}$ does not have enough space. 
%While $n_i$ is fixed to 2 in 1-dimensional settings \cite{marden2016role}, the size of the neighborhood can change in the general case. 

Lastly, a solution is classified as state $Z_4$ if  neighborhood optimality \cite{weiss2001optimality} is achieved for all agents. $Z_4$ is the terminal state that is reached  from a solution in $Z_3$ if it satisfies the following condition:
%\begin{align}
%Z_4=&\{(x,\mc{P})\subseteq Z_3| \nonumber \\
%&\underset{\substack{j\in\mc{N}_i}}{\sum}u_j(x,\mc{P})=M_{2}(\emptyset,\mc{P}_{n_i}), ~\forall (i,j) \in\neigh\}. \label{eq:Z4}
%\end{align}

\begin{align}
Z_4=&\{(x,\mc{P})\subseteq Z_3| \nonumber \\
& u_i(x,\mc{P}) + u_j(x,\mc{P})=M_{2}(\emptyset,\mc{P}_{ij}), ~\forall (i,j) \in\neigh(\mc{P})\} \label{eq:Z4}
\end{align}

\medskip
\begin{example}(Ex. 1, cont.)
The allocation $x$ depicted is in state $Z_1$ since $V(x,\mc{P})= 1.5 > 1 = u_a$, with $i_{\min}=a$.

\end{example}
\medskip

\begin{comment}
\begin{algorithm}[t]                 
  \caption{Communication tree}
  \label{alg:commtree}
  \begin{algorithmic}[1]
  \Procedure{$D_i=$CommTree}{$i$}
  \State{$u_{\lmin}=\hat{u}_i$, $j_{\lmin}=i$ \label{line:ct_init}}
  \For{$j \in \mc{\hat{N}}_i$ \label{line:ct_Ni}}
    \If{$\hat{u}_j < u_{\lmin}$}
      \State{$u_{\lmin}=\hat{u}_j$, $j_{\lmin}=j$ \label{line:ct_jmin}}
    \EndIf
  \EndFor
  \State{$f_{updated}=False$ \label{line:ct_flag}}
  \If{$j_{\lmin}=i$ \label{line:ct_self}}
    \If{$D_i<0$}
      \State{$D_i=i$, $f_{updated}=True$ \label{line:ct_imin}}
    \EndIf
  \Else \label{line:ct_else}
    \If{$D_i\not=j_{\lmin}$}
      \State{$D_i=j_{\lmin}$ \label{line:ct_other}}
      \State{$\hat{u}_i=u_{\lmin}$, $f_{updated}=True$ \label{line:ct_update}}
    \EndIf  
  \EndIf
  \If{$f_{updated}$}
    \For{$j \in \mc{\hat{N}}_i$ \label{line:ct_Ni}}
      \State{$D_j=\textproc{CommTree}(j)$ \label{line:ct_recur}}
    \EndFor
  \EndIf
  \State{Return $D_i$}
  \EndProcedure
  \end{algorithmic}
\end{algorithm}
\end{comment}

\begin{figure}
\begin{tabular}{ll}
\begin{minipage}{0.22\textwidth}
\centering
\tikzstyle{background grid}=[draw, black!20,step=.5cm]
%\tikzstyle{robot}=[circle,draw, fill=black!20, line width=1pt]
\begin{tikzpicture}[scale=0.9]
\draw[black!00] (0,0) grid (3.5,3);
            % Put the graphic inside a node. This makes it easy to place the
            % graphic and to draw on top of it. 
            % The above right option is used to place the lower left corner
            % of the image at the (0,0) coordinate. 
            % show origin
            %\fill (0,0) circle (5pt);
            \draw (0,0.5) node (roba) {$\robotviolet{$a$}$};
            \draw (0,0) node (ua) {$1$};
            \draw (1,0.5) node (robb) {$\robotorange{$b$}$};
            \draw (1,0) node (ub) {$1.5$};
            \draw (2,0.5) node (robc) {$\robotblue{$c$}$};
            \draw (2,0) node (uc) {$3.2$};
            \draw (2,2) node (robd) {$\robotred{$d$}$};
            \draw (2,2.5) node (ud) {$4.2$};
            \draw (3,1.5) node (robe) {$\robotgreen{$e$}$};
            \draw (3,2) node (ue) {$5$};
            \draw (3.5,0.5) node (robf) {$\robotbrown{$f$}$};
            \draw (3.5,0) node (uf) {$1.5$};
            \draw (roba) -- (robb); %TO BE CHECKED
            \draw (robb) -- (robc);
            \draw (robc) -- (robd);
            \draw (robe) -- (robd);
            %\draw (robc) -- (robf); TO BE CHECKED
            \draw (robc) -- (robe);
            \draw (robb) -- (robd);
            \draw (robe) -- (robf);
\end{tikzpicture}
\end{minipage}\hfill &
\begin{minipage}{0.22\textwidth}
\centering
\tikzstyle{background grid}=[draw, black!20,step=.5cm]
%\tikzstyle{robot}=[circle,draw, fill=black!20, line width=1pt]
\begin{tikzpicture}[scale=0.9]
\draw[black!00] (0,0) grid (3.5,3);
            \draw (0,0.5) node (roba) {$\robotviolet{$a$}$};
            %\draw (0,0) node (ua) {$1$};
            \draw (1,0.5) node (robb) {$\robotorange{$b$}$};
            %\draw (1,0) node (ub) {$1$};
            \draw (2,0.5) node (robc) {$\robotblue{$c$}$};
            %\draw (2,0) node (uc) {$1$};
            \draw (2,2) node (robd) {$\robotred{$d$}$};
            %\draw (2,2.5) node (ud) {$1$};
            \draw (3,1.5) node (robe) {$\robotgreen{$e$}$};
            %\draw (3,2) node (ue) {$1$};
            \draw (3.5,0.5) node (robf) {$\robotbrown{$f$}$};
            %\draw (3.5,0) node (uf) {$1$};
            \draw[->] (roba) -- (robb); %TO BE CHECKED
            \draw[->] (robb) -- (robc);
            \draw[->] (robc) -- (robd);
            \draw[->] (robc) -- (robe);
            \draw[->] (robe) -- (robf);
\end{tikzpicture}
\end{minipage}
\end{tabular}
    \caption{(Left): the neighborhood graph in Example \ref{ex:cov}. The numbers represent utilities. %(Top right): agent 'c' points to the agent 'b' that has minimum utility among neighborhood. It also updates its utility. (Bottom left): then agents 'b' and 'e' point to the minimum neighbors. 
    (Right): a communication tree in convergence. In this case, agent `a' is $i_{\min}$ and therefore the root of the tree. }
%\label{fig:voronoi}
% \caption{The neighborhood graph}
\label{fig:commtree}
\end{figure}
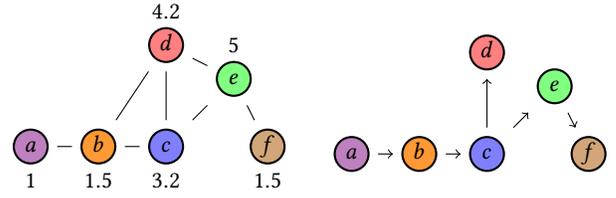

\begin{algorithm}[t]                 
	\caption{Neighborhood optimum algorithm}
	\label{alg:extmarden}
	\begin{algorithmic}[1]
	\Procedure{$x=$NeighborOpt}{$x,\mc{C}$}
		\State{$\mc{P}_i=\mc{V}_i(x;\mc{C}),~\forall i\in\mc{N}$ \label{line:emP0}}
		\While{$(x,\mc{P})\not\in Z_4$ \label{line:emiter}}
        \State{Build communication tree}
		\State{Communicate $u_{\min}(x,\mc{P})$, $x_{i_{\min}}(x,\mc{P})$ and $i_{\max}^+(x,\mc{P})$ \label{line:ems0} \label{line:emconsensus}}
		%\State{$(x,\mc{P})=\textproc{OneStep}(i,u_{\min},i_{\min},i_{\max},x,\mc{P})$}
		%\If{$(x,\mc{P})\in Z_1$ \label{line:focus}}
		%\While{$i_{\max}^+\not\in\mc{N}^2_i$}
    	%\State{Pick another agent %\label{line:empickup}}
    	%\EndWhile
    	\If{$(x,\mc{P})\in Z_1$} % and $i_{\max}^+\not\in\mc{N}^2_i$ \label{line:emfocus}}
        \State{$i \leftarrow i_{\max}^+$}
        \Else
    	\State{pick $i \in \mc{N}$ \label{line:empickup}}
        \EndIf
        %\State{Pick $i\in\mc{N}$ }
		\State{$j \leftarrow parent(i,\mc{P})$ \label{line:emNi}}
		%\EndIf
    	\If{$i_{\min}(x,\mc{P}) = j$ \textbf{or} \\ \hspace*{4em}$M_{3}(\emptyset,\mc{P}_{ij})-M_{2}(\emptyset,\mc{P}_{ij})\leq u_{\min}(x,\mc{P})$\label{line:emif}}
    	%\While{$x_j\in B_{n_i}(\emptyset,\mc{P}_{n_i}),~\forall j\in\mc{N}^2_i$}
    	%\State{Pick another agent}
    	%\EndWhile
        %REMOVES THE BREAK CONDITION WHEN OPTIMAL
    	%\If{$x_j\in\mc{B}_{n_i}(\emptyset,%\mc{P}_{n_i}),~\forall j\in\mc{N}^2_i$ %\label{line:emoptdone}}
    	%\State{break \label{line:emdoneskip}}
    	%\EndIf
        % END OF REMOVES CONDITION
    	\State{ $x_{\{i,j\}} \leftarrow B_{2}(\emptyset,\mc{P}_{ij})$  \hfill {\bf (Step a)}\label{line:emopt}}
    	%\State{\quad \memo{$x=\textproc{Updatex}(x,x',\mc{N}^2_i,None)$} \label{line:emupdate}}
    	\State{$\mc{P}_k \leftarrow \mc{V}_k(x',\mc{P}_{ij}),~\forall k\in\{i,j\}$}
    	\Else
    	\State{Consider a virtual agent $l$ \hfill {\bf (Step b)}  \label{line:em2bs}}
    	\State{$x' \leftarrow B_{3}(\emptyset,\mc{P}_{ij})$ \label{line:emroom}}
    	\State{$\mc{P}'_{k} \leftarrow \mc{V}_{k}(x',\mc{P}_{ij}),~\forall k\in \mc{N}'_i=\{1,2,3\}$ \label{line:emnewP}} 
     	\State{$\tilde{i}_{\min}= \underset{\substack{k \in neigh(\{i,j\},\mc{P})}}{\argmin} |x_j-x_{i_{\min}}|$ \label{line:emproxy}}  
    	%\State{$k_l = \underset{\substack{k\in\mc{N}'_{\tilde{i}_{\min}}}}{\argmin} |x'_{k}-x_{\tilde{i}_{\min}}|, ~x_l \leftarrow x'_{k_l}$ \label{line:eml}}
        \State{$k_l = \underset{\substack{k\in neigh(\tilde{i}_{\min}},\mc{P})\cap\mc{N}'_i} {\argmin} |x'_{k}-x_{\tilde{i}_{\min}}|, ~x_l \leftarrow x'_{k_l}$ \label{line:eml}}
         %   \State{$k_l = \underset{\substack{k\in \mc{N}'_i| \adj(\mc{P}'_k,\mc{P}_{\tilde{i}_{\min}}) }}{\argmin} |x'_{k}-x_{\tilde{i}_{\min}}|, ~x_l \leftarrow x'_{k_l}$ \label{line:eml}}
    	\State{$x_{\{i,j\}} \leftarrow x'\setminus \{x_l\}$ \label{line:emmap}}
        \State{$i_+ \leftarrow \underset{\substack{k \in neigh(l,\mc{P})\cap\{i,j\}}}{\argmin}~|x_j-x_{l}|,~\mc{P}_{i_+} \leftarrow \mc{P}_{i_+}\cup\mc{P}_l$ \label{line:emnear}} %\memo{Pallocmarden}
    	%\State{$i_+ \leftarrow \underset{\substack{j\in\mc{N}^2_i| \adj(\mc{P}'_j,\mc{P}'_l)}}{\argmin}~|x_j-x_{l}|,~\mc{P}_{i_+} \leftarrow \mc{P}_{i_+}\cup\mc{P}_l$ \label{line:emnear}} %\memo{Pallocmarden}
    	\EndIf
		\EndWhile
		\State{Return $(x,\mc{P})$}
	\EndProcedure
	\end{algorithmic}
\end{algorithm}

\begin{comment}
Algorithm \ref{alg:commtree} shows how to share global information, for example, $u_{\min}$, by constructing a communication tree in a distributed manner. The algorithm can be triggered by any agent $i$. The goal of the algorithm is to find agent $D_i$ that is the only parent of $i$ in the communication tree. Prior to executing the algorithm, each agent initializes its memorized utility $\hat{u}_i$ with its own utility $u_i$. Also, we set $\mc{\hat{N}}_i=\mc{N}_i$. Then firstly, agent $i$ communicates with its neighbors $\mc{\hat{N}}_i$ to find the agent $j_{\lmin}$ that has the locally minimum utility among them (Lines \ref{line:ct_init}-\ref{line:ct_jmin}). If $i$ itself is $j_{\lmin}$, $i$ points to itself only when it has not pointed any other agents (Lines \ref{line:ct_self}-\ref{line:ct_imin}). If $j_{\lmin}$ is not either $i$ nor the current $D_i$, $i$ points to $j_{\lmin}$ (Lines \ref{line:ct_else}-\ref{line:ct_other}). At the same time, $i$ also updates its utility in memory to serve as a proxy to the root ($i_{\min}$) in further communications (Line \ref{line:ct_update}). Lastly, the algorithm of the neighborhood is triggered recursively until the convergence (Line \ref{line:ct_recur}). In the convergence, a tree rooted at $i_{\min}$ is formed through the chain of pointers $D_i$. 
\end{comment}

%In convergence, all the agents in the tree have a consensus on $i_{\min}$ (and $u_{\min}$), which is the root of the tree. 

%Given these classifications of the solution, 
Given this classification, we propose a distributed algorithm returning an equilibrium solution $(x,\mc{P})$. 
The algorithm is summarized in Algorithm \ref{alg:extmarden}. 
\color{\newcolor}
%Though this algorithm is distributed, agents are not self-interested. 
%We omit some details for readability, in particular it is required to check that the Voronoi region of neighbors of neighbors will not be affacted by the reallocation \cite{yun_rus_2014, sadeghi2020approximation}, as discussed in the introduction. 
%At each iteration, an agent is activated. The agent communicates and coordinates with her neighborhood to approach a neighborhood optimum. 
Firstly, all agents initialize $\mc{P}_i$ with a Voronoi partition \camera{based on geodesic distances \cite{haumann2011discoverage}} (Line \ref{line:emP0}). As far as the state is not $Z_4$, 
agents build a communication tree and share the necessary information, $u_{\min}$, $x_{i_{\min}}$ and $i_{\max}^+$ based on a consensus algorithm via neighborhood communication. %, as in Algorithm \ref{alg:commtree}
%\cite{cortes2008distributed}
(Lines \ref{line:emiter}-\ref{line:emconsensus}).
%it picks up an agent one by one and 
%calls \textproc{OneStep} in Algorithm \ref{alg:onestep}. \textproc{OneStep} 

\camera{If the algorithm is in state $Z_1$, the agent $i^+_{\max}$ is picked (Line \ref{line:empickup}), otherwise a random agent is picked. } 
\camera{Note that in a distributed fashion, this could be done for instance by sharing `done/undone' status and agent IDs, to pick up the `undone' agent with the smallest ID. 
Alternatively, agents could probabilistically move (probability $\alpha$) or not (probability $1-\alpha$).  
Agent $i$ retrieves her parent (denoted $j$) in the current communication tree. 
%computes its neighborhood $\mc{N}^2_i=\{i,D_i\}$ (Line \ref{line:emNi}). 
%The function $\textproc{Neighbor}$ computes $\mc{N}^2_i$ based on the adjacency of $\mc{P}_i$ (Line \ref{line:emNi}). 
%If the state is $Z_1$, it continues to pick another agent until $i_{\max}^+$ is in $\mc{N}^2_i$ to prioritize the process of $i_{\max}^+$ (Line \ref{line:emfocus}-\ref{line:emskip}). 
The activated agent $i$ then checks whether the combined regions could  accommodate another agent (Line \ref{line:emif}). }

\begin{itemize}
    \item \textbf{Step a:} If $\mc{P}_{ij}$ cannot accommodate another agent \camera{or $i_{\min}$ belongs to this neighborhood,} agent $i$ computes a neighborhood optimum for the pair of agents $(i,j)$ and the allocation is implemented  (Line \ref{line:emopt}). %This process is skipped if it is already a neighborhood optimum (Lines \ref{line:emoptdone}-\ref{line:emdoneskip}).
    \item \textbf{Step b:} \camera{If $\mc{P}_{ij}$ can accomodate another agent, %$i$ tries to make  room to accommodate another agent. 
    it computes new locations $x'$ for agents $i$ and $j$, together with an additional agent $l$ (Line \ref{line:emroom}). 
    %, and uses the space of the imaginary agent to make a space for blocked agents (Figure \ref{fig:step2b}).
%In the one dimensional setting \cite{marden2016role}, it is assumed that $i_{\min}$ is always on the right of $i$, without loss of generality. However, in the general setting, agents need to share the information of $x_{i_{\min}}$ to know the direction of $i_{\min}$. 
%Then, the algorithm can make a new partition of $\mc{P}'$ based on the new locations $x'$ (Line \ref{line:emnewP}).
Then, the algorithm computes a new partition for these $3$ agents by splitting $\mc{P}_{ij}$ based on the new locations $x'$ (Line \ref{line:emnewP}).
To allocate the new partition, the algorithm finds the agent $\tilde{i}_{\min}\not \in \{i,j\}$ that is the closest to $i_{\min}$ in the communication tree. 
%that is the neighbors of $\mc{N}^2_i$ (Line \ref{line:emproxy}). 
%Then the partition closest to $\tilde{i}_{\min}$ is allocated to $l$ first, where $\mc{N}'_k$ is the neighborhood of $k$ based on the new partition $\mc{P}'$ for $\mc{N}^2_i$ and the old partition $\mc{P}$ for $\mc{N}\setminus\mc{N}^2_i$ (Line \ref{line:eml}). 
Then the partition closest to $\tilde{i}_{\min}$ and adjacent to $\mc{P}_{\tilde{i}_{\min}}$ is allocated to $l$ first (Line \ref{line:eml}).
The other partitions are allocated to $i$ and $j$, for example by an optimal matching algorithm so as to minimize the moves of the agents (Line \ref{line:emmap} \cite{crouse2016implementing}). Lastly, $\mc{P}_l$ is merged to the partition of the agent $i_+$ that is the closest and adjacent to $l$ (Line \ref{line:emnear}).}
\end{itemize}

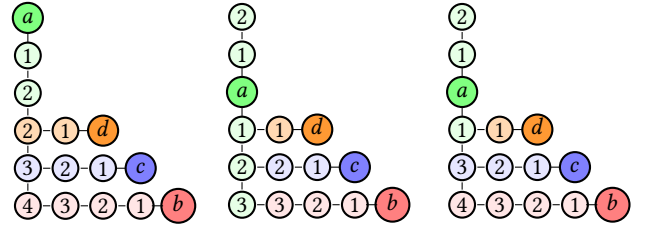
\begin{figure}
\begin{tabular}{ccc}
\begin{minipage}{0.16\textwidth}
\centering
%\tikzstyle{background grid}=[draw, black!20,step=.5cm]
%\tikzstyle{robot}=[circle,draw, fill=black!20, line width=1pt]
\begin{tikzpicture}%[show background grid]
%\draw[black!20] (0,0) grid (2.5,3);
    \draw (0,2.5) node (05) {$\robotgreens{$a$}$};
    \draw (0,2) node (04) {$\circledfilleds{1}$};
    \draw (0,1.5) node (03) {$\circledfilleds{2}$};
    \draw (0,1) node (02) {$\circledoranges{2}$};
    \draw (0.5,1) node (12) {$\circledoranges{1}$};
    \draw (1,1) node (22) {$\robotoranges{$d$}$};
    \draw (0,0.5) node (01) {$\circledblues{3}$};
    \draw (0.5,0.5) node (11) {$\circledblues{2}$};
    \draw (1,0.5) node (21) {$\circledblues{1}$};
    \draw (1.5,0.5) node (31) {$\robotblues{$c$}$};
    \draw (0,0) node (00) {$\circledreds{4}$};
    \draw (0.5,0) node (10) {$\circledreds{3}$};
    \draw (1,0) node (20) {$\circledreds{2}$};
    \draw (1.5,0) node (30) {$\circledreds{1}$};
    \draw (2,0) node (40) {$\robotred{$b$}$};

    \draw (00) -- (01);
    \draw (01) -- (02);
    \draw (02) -- (03);
    \draw (03) -- (04);
    \draw (04) -- (05);
    \draw (02) -- (12);
    \draw (12) -- (22);
    \draw (01) -- (11);
    \draw (11) -- (21);
    \draw (21) -- (31);
    \draw (00) -- (10);
    \draw (10) -- (20);
    \draw (20) -- (30);
    \draw (30) -- (40);
\end{tikzpicture}
\end{minipage}\hfill

\begin{minipage}{0.16\textwidth}
\centering
%\tikzstyle{background grid}=[draw, black!20,step=.5cm]
%\tikzstyle{robot}=[circle,draw, fill=black!20, line width=1pt]
\begin{tikzpicture}%[show background grid]
%\draw[black!20] (0,0) grid (2.5,3);
    \draw (0,2.5) node (05) {$\circledfilleds{2}$};
    \draw (0,2) node (04) {$\circledfilleds{1}$};
    \draw (0,1.5) node (03) {$\robotgreens{$a$}$};
    \draw (0,1) node (02) {$\circledfilleds{1}$};
    \draw (0.5,1) node (12) {$\circledoranges{1}$};
    \draw (1,1) node (22) {$\robotoranges{$d$}$};
    \draw (0,0.5) node (01) {$\circledfilleds{2}$};
    \draw (0.5,0.5) node (11) {$\circledblues{2}$};
    \draw (1,0.5) node (21) {$\circledblues{1}$};
    \draw (1.5,0.5) node (31) {$\robotblues{$c$}$};
    \draw (0,0) node (00) {$\circledfilleds{3}$};
    \draw (0.5,0) node (10) {$\circledreds{3}$};
    \draw (1,0) node (20) {$\circledreds{2}$};
    \draw (1.5,0) node (30) {$\circledreds{1}$};
    \draw (2,0) node (40) {$\robotred{$b$}$};

    \draw (00) -- (01);
    \draw (01) -- (02);
    \draw (02) -- (03);
    \draw (03) -- (04);
    \draw (04) -- (05);
    \draw (02) -- (12);
    \draw (12) -- (22);
    \draw (01) -- (11);
    \draw (11) -- (21);
    \draw (21) -- (31);
    \draw (00) -- (10);
    \draw (10) -- (20);
    \draw (20) -- (30);
    \draw (30) -- (40);
\end{tikzpicture}
\end{minipage}

\begin{minipage}{0.16\textwidth}
\centering
%\tikzstyle{background grid}=[draw, black!20,step=.5cm]
%\tikzstyle{robot}=[circle,draw, fill=black!20, line width=1pt]
\begin{tikzpicture}%[show background grid]
%\draw[black!20] (0,0) grid (2.5,3);
    \draw (0,2.5) node (05) {$\circledfilleds{2}$};
    \draw (0,2) node (04) {$\circledfilleds{1}$};
    \draw (0,1.5) node (03) {$\robotgreens{$a$}$};
    \draw (0,1) node (02) {$\circledfilleds{1}$};
    \draw (0.5,1) node (12) {$\circledoranges{1}$};
    \draw (1,1) node (22) {$\robotoranges{$d$}$};
    \draw (0,0.5) node (01) {$\circledblues{3}$};
    \draw (0.5,0.5) node (11) {$\circledblues{2}$};
    \draw (1,0.5) node (21) {$\circledblues{1}$};
    \draw (1.5,0.5) node (31) {$\robotblues{$c$}$};
    \draw (0,0) node (00) {$\circledreds{4}$};
    \draw (0.5,0) node (10) {$\circledreds{3}$};
    \draw (1,0) node (20) {$\circledreds{2}$};
    \draw (1.5,0) node (30) {$\circledreds{1}$};
    \draw (2,0) node (40) {$\robotred{$b$}$};

    \draw (00) -- (01);
    \draw (01) -- (02);
    \draw (02) -- (03);
    \draw (03) -- (04);
    \draw (04) -- (05);
    \draw (02) -- (12);
    \draw (12) -- (22);
    \draw (01) -- (11);
    \draw (11) -- (21);
    \draw (21) -- (31);
    \draw (00) -- (10);
    \draw (10) -- (20);
    \draw (20) -- (30);
    \draw (30) -- (40);
\end{tikzpicture}
\end{minipage}
\end{tabular}
\caption{A pathological non-convex discrete example. Different from grid spaces, nodes are connected by edges. Left: Voronoi partition. Middle: A move of agent $a$ changes 
 partitions outside of the neighborhood. Right: Changes in $\mc{P}$ are confined inside the neighborhood.}
\label{fig:externality}
\end{figure}

%Different from 1 dimensional setting \cite{marden2016role}, it is not possible to represent the relative position of agents with their indices any more. Then 
%the allocation that makes the total move of neighbor agents minimal (line \ref{line:emupdate}).  To this end, the algorithm computes the agent nearest to $i_{\min}$ to give it the reserved space (line \ref{line:emnear} - \ref{line:em2be}). 
\color{black}
\camera{In both cases, agents allocate the partition to avoid collisions after they redevide the neighborhood. 
Note that the algorithm updates the partition $\mc{P}_j$ of only neighbors $(i,j) \in E$ and does not affect the other agents outside of $(i,j)$ (Figure \ref{fig:externality}). 
This avoids issues with the convergence of the algorithm in non-convex discrete settings, as  discussed in \cite{yun_rus_2014}.}
%This algorithm guarantees the best approximation ratio as follows.

\begin{figure}
\begin{minipage}{0.25\textwidth}
\centering
\tikzstyle{background grid}=[draw, black!20,step=.5cm]
%\tikzstyle{robot}=[circle,draw, fill=black!20, line width=1pt]
\begin{tikzpicture}[show background grid]
\draw[black!20] (0,0) grid (4,2.5);
            % Put the graphic inside a node. This makes it easy to place the
            % graphic and to draw on top of it. 
            % The above right option is used to place the lower left corner
            % of the image at the (0,0) coordinate. 
            % show origin
            %\fill (0,0) circle (5pt);
            \draw (0,0) node {$\robotviolet{$a$}$}; 
            \draw (0.5,0) node {$\robotorange{$b$}$};
            \draw (1,0) node {$\circledorange{1}$};
            \draw (1.5,0) node {$\circledblue{3}$};
            \draw (2,0) node {$\circledblue{2}$};
            \draw (2.5,0) node {$\circled{3}$};
            \draw (3,0) node {$\circledblue{4}$};
            \draw (3.5,0) node {$\circledfilled{2}$};
            \draw (4,0) node {$\circledbrown{1}$};
            \draw (2,0.5) node {$\circledblue{1}$};
            \draw (2.5,0.5) node {$\circledblue{2}$};
            \draw (3,0.5) node {$\circledfilled{2}$};
            \draw (3.5,0.5) node {$\circledfilled{1}$};
            \draw (4,0.5) node {$\robotbrown{$f$}$};
            \draw (2,1) node {$\robotblue{$c$}$};
            \draw (2.5,1) node {$\circledblue{1}$};
            \draw (3,1) node {$\circledfilled{1}$};
            \draw (3.5,1) node {$\robotgreen{$e$}$};
            \draw (4,1) node {$\circledfilled{1}$};
            \draw (2,1.5) node {$\circledblue{1}$};
            \draw (2.5,1.5) node {$\circledred{1}$};
            %\draw (3,1.5) node {$\circled{1}$};
            \draw (3.5,1.5) node {$\circledfilled{1}$};
            \draw (4,1.5) node {$\circledfilled{2}$};
            \draw (2,2) node {$\circledred{1}$};
            \draw (2.5,2) node {$\robotred{$d$}$};
            \draw (3,2) node {$\circledred{1}$};
            \draw (3.5,2) node {$\circledfilled{2}$};
            \draw (4,2) node {$\circledfilled{3}$};
            \draw (2,2.5) node {$\circledred{2}$};
            \draw (2.5,2.5) node {$\circledred{1}$};
            \draw (3,2.5) node {$\circledred{2}$};
            \draw (3.5,2.5) node {$\circledfilled{3}$};
            \draw (4,2.5) node {$\circledfilled{4}$};
            %%%%%%%%%%%%%%%%%%%%%%%%%%%%%%%%%%%%%%%%%%%%%
            \foreach \x in {0.5,1,1.5,2,2.5}
            \draw (0,\x) node {$\zoneviolet$};
            \foreach \x in {0.5,1,1.5,2,2.5}
            \draw (0.5,\x) node {$\zoneorange$};
            \foreach \x in {0.5,1}
            \draw (1,\x) node {$\zoneorange$};
            \foreach \x in {0.5,1,1.5}
            \draw (1.5,\x) node {$\zoneblue$};
            \foreach \x in {2,2.5}
            \draw (1.5,\x) node {$\zonered$};
            \foreach \x in {1.5,2,2.5}
            \draw (1,\x) node {$\zonered$};
            \draw (3,1.5) node {$\zonered$};
            %\draw (2.5,-0.5) node {$\uparrow$};
\end{tikzpicture}
\end{minipage}
\caption{After step b, $c$ and $d$ have reallocated themselves in $\mc{P}_c \cup \mc{P}_d$ assuming a third agent would join, leaving the empty position (unfilled) as close as possible from $a$ ($i_{\min}$)}
\label{fig:stepb}
\end{figure}
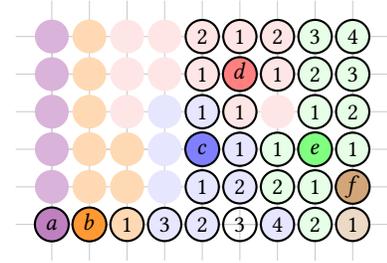

\camera{Let us illustrate this step on our example (Figure \ref{fig:stepb}). For instance, assuming $d$ is picked as the $i^+_{\max}$ agent, this agent interacts with $c$, her parent in the communication graph. Upon evaluating the optimal partition with a third agent, they assess that the gain is higher than $u_{\min}$. As a result, they move and make room for a third agent. As discussed before, note that the region is not updated with respect to that of agent $b$ at this stage. 
}

\begin{comment}

\begin{figure}[t]
\centering
\includegraphics[width=0.8\linewidth]{}
\caption{Step b: The area in the bold line shows $\mc{N}^2_i$ for $i=2$. Though $n_i=2$, it is partitioned into 3 parts to make a room for another agent.}
\label{fig:step2b}
\end{figure}
\end{comment}

\color{\newcolor}
\begin{comment}
\begin{theorem}
Algorithm \ref{alg:extmarden} converges to $Z_4$ (neighborhood optimum) and terminates. The approximation ratio is 2.
\label{thm:poaext}
\end{theorem}

\medskip
\begin{theorem}[Convergence]
Algorithm \ref{alg:extmarden} converges to $Z_4$ (neighborhood optimum) and terminates.
\label{thm:converge}
\end{theorem}
\medskip
\end{comment}

Due to the space limitation, the complete proof of Theorem \ref{thm:main} is in the supplementary material~\cite{iwase2024limited}. 
\camera{It follows the line of reasoning described in \cite{marden2016role} but adapted to our setting}. 
Briefly, it proves that the following potential function always increases for each iteration of the algorithm. Since the solution space is finite or compact, then the algorithm terminates.

\begin{equation}
\left.
\begin{array}{l}
\phi(x,\mc{P})=\underset{\substack{i\in\mc{N}}}{\sum}u_i(x,\mc{P})+[V(x,\mc{P})-u_{\min}(x,\mc{P})]_.
\end{array}
\right.
\label{eq:pote}
\end{equation}

\begin{comment}
    
Our algorithm also guarantees the best possible approximation ratio. 

\medskip
\begin{theorem}[Performance guarantee]
The approximation ratio of Algorithm \ref{alg:extmarden} is 2.
\label{thm:poaext}
\end{theorem}
\medskip

The complete proof is also in the supplementary material. Briefly,
\end{comment}

\aamasoct{Also, the proof sketch of the approximation ratio is as follows.} Because of the submodularity of $G$ and Equation~ \ref{eq:Z3}, adding agents in optimal allocation $x^*$ to the same number of agents allocated by the algorithm does not make $G$ double. Formally, $G(x^*)\leq G(x,x^*) \leq 2G(x)$.

%The required information also depends on the algorithm and $u_{\min}$,  $i_{\max}^+$ and $x_{i_{\min}}$ are not the necessary conditions of the approximation ratio. 

\subsection{Special Case}

\camera{It is interesting to observe that the algorithm provides an optimal solution in the special case where there are exactly as many points of interest as agents.}  
Let $\mc{C}_+\subseteq C$ be a set of all important points such that $v_c=1,~\forall c\in \mc{C}_+$. Then other points are less important as $v_{c'}\ll 1,~\forall c'\in C\setminus\mc{C}_+$. The algorithm guarantees an optimal solution when agents can cover all the important points as follows.

\begin{theorem}
If $N=|\mc{C}_+|$, Algorithm \ref{alg:extmarden} converges to an optimal solution.
\label{thm:optformation}
\end{theorem}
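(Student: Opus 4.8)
The plan is to prove two things: that every optimal allocation is a ``perfect matching'' of the $n$ agents onto the $n$ points of $\mc{C}_+$, and that the terminal allocation of Algorithm~\ref{alg:extmarden} is such a matching. I will use throughout the mild regularity $g(0)>g(1)$ (which the running example $g(d)=1/(1+d)$ satisfies) and read ``$v_{c'}\ll 1$'' as the quantitative condition $\epsilon:=\sum_{c'\in C\setminus\mc{C}_+}v_{c'}\,g(0)<g(0)-g(1)$, so that the total weight carried by the unimportant points is dominated by one ``coverage jump''.

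First I would settle the optimum. Since allocations are exclusive and $|\mc{C}_+|=n=|\mc{N}|$, it is feasible to place a distinct agent on every point of $\mc{C}_+$, and any such allocation scores $ng(0)$ on the important points; leaving even one point of $\mc{C}_+$ uncovered forfeits at least $g(0)-g(1)>\epsilon$, more than the entire unimportant mass can compensate. Hence every optimal allocation covers all of $\mc{C}_+$ (and, in the limit $v_{c'}\to 0$, any allocation covering all of $\mc{C}_+$ is optimal, the residual gap being $O(\epsilon)$ otherwise). By Theorem~\ref{thm:main} the algorithm terminates at some $(x,\mc{P})\in Z_4$, so it suffices to show that at a neighborhood optimum every point of $\mc{C}_+$ carries an agent; together with exclusivity and $|\mc{C}_+|=n$ this forces $x$ to be a perfect matching.

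The heart is a proof by contradiction. Suppose some $c^*\in\mc{C}_+$ has no agent, and let $m$ be the agent whose region $\mc{P}_m$ contains $c^*$; then $x_m\ne c^*$, so $|x_m-c^*|\ge 1$. I would branch on whether every region meets $\mc{C}_+$. If some region $\mc{P}_{a_0}$ contains no point of $\mc{C}_+$, then $u_{a_0}\le\epsilon$, hence $u_{\min}(x,\mc{P})\le\epsilon$; but inserting a single virtual agent at $c^*$ inside $\mc{P}_m$ raises $G$ by at least $g(0)-g(|x_m-c^*|)\ge g(0)-g(1)>\epsilon$, so $V(x,\mc{P})>u_{\min}(x,\mc{P})$, i.e. $(x,\mc{P})\in Z_1$ --- impossible since $Z_4\subseteq Z_2$. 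Otherwise every one of the $n$ regions meets the $n$-element set $\mc{C}_+$, so each region contains \emph{exactly one} point of $\mc{C}_+$; in particular $\mc{P}_m\cap\mc{C}_+=\{c^*\}$, which forces $x_m\notin\mc{C}_+$. Picking any neighbor $p$ of $m$ in the current communication tree (it exists when $n\ge 2$, by connectivity), the region $\mc{P}_{mp}=\mc{P}_m\cup\mc{P}_p$ holds exactly two important points, $c^*$ and the one in $\mc{P}_p$; placing one agent on each shows $M_2(\emptyset,\mc{P}_{mp})\ge 2g(0)$, whereas at most one of $x_m,x_p$ lies on an important point (and it is not $x_m$), so $u_m+u_p=G(x_m,x_p;\mc{P}_{mp})\le g(0)+g(1)+\epsilon<2g(0)$. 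This contradicts the defining equality $u_m+u_p=M_2(\emptyset,\mc{P}_{mp})$ of $Z_4$ (Equation~\ref{eq:Z4}). In both branches we reach a contradiction, so all of $\mc{C}_+$ is covered and $x$ is optimal.

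The step I expect to fight hardest is the second branch above. It is tempting to close the whole argument using the single-agent condition $V(x,\mc{P})\le u_{\min}(x,\mc{P})$ of $Z_2$ alone, but a configuration with one ``wasted'' agent a single step from its important point can satisfy it (for $g(d)=1/(1+d)$ it does so with equality), so the proof genuinely needs the pairwise neighborhood-optimality condition of $Z_4$ together with the freedom of a two-agent reallocation inside $\mc{P}_{ij}$. The remainder is bookkeeping: pinning down how small ``$v_{c'}\ll 1$'' must be so that the jump $g(0)-g(1)$ dominates the unimportant mass in both branches, and disposing of the degenerate single-agent case $n=1$ separately.
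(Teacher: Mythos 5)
Your proof is correct and rests on the same mechanism as the paper's: argue by contradiction at the terminal state, using the pigeonhole count of the $n$ important points over the $n$ partitions together with the $Z_2$ condition $V(x,\mc{P})\leq u_{\min}(x,\mc{P})$. The difference is that you are more complete. The paper's proof asserts in one breath that in a sub-optimal terminal state ``some agents including $i_{\min}$ do not have any points $c\in\mc{C}_+$ in their partitions, due to the neighborhood optimality,'' and then only runs what is essentially your first branch; it never justifies why an agent whose partition \emph{does} contain an important point cannot be badly off (and hence cannot be $i_{\min}$). Your second branch supplies exactly that missing justification, by invoking the pairwise equality $u_i+u_j=M_2(\emptyset,\mc{P}_{ij})$ defining $Z_4$ to show that an agent sharing its region with a single uncovered important point contradicts neighborhood optimality with any tree-neighbor. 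You also make explicit the quantitative reading of $v_{c'}\ll 1$ (the unimportant mass must be dominated by the jump $g(0)-g(1)$) that both arguments silently require, and you correctly flag the degenerate $n=1$ case, for which the pairwise conditions are vacuous. In short: same route, but yours closes a genuine gap in the paper's own write-up.
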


\begin{proof}
Note that each agent is allocated to a point $c\in \mc{C}_+$ in an optimal solution. Now let us assume that the algorithm converges to a sub-optimal solution $(x,\mc{P})$. In this case, some agents including $i_{\min}$ do not have any points $c\in\mc{C}_+$ in their partitions, due to the neighborhood optimality. Then, there is at least one agent $i$ whose partition $\mc{P}_i$ includes more than two points in $\mc{C}_+$. This violates the condition (\ref{eq:Z2}), which must be satisfied in the convergent state $Z_4$. This is a contradiction.
\end{proof}

\camera{Even though the problem itself is no longer difficult in that case, it is noteworthy that the strategy employed in our algorithm guarantees optimality.}

\subsection{Communication Requirements}
\label{sec:complexity}

As shown in Theorem \ref{thm:main}, the algorithm achieves the approximation ratio of 2, by communicating the minimum degree of information, $u_{\min}$,  $i_{\max}^+$ and $x_{i_{\min}}$. The agent with the maximal gain from an additional agent in her region $i_{\max}^+$ is used to focus on the area to be reallocated in Algorithm \ref{alg:extmarden}. \camera{It is convenient to use it, but we note that it is not strictly required to make the algorithm work (see for instance \cite{marden2016role}).}
The minimum utility, on the other hand, $u_{\min}$ is the key to proving the approximation ratio based on Equation~\ref{eq:Z3}. The location $x_{i_{\min}}$ is required to extend the algorithm for 1-dimensional setting in \cite{marden2016role} to more general settings. 

As for the communication complexity of the algorithm, we start by bounding the convergence rate, which is the number of iterations before convergence. Let $d_{\max}$ be the upper bound of the distance between any two points in the environment $C$, and $\Delta v$ be the resolution limit of the weight $v_c$. In the case of discrete settings, the potential function $\phi(x,\mc{P})$ consists of the elemental term $v_c~ g(|x_i-c|)$ and then the improvement in the potential function for each iteration is lower bounded by $\epsilon=\Delta v \cdot g(d_{\max})$. (In the case of continuous settings, we can regard $\epsilon$ as the agents' resolution limit of utility). Note that the convergence of the algorithm is guaranteed by Theorem \ref{thm:main}. For each iteration, Algorithm \ref{alg:extmarden} checks if the potential function can be improved or not. Before convergence, at least one out of $n$ agents improves the potential function. Then the convergence rate $\alpha$ is bounded as $\alpha \leq n[\phi(x^*,\mc{P}^*)-\phi(x,\mc{P})]/\epsilon$, where $(x,\mc{P})$ and $(x^*,\mc{P}^*)$ are the initial allocation and the allocation after convergence, respectively.

Furthermore, for each iteration, the agents build the communication tree, share the global information, share their private information to compute a neighborhood optimum, and finally share the neighborhood optimum solution. 
\camera{Depending on the exact algorithm used to build the communication tree, the communication burden may vary, but it can be done in $\mc{O}(n^2)$.}  
%Agents also need to share the timing of updating the communication trees in Algorithm \ref{alg:neighbor}. 
%To this end, we define additional global variables $s_{True}$ and $s_{False}$ that $s_{True}=True$ if $s_i=True$ for all agents, $s_{False}=True$ if $s_i=False$ for all agents, otherwise they are $False$. 
The communications to share the global information requires at most $\mc{O}(n^2)$ messages. 
%\camera{Assuming $u_{\min}$ to be bounded by some upper value $U$}
%The message to share the global information $(u_{\min},  i_{\max}^+, x_{i_{\min}})$ has a fixed length regardless of the problem size. 
%If we assume $1/g$ is a polynomial function, $1/\epsilon$ and $\alpha$ are also polynomial in the problem size, and then the communication complexity is also polynomial. 

%We can say that our approach requires less communication than the state-of-the-art algorithm \cite{sadeghi2020approximation}, where the message size depends on the size of the environment and is not scalable. 

%%%%%%%%%%%%%%%%%%%%%%%%%%%%%%%%%%%%%%%%%%%%%%%%%%%%%%%%%%%%%%%%%%%%%%%%%%%%%%%%%%%%%%%%%%%%%%%
\section{Experiments}
\label{sec:exps}
%%%%%%%%%%%%%%%%%%%%%%%%%%%%%%%%%%%%%%%%%%%%%%%%%%%%%%%%%%%%%%%%%%%%%%%%%%%%%%%%%%%%%%%%%%%%%%%

In order to validate the practical efficiency and scalability of our approach, we ran several simulations. 
First, we evaluate the efficiency with small graphs, then we evaluate the scalability with larger graphs. 

In what follows, the nodes in an environment graph are classified into two groups, which are $c\in\mc{C}_+$ with $v_c=1$ and $c'\in C\setminus\mc{C}_+$ with $v_{c'}\ll 1$. Nodes in $\mc{C}_+$ and the initial position of agents are allocated uniformly at random in the environment graph. 
We run 32 simulations for each experiment. 
All the error bars in the figures show 95\% confidence intervals. Note that the environment can be any dimensional space, even though all the graphs are projected into 2D figures. All the numerical results are summarized in Table \ref{tab:results}. As for the implementation, we use Python 3.8.12, Red Hat Enterprise Linux Server release 7.9 and Intel Xeon CPU E5-2670 (2.60 GHz), 192 GB memory to run the experiments. The random number generator is initialized by numpy.random.seed at the beginning of the main code, with different seeds for each run of the simulation.

\paragraph{Comparison.}
The neighborhood optimal approach proposed in Section \ref{sec:mardenext}, coined in the following as NBO, is compared to the following algorithms: 

\begin{itemize}
    \item (VVP) the vanilla distributed covering algorithm based on Voronoi partitioning, as described in Section \ref{sec:distcontrol}. 
    %\item \forecai{(CVP) the cooperative version of VVP, also mentioned in Section \ref{sec:distcontrol}, where each agent $i$ tries to maximize the social welfare of her neighborhood $\underset{\substack{j\in\mc{N}_i}}{\sum}u_j(x)$.}
    \item (SOTA) the algorithm of Sadeghi \textit{et al.} \cite{sadeghi2020approximation}. 
    In a nutshell, for a given agent $i$, the algorithm first tries to perform individual moves to maximize the (local) social welfare of her neighborhood, 
    %\underset{\substack{j\in\mc{N}_i}}{\sum}u_j(x)$ 
   \camera{as defined by the Voronoi partitioning}. 
    If no such move is improving, it considers coordinated moves with a single other agent $j$, in the sense that $i$ would move and $j$ would take the place of agent $i$. The algorithm first considers neighbors, and then (via neighborhood communication), may consider agents further away. However, these coordinated moves only involve two agents at most. %, contrary to our approach.    
    %Via neighborhood communication, an agent yields some space to the other agents if it improves the objective function of the neighbors.
    \item (CGR) the centralized greedy algorithm that allocates agents one by one starting from the empty environment. Recall that this algorithm guarantees a performance of $1-1/e\approx$ 63\% of the optimal solution.
    % \cite{calinescu2011maximizing}.
\end{itemize}
    
\aamasoct{We evaluate the performance of the algorithms above with different shapes of the environments (Figure \ref{fig:shapes}). In addition to these shapes, we also use the small bridge setting and OR library dataset shown in \cite{yun_rus_2014}. More details about the shapes and the way instances are generated are available in the supplementary material.}

\begin{figure}[t]
\centering
\includegraphics[width=1\linewidth]{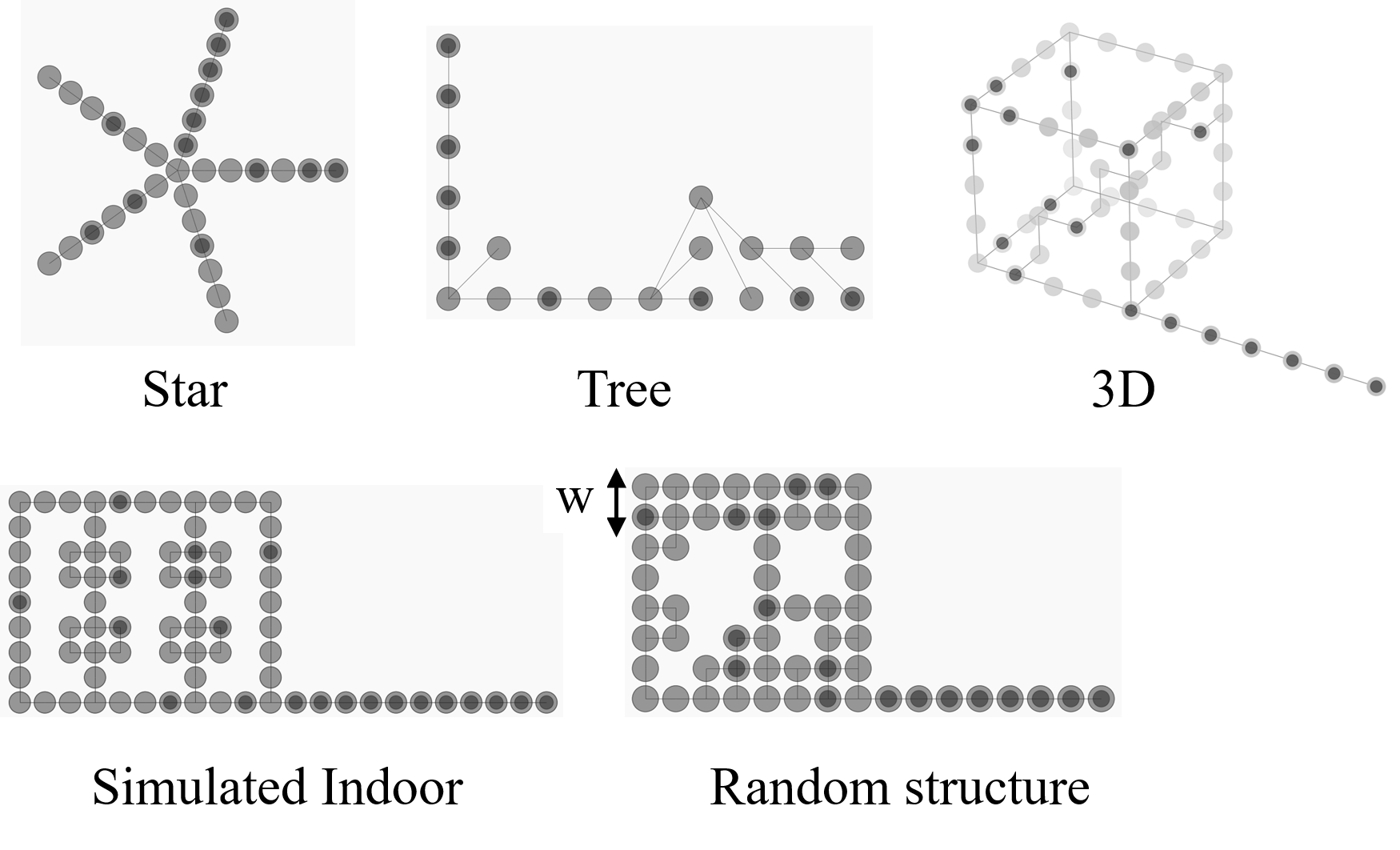}
\vspace{-0.3cm}
\caption{Different shapes of the environment.}
\label{fig:shapes}
\end{figure}

\begin{figure}[t]
\centering
\includegraphics[width=1\linewidth]{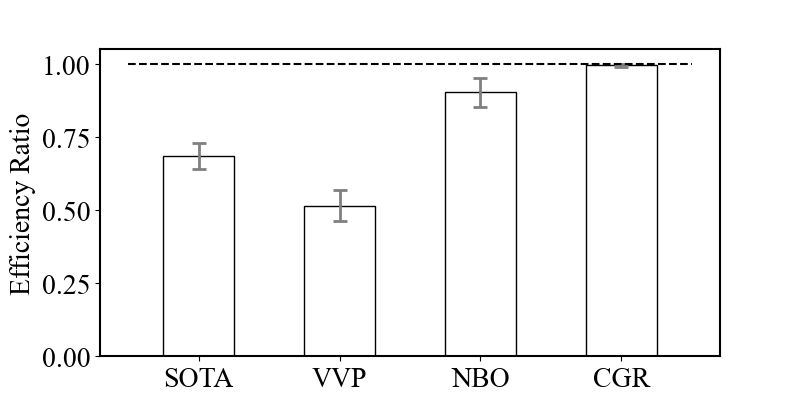}
\vspace{-0.3cm}
\caption{Efficiency ratio $G(x)/G(x^*)$ where $x$ is a solution of each algorithm and $x^*$ is an optimal solution, with $N=5$.}
\label{fig:poa}
\end{figure}

%We evaluate the performance of our method with different environments. We test canonical shapes, a star graph with extended branches and a tree, as in Figure \ref{fig:startree}. To produce a blocking situation, some agents are initially allocated in a line and others are allocated randomly. Figure \ref{fig:poastartree} shows the results. Both the star and the tree shows the similar results as in Figure \ref{fig:indoorefficiency}.

%We also run simulations with randomly generated shapes. Figure \ref{fig:randmaze} shows how to generate the connected structures randomly. We change the parameter of the corridor width $w \in \{1,2\}$ and run 32 simulations for each width. Figure \ref{fig:randmazew12} shows the result. Both $w=1$ and $w=2$ show the similar result as in Figure \ref{fig:indoorefficiency}. This shows the robustness in efficiency of our method when applied to different environments.

\subsection{Efficiency}

First, we evaluate the efficiency of the proposed neighborhood optimum algorithm, by comparing its solution  with the optimal one and with solutions returned by VVP, SOTA and CGR.
%algorithms, which are traditional coverage control \cite{gao2008notes} and \cite{sadeghi2020approximation}. 
The efficiency is measured as the ratio $G(x)/G(x^*)$ where $x$ is a solution of the algorithm and $x^*$ is an optimal solution. Since finding an optimal solution is NP-hard, the simulation is conducted on a 1D chain with $|C|=20, |\mc{C}_+|=10$, and $N=5$. 
%of length 20.  %as in Figure \ref{fig:chain}. 
%We set $|\mc{C}_+|=10$. 

In that case, the proposed algorithm outperforms both distributed algorithms (VVP and SOTA) and improves the efficiency by about 21.7\%  compared to SOTA (Figure \ref{fig:poa}). %Since the proposed algorithm outputs a neighborhood optimum solution and most of the agents are neighbor of each other in this small setting, it is comparable to the optimal solution. 
The efficiency ratio of NBO is 90\% which is much better than the theoretical approximation ratio of 2. 
%Meanwhile, since SOTA activates each agent just once to guarantee the termination of the algorithm, the performance is not much better than CVP in practice. 
We examine the detail with an illustrative example in the supplementary material.
%While the agents are stuck in the narrow corridor in the benchmarks, the proposed method makes an effective allocation of agents on the whole. 

Due to the NP-hardness of the problem, it is difficult to compare with optimal solutions in larger settings. 
For that reason, we instead compute the efficiency ratio with the centralized greedy (CGR) as a practical baseline. 
Table \ref{tab:results} summarizes all the results obtained. The quantitative comparison of the efficiency shows similar results as in Figure \ref{fig:poa} that the proposed method outperforms the benchmarks in all cases. Note that NBO even sometimes outperforms CGR in some instances, despite the following two disadvantages of NBO compared to CGR: \camera{(i) NBO uses only limited information while CGR uses full information, (ii) NBO starts from a random initial state, which is less favorable compared to CGR's iterative approach from the empty environment.}

\begin{table}
    \centering
    \begin{tabular}{|c|c|c|c|}
        \hline
         Shapes & SOTA & VVP & NBO \\
         \hline 
         \hline
         1D Chains & 0.686 ± 0.045 & 0.515 ± 0.052 & 0.903 ± 0.050 \\
         Stars & 0.983 ± 0.008 & 0.962 ± 0.010 & 1.017 ± 0.006 \\ 
         Trees & 0.952 ± 0.009 & 0.935 ± 0.009 & 0.980 ± 0.004 \\ 
         Simualed indoor & 0.895 ± 0.007 & 0.841 ± 0.010 & 0.917 ± 0.013 \\
         Random ($w=1$) & 0.932 ± 0.010 & 0.904 ± 0.008 & 0.963 ± 0.009 \\ 
         Random  ($w=2$) & 0.936 ± 0.008 & 0.901 ± 0.007 & 0.976 ± 0.009 \\ 
         Small bridge & 0.975 ± 0.007 & 0.955 ± 0.012 & 1.000 ± 0.002 \\
         3D structure & 0.960 ± 0.006 & 0.937 ± 0.007 & 0.982 ± 0.010 \\ 
         OR library & 0.843 ± 0.126 & 0.963 ± 0.014 & 0.978 ± 0.006 \\ 
         \hline         
    \end{tabular}
    \caption{Comparison on different shapes: Efficiency ratio $G(x)/G(x^{CGR})$ where $x$ is a solution of each algorithm and $x^{CGR}$ is a solution of CGR. Each cell shows mean ± standard deviation. Larger values mean better results.}
    \label{tab:results}
\end{table}

\subsection{Scalability}

We finally evaluate the scalability of the proposed algorithm by varying the size of the space $|C|$ and the number of agents $n$ \camera{(we could run experiments with more than 150 agents, the details can be found in the supplementary material).} 
The results show that:
\begin{itemize}
    \item The runtime grows with the size $|C|$, as expected. \camera{On the largest instances of size $|C|=192$, with 20 agents, the runtime is above 3 hours}. 
    \item The runtime decreases when $n$ increases because the size of partitions $|\mc{P}_{ij}|$ also decreases, \camera{and thus the opportunities of improvement are more severely constrained}. 
\end{itemize}

%These results show an advantage of the proposed method that performs well in thin environments where the local optimality is problematic causing poor performances of existing approaches.
%This result closely aligns with the theoretical complexity $\mc{O}(|\mc{P}_{n_i}|^2)$ and 
\camera{
These results demonstrate the applicability of the proposed algorithm to real-world medium-scale problems. For larger-scale settings, further improvements will be required to make the approach viable. It should be emphasized though that the anytime nature of the algorithm makes it relevant even with a limited time budget.}
Further details about the experiments are deferred to the supplementary material.

%%%%%%%%%%%%%%%%%%%%%%%%%%%%%%%%%%%%%%%%%%%%%%%%%%%%%%%%%%%%%%%%%%%%%%%%%%%%%%%%%%%%%%%%%%%%
\section{Conclusion}
%%%%%%%%%%%%%%%%%%%%%%%%%%%%%%%%%%%%%%%%%%%%%%%%%%%%%%%%%%%%%%%%%%%%%%%%%%%%%%%%%%%%%%%%%%%%

This paper extends the approach of \cite{marden2016role} to general dimensional settings for discrete environments, ensuring convergence to a neighborhood optimum solution, even in challenging non-convex scenarios, with an approximation ratio of 2. The communication requirements involve disseminating the value and position of the minimum utility agent, capturing an approximation ratio that surpasses other methods presented in \cite{sadeghi2020approximation,durham2011discrete}.  Interestingly, this minimal level of informational dissemination recaptures the best achievable approximation ratio of 2 for Nash equilibria, which requires all agents to have full information about the environment (e.g., choices of all agents, utility associated with all feasible choices, etc.). While more communication demanding than simple best responses based on Voronoi partitioning, the informational dissemination requirements are manageable, enabling local decision-making rules. Furthermore, our algorithm guarantees optimality in a special subclass of the coverage problem and outperforms state-of-the-art benchmark algorithms in experiments. Future research will focus on improving the algorithm's efficiency by minimizing information transmission or enhancing experimental results. Implementing this algorithm on real robotic systems is another important avenue for further exploration.

\begin{acks}
This work is partially supported by ONR grant \#N00014-20-1-2359 and AFOSR grants \#FA9550-20-1-0054 and \#FA9550-21-1-0203.
\end{acks} 
%\input{old/bodyfull20231007}

%%%%%%%%%%%%%%%%%%%%%%%%%%%%%%%%%%%%%%%%%%%%%%%%%%%%%%%%%%%%%%%%%%%%%%%%%%%%%%%%
%\section*{APPENDIX}

\bibliographystyle{ACM-Reference-Format} 
\bibliography{ref}

\newpage
\onecolumn
\appendix
%\begin{appendices}

\section{Proof of Theorem \ref{thm:main} (convergence)}
\label{sec:proofconverge}

Here, a pair of neighboring agents $S=\{i,j\}$ is randomly chosen to update their joint actions. For simplicity, we assume $|S|=2$ to explain, but the following proofs hold even for the case of $|S|>2$. Without loss of generality, we assume that $i$ is either the root of the minimum utility tree or closer than $j$ according to the communication tree.
For each iteration, Algorithm \ref{alg:extmarden} will produce a new state $(x',\mc{P}')$ from an old state $(x,\mc{P})$.  Note that after step a, where agents in $S$ reallocate $\mc{P}_S$ by themselves,
\begin{eqnarray}
\begin{array}{l}
M_2(\emptyset,\mc{P}_S)=\underset{\substack{i\in S}}{\sum}u_i(x',\mc{P}').
\end{array}
\end{eqnarray}

Also, after step b, where agents in $S$ try to make a space in $\mc{P}_S$ to accommodate another agent,
\begin{eqnarray}
\begin{array}{l}
M_3(\emptyset,\mc{P}_S)=\underset{\substack{i\in S}}{\sum}u_i(x',\mc{P}')+M_1(x'_{i_+},\mc{P}'_{i_+}).
\end{array}
\end{eqnarray}

\begin{lemma}
If $(x,\mc{P})\in Z_1$, then Algorithm \ref{alg:extmarden} will produce a sequence of states that results in a new state $(x',\mc{P}')\in Z_2$.
\label{thm:z1z2}
\end{lemma}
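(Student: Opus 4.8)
The plan is to argue by way of the potential function $\phi$ defined in Equation~\ref{eq:pote}, reasoning about how a single iteration changes each of its two pieces: the social welfare $\sum_i u_i(x,\mc{P})$ and the slack term $[V(x,\mc{P})-u_{\min}(x,\mc{P})]_+$. Suppose $(x,\mc{P})\in Z_1$, so that $V(x,\mc{P})>u_{\min}(x,\mc{P})$ and the slack term is strictly positive. In this state the algorithm picks $i=i^+_{\max}$, lets $j$ be its parent in the communication tree, and then executes either \textbf{Step a} or \textbf{Step b}. The key quantitative observations recorded just before the lemma are that after Step a we have $M_2(\emptyset,\mc{P}_S)=\sum_{k\in S}u_k(x',\mc{P}')$, and after Step b we have $M_3(\emptyset,\mc{P}_S)=\sum_{k\in S}u_k(x',\mc{P}')+M_1(x'_{i_+},\mc{P}'_{i_+})$. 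Since only the partitions of $i$ and $j$ change, the utilities of all agents outside $S$ are untouched, so tracking $\phi$ reduces to tracking the contribution of $S$ plus the slack term.

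First I would show that each iteration strictly increases $\phi$ by at least a fixed amount while we remain in $Z_1$. The worst-off agent $i_{\min}$ has utility $u_{\min}=u_{\min}(x,\mc{P})$; since $i=i^+_{\max}$ achieves $M_1(x_i,\mc{P}_i)=V(x,\mc{P})>u_{\min}$, the pair $\mc{P}_{ij}$ certainly satisfies $M_3(\emptyset,\mc{P}_{ij})-M_2(\emptyset,\mc{P}_{ij})\ge M_1(x_i,\mc{P}_i)=V(x,\mc{P})>u_{\min}$ by monotonicity/submodularity of $G$ (adding one agent to $\mc{P}_j$-augmented region gains at least as much as adding it to $\mc{P}_i$ alone), so the \emph{else} branch (Step b) is taken and room is genuinely made for a third agent. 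Comparing the new social welfare $\sum_{k\in S}u_k(x',\mc{P}')=M_3(\emptyset,\mc{P}_S)-M_1(x'_{i_+},\mc{P}'_{i_+})$ against the old $u_i(x,\mc{P})+u_j(x,\mc{P})\le M_2(\emptyset,\mc{P}_S)$, the welfare gain is at least $M_3-M_2-M_1(x'_{i_+},\mc{P}'_{i_+})$; meanwhile the vacated space created in $\mc{P}'_{i_+}$ guarantees $M_1(x'_{i_+},\mc{P}'_{i_+})$ is large — essentially it inherits the gap $M_3-M_2>u_{\min}$ — so the combined change in welfare plus the change in the slack term is strictly positive, bounded below by a constant depending only on the instance (e.g.\ the resolution $\epsilon$ from Section~\ref{sec:complexity}). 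I would also need to note that the slack term cannot jump up enough to be problematic: it is bounded by $V\le G(x)$ which is bounded, so it cannot absorb unbounded welfare.

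Then, because $\phi$ is bounded above (it is at most $\sup_x[\sum_i u_i(x,\mc{P})+V(x,\mc{P})]\le 2\max_x G(x)<\infty$ by Equation~\ref{eq:bb} and monotonicity) and strictly increases by at least a fixed positive amount at every iteration spent in $Z_1$, only finitely many consecutive iterations can stay in $Z_1$. Hence the algorithm must eventually produce a state $(x',\mc{P}')$ with $V(x',\mc{P}')\le u_{\min}(x',\mc{P}')$, i.e.\ $(x',\mc{P}')\in Z_2$, which is exactly the claim. (One should check the remaining bookkeeping: that the reallocation in Step b is well-defined — a suitable third location exists because $M_3-M_2>0$ — and that after the merge of $\mc{P}_l$ into $\mc{P}_{i_+}$ the new configuration is still a valid partition with the right neighborhood structure, so that $\phi$ is well-defined on it.)

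The main obstacle I anticipate is the careful accounting of the slack term $[V-u_{\min}]_+$ across the transition: $V$ and $u_{\min}$ are global quantities and a Step-b move changes two agents' utilities, so both $V$ and $u_{\min}$ may move in either direction, and one must rule out the scenario where welfare increases only slightly while the slack term decreases by more, leaving $\phi$ flat or decreasing. The resolution is to use that the freed region in $\mc{P}'_{i_+}$ has marginal value $M_1(x'_{i_+},\mc{P}'_{i_+})\ge M_3(\emptyset,\mc{P}_S)-M_2(\emptyset,\mc{P}_S)$ (the room was made precisely to host a unit contributing that much), together with submodularity to control how $V$ can change; pinning down this inequality cleanly, and handling the tie-breaking / virtual-agent construction in Lines~\ref{line:em2bs}--\ref{line:emnear} rigorously, is where the real work lies. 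The rest follows the template of the 1D argument in \cite{marden2016role}.
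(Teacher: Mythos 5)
Your overall strategy (show $\phi$ from Equation~\ref{eq:pote} cannot decrease while in $Z_1$, is bounded above, and hence the state must eventually leave $Z_1$ for $Z_2$) is the same as the paper's, but there is a genuine gap at the heart of your argument. You claim that in $Z_1$ the algorithm necessarily executes \textbf{Step b}, because ``$M_{3}(\emptyset,\mc{P}_{ij})-M_{2}(\emptyset,\mc{P}_{ij})\ge M_1(x_i,\mc{P}_i)=V(x,\mc{P})>u_{\min}$ by monotonicity/submodularity.'' This inequality does not follow from submodularity --- submodularity gives \emph{diminishing} marginal returns, i.e.\ it pushes in the opposite direction --- and it is false in general: the quantity $M_3-M_2$ is the marginal value of a third agent on top of an \emph{optimally placed} pair in $\mc{P}_{ij}$, which can be tiny even when $M_1(x_i;\mc{P}_i)$, the marginal value of an extra agent next to the \emph{current, possibly badly placed} $x_i$, is large. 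Moreover, Step a is also triggered whenever $j=i_{\min}$, regardless of $M_3-M_2$. Consequently your proof silently omits the entire Step-a case, which the paper must (and does) handle separately, splitting further into the subcases $i_{\min}\in S$ and $i_{\min}\notin S$: when $i_{\min}\in S$ one shows $M_2(\emptyset,\mc{P}_S)\ge\sum_{k\in S\setminus\{i_{\min},i^+_{\max}\}}u_k+\max_{y\in\mc{P}_{i^+_{\max}}}G(y,x_{i^+_{\max}};\mc{P}_{i^+_{\max}})$ by exhibiting a feasible reallocation, and when $i_{\min}\notin S$ one uses the Step-a condition $M_3-M_2\le u_{\min}$ together with $M_3(\emptyset,\mc{P}_S)\ge\sum_{k\in S}u_k+V$. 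Without these cases the lemma is not proved.

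A second, smaller issue: you assert that $\phi$ strictly increases by at least a fixed amount $\epsilon$ at every iteration spent in $Z_1$. The paper does not establish this and it is not clear it is true. In the Step-b analysis one only gets $\phi(x',\mc{P}')\ge\phi(x,\mc{P})$, with possible equality precisely when $V(x',\mc{P}')=M_1(x'_{i_+},\mc{P}'_{i_+})$ (i.e.\ $i_+$ becomes the new $i^+_{\max}$); the paper then needs an additional, separate argument that in this equality case the algorithm keeps making progress by repeatedly moving the vacated space toward $i_{\min}$ rather than stalling or cycling. Your intuition about the freed region inheriting the gap $M_3-M_2>u_{\min}$ (which gives $M_1(x'_{i_+},\mc{P}'_{i_+})>u_{\min}$ and hence keeps the state in $Z_1$ with a large slack term) is the right ingredient for the Step-b bookkeeping, but it does not by itself yield a uniform strict increase, so the termination argument needs the extra non-stalling step as well.
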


\begin{proof}
At first,
\begin{eqnarray}
\begin{array}{ll}
V(x,\mc{P})&= \underset{\substack{i\in\mc{N}}}{\max}\ M_1(x_i;\mc{P}_i)\\
&=M_1(x_{i_{\max}^+};\mc{P}_{i_{\max}^+})\\
&=\underset{\substack{y\in\mc{P}_{i_{\max}^+}}}{\max}G(y,x_{i_{\max}^+};\mc{P}_{i_{\max}^+})-G(x_{i_{\max}^+};\mc{P}_{i_{\max}^+})\\
&=\underset{\substack{y\in\mc{P}_{i_{\max}^+}}}{\max}G(y,x_{i_{\max}^+};\mc{P}_{i_{\max}^+})-u_{i_{\max}^+}(x,\mc{P}).
\end{array}
\end{eqnarray}

In case of step a, if $i_{\min}(x,\mc{P})\in  S$, 

\begin{eqnarray}
\begin{array}{lcl}
\underset{\substack{j\in S}}{\sum}u_j(x,\mc{P})+V(x,\mc{P})-u_{\min}(x,\mc{P}) 
& =& \underset{\substack{j\in S\setminus\{i_{\min}(x,\mc{P})\}}}{\sum}u_j(x,\mc{P})+V(x,\mc{P})\\
&=& \underset{\substack{j\in S\setminus\{i_{\min}(x,\mc{P}),i_{\max}^+(x,\mc{P})\}}}{\sum}u_j(x,\mc{P})+u_{i_{\max}^+}(x,\mc{P})\\
&&+\underset{\substack{y\in\mc{P}_{i_{\max}^+}}}{\max}G(y,x_{i_{\max}^+};\mc{P}_{i_{\max}^+})-u_{i_{\max}^+}(x,\mc{P})\\
&=& \underset{\substack{j\in S\setminus\{i_{\min}(x,\mc{P}),i_{\max}^+(x,\mc{P})\}}}{\sum}u_j(x,\mc{P})+\underset{\substack{y\in\mc{P}_{i_{\max}^+}}}{\max}G(y,x_{i_{\max}^+};\mc{P}_{i_{\max}^+}).
\end{array}
\end{eqnarray}

Then,
\begin{eqnarray}
\begin{array}{l}
\underset{\substack{j\in S}}{\sum}u_j(x',\mc{P}')\\
=M_2(\emptyset,\mc{P}_S)\\
=\underset{\substack{y_1,\ldots,y_k\in\mc{P}_S}}{\max}G(y_1,\ldots,y_k;\mc{P}_S)\\
\geq \underset{\substack{y_1,\ldots,y_{k-2}\in\mc{P}_S\setminus\{i_{\min}(x,\mc{P}),i_{\max}^+(x,\mc{P})\}}}{\max}G(y_1,\ldots,y_{k-2};\mc{P}_S)+\underset{\substack{y\in\mc{P}_{i_{\max}^+}}}{\max}G(y,x_{i_{\max}^+};\mc{P}_{i_{\max}^+})\\
\geq \underset{\substack{j\in S\setminus\{i_{\min}(x,\mc{P}),i_{\max}^+(x,\mc{P})\}}}{\sum}u_j(x,\mc{P})+\underset{\substack{y\in\mc{P}_{i_{\max}^+}}}{\max}G(y,x_{i_{\max}^+};\mc{P}_{i_{\max}^+})\\
=\underset{\substack{j\in S}}{\sum}u_j(x,\mc{P})+V(x,\mc{P})-u_{\min}(x,\mc{P}).
\end{array}
\end{eqnarray}

If $i_{\min}(x,\mc{P})\not\in  S$, since $M_3(\emptyset,\mc{P}_S)-M_2(\emptyset,\mc{P}_S)\leq u_{\min}(x,\mc{P})$,
\begin{eqnarray}
\begin{array}{ll}
\underset{\substack{j\in S}}{\sum}u_j(x',\mc{P}')&=M_2(\emptyset,\mc{P}_S)\\
&\geq M_3(\emptyset,\mc{P}_S)-u_{\min}(x,\mc{P}).
\end{array}
\label{eq:z1-2tmp}
\end{eqnarray}

Since $i_{\max}^+\in S$, $\underset{\substack{j\in S}}{\sum}u_j(x,\mc{P})+V(x,\mc{P})=\underset{\substack{j\in S}}{\sum}u_j(x,\mc{P})+M_1(x_{i_{\max}^+};\mc{P}_{i_{\max}^+})$ means a sum of utilities when $3$ agents share $ S$. Then, 

\begin{eqnarray}
\begin{array}{l}
M_3(\emptyset,\mc{P}_S)\geq\underset{\substack{j\in S}}{\sum}u_j(x,\mc{P})+V(x,\mc{P}).
\end{array}
\end{eqnarray}

With (\ref{eq:z1-2tmp}), we have

\begin{eqnarray}
\begin{array}{l}
\underset{\substack{j\in S}}{\sum}u_j(x',\mc{P}')\geq\underset{\substack{j\in S}}{\sum}u_j(x,\mc{P})+V(x,\mc{P})-u_{\min}(x,\mc{P}).
\end{array}
\end{eqnarray}

Note that $V(x,\mc{P})>u_{\min}(x,\mc{P})$ in $Z_1$.
Therefore, in both cases ($i_{\min}(x,\mc{P})\in  S$ and $i_{\min}(x,\mc{P})\not\in  S$), we have

\begin{eqnarray}
\begin{array}{lll}
\phi(x',\mc{P}')&=&\underset{\substack{i\in\mc{N}}}{\sum}u_i(x',\mc{P}')+[V(x',\mc{P}')-u_{\min}(x',\mc{P}')]_+\\
&\geq& \underset{\substack{i\in\mc{N}}}{\sum}u_i(x,\mc{P})+[V(x,\mc{P})-u_{\min}(x,\mc{P})]_+\\
&&+[V(x',\mc{P}')-u_{\min}(x',\mc{P}')]_+\\
&\geq&\phi(x,\mc{P}).
\end{array}
\end{eqnarray}

Note that the equality holds only when $V(x',\mc{P}')-u_{\min}(x',\mc{P}')\leq 0$, which is $(x',\mc{P}')\in Z_2$. As long as the state stays in $Z_1$, $\phi$ increases.

In case of step b, $i_{\min}(x,\mc{P})\not\in  S$ and  $M_3(\emptyset,\mc{P}_S)-M_2(\emptyset,\mc{P}_S)> u_{\min}(x,\mc{P})$. 
%Since $M_3(\emptyset,\mc{P}_S)-M_2(\emptyset,\mc{P}_S)$, which is the utility of an additional agent to $\mc{P}_S$, is larger than $u_{\min}(x,\mc{P})$, it means that all the agents in $ S$ have larger utility than $u_{\min}(x,\mc{P})$ after step b.

First, step b computes $(x',\mc{P}')$ as if an imaginary agent $l$ is added into $\mc{P}_S$, especially at $\mc{P}'_{i_+}$. Then,
\begin{eqnarray}
\begin{array}{l}
\underset{\substack{i\in S}}{\sum}u_i(x',\mc{P}')=M_3(\emptyset,\mc{P}_S)-M_1(x'_{i_+},\mc{P}'_{i_+})
\end{array}
\end{eqnarray}

Note that
\begin{eqnarray}
\begin{array}{ll}
M_1(x'_{i_+},\mc{P}'_{i_+})\\
=M_3(\emptyset,\mc{P}_S)-\underset{\substack{i\in S}}{\sum}u_i(x',\mc{P}')\\
\geq M_3(\emptyset,\mc{P}_S)-M_2(\emptyset,\mc{P}_S)\\
>u_{\min}(x,\mc{P}).
\end{array}
\label{eq:z1-2tmp2}
\end{eqnarray}

Meanwhile, since $i_{\max}^+\in S$,

\begin{eqnarray}
\begin{array}{l}
\underset{\substack{i\in S}}{\sum}u_i(x,\mc{P})+V(x,\mc{P})\\
=\underset{\substack{i\in S}}{\sum}u_i(x,\mc{P})+M_1(x_{i_{\max}^+};\mc{P}_{i_{\max}^+})\\
\leq M_3(\emptyset,\mc{P}_S)\\
=\underset{\substack{i\in S}}{\sum}u_i(x',\mc{P}')+M_1(x'_{i_+},\mc{P}'_{i_+}).
\end{array}
\label{eq:z1-2tmp3}
\end{eqnarray}

%\memo{if imax not in Ni, counterexample of z1-case2}

%\memo{jnearest != imax' when Ni other than imax are small. cannot prove by directly use imax'}

%\memo{in z1-case2, if updatemarden does not give priority to imin, no push and stuck: counterexp graph 7.4.11}

Then, since $i_{\min}(x,\mc{P})\not\in  S$,

\begin{eqnarray}
\begin{array}{lll}
\phi(x',\mc{P}')&=&\underset{\substack{i\in\mc{N}}}{\sum}u_i(x',\mc{P}')+[V(x',\mc{P}')-u_{\min}(x',\mc{P}')]_+\\
&=&\underset{\substack{i\in\mc{N}}}{\sum}u_i(x',\mc{P}')+[V(x',\mc{P}')-u_{\min}(x,\mc{P})]_+.
\end{array}
\end{eqnarray}

By the definition of $V$ and (\ref{eq:z1-2tmp2}),

\begin{eqnarray}
\begin{array}{l}
\geq\underset{\substack{i\in\mc{N}}}{\sum}u_i(x',\mc{P}')+M_1(x'_{i_+},\mc{P}'_{i_+})-u_{\min}(x,\mc{P}).
\end{array}
\end{eqnarray}

By (\ref{eq:z1-2tmp3}),

\begin{eqnarray}
\begin{array}{l}
\geq\underset{\substack{i\in\mc{N}}}{\sum}u_i(x,\mc{P})+V(x,\mc{P})-u_{\min}(x,\mc{P}).
\end{array}
\end{eqnarray}

Since $(x,\mc{P})\in Z_1$,

\begin{eqnarray}
\begin{array}{l}
=\underset{\substack{i\in\mc{N}}}{\sum}u_i(x,\mc{P})+[V(x,\mc{P})-u_{\min}(x,\mc{P})]_+=\phi(x,\mc{P}).
\end{array}
\end{eqnarray}

Note that if $V(x',\mc{P}')>M_1(x'_{i_+},\mc{P}'_{i_+})$, then $\phi(x',\mc{P}')>\phi(x,\mc{P})$. Also, from (\ref{eq:z1-2tmp2}), $(x',\mc{P}')\in Z_1$.

If $V(x',\mc{P}')=M_1(x'_{i_+},\mc{P}'_{i_+})$, it means  $i_+=i_{\max}^+(x',\mc{P}')$. Then, $i_+$ will be in the new neighborhood (say$\mc{N}'_i$) in the next iteration. Since $i_{\min}(x,\mc{P})\not\in  S$ and $i_+$ (the nearest agent to $i_{\min}$) has additional space $\mc{P}_l$ for other agent to come in, Algorithm \ref{alg:extmarden} can repeat this process to make $i_{\min}(x,\mc{P})$ move towards larger space without being stuck.

Then, as long as $(x,\mc{P})\in Z_1$, $\phi(x,\mc{P})$ increases. Since the solution space is finite, $(x,\mc{P})$ reaches $Z_2$ in the end.

%\memo{$\phi$ cannot increase infinitely in $Z_1$? > $\phi$ is max in $z4$. no phi can have local min}
\end{proof}

\begin{lemma}
If $(x,\mc{P})\in Z_2\setminus Z_3$, then Algorithm \ref{alg:extmarden} will produce a sequence of states that results in a new state $(x',\mc{P}')\in Z_3$. 
\label{thm:z2z3}
\end{lemma}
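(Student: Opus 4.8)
The plan is to mirror the structure of Lemma~\ref{thm:z1z2}: show that whenever $(x,\mc{P})\in Z_2\setminus Z_3$ the potential $\phi$ strictly increases (or at least never decreases), and that strict increase occurs until the state reaches $Z_3$; since the state space is finite this forces eventual arrival in $Z_3$. Concretely, because $(x,\mc{P})\notin Z_3$, by the definition in Equation~\ref{eq:Z3} there is at least one pair $(i,j)\in\neigh(\mc{P})$ with $M_3(\emptyset,\mc{P}_{ij})-M_2(\emptyset,\mc{P}_{ij})>u_{\min}(x,\mc{P})$. When the algorithm picks such a pair (and since we are in $Z_2$, not $Z_1$, the pair is chosen at Line~\ref{line:empickup}; one should note that the scheduling ensures such a pair is eventually selected), the test at Line~\ref{line:emif} fails and \textbf{Step b} is executed. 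So the whole argument reduces to analyzing the effect of Step b on $\phi$ in the regime $i_{\min}(x,\mc{P})\notin S$ and $M_3(\emptyset,\mc{P}_S)-M_2(\emptyset,\mc{P}_S)>u_{\min}(x,\mc{P})$.

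First I would reuse the two identities already established at the start of the appendix: after Step b, $\sum_{i\in S}u_i(x',\mc{P}')=M_3(\emptyset,\mc{P}_S)-M_1(x'_{i_+},\mc{P}'_{i_+})$, and hence $M_1(x'_{i_+},\mc{P}'_{i_+})=M_3(\emptyset,\mc{P}_S)-\sum_{i\in S}u_i(x',\mc{P}')\ge M_3(\emptyset,\mc{P}_S)-M_2(\emptyset,\mc{P}_S)>u_{\min}(x,\mc{P})$, exactly as in Equation~\ref{eq:z1-2tmp2}. Since $i_{\min}(x,\mc{P})\notin S$, $u_{\min}$ is unchanged by the update, so $u_{\min}(x',\mc{P}')=u_{\min}(x,\mc{P})$, and therefore $V(x',\mc{P}')\ge M_1(x'_{i_+},\mc{P}'_{i_+})>u_{\min}(x',\mc{P}')$, i.e. the new state is again in $Z_1$ — or, if the particular pair did not raise $V$ above threshold but another pair still violates $Z_3$, the state is still in $Z_2\setminus Z_3$. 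The key monotonicity estimate is then
\begin{eqnarray}
\begin{array}{l}
\phi(x',\mc{P}')=\underset{\substack{i\in\mc{N}}}{\sum}u_i(x',\mc{P}')+[V(x',\mc{P}')-u_{\min}(x',\mc{P}')]_+\\
\geq \underset{\substack{i\in S}}{\sum}u_i(x',\mc{P}')+\underset{\substack{i\in\mc{N}\setminus S}}{\sum}u_i(x,\mc{P})+M_1(x'_{i_+},\mc{P}'_{i_+})-u_{\min}(x,\mc{P})\\
=\underset{\substack{i\in\mc{N}\setminus S}}{\sum}u_i(x,\mc{P})+M_3(\emptyset,\mc{P}_S)-u_{\min}(x,\mc{P}),
\end{array}
\end{eqnarray}
while $\phi(x,\mc{P})=\sum_{i\in\mc{N}}u_i(x,\mc{P})+[V(x,\mc{P})-u_{\min}(x,\mc{P})]_+\le\sum_{i\in\mc{N}\setminus S}u_i(x,\mc{P})+M_2(\emptyset,\mc{P}_S)$ because $\sum_{i\in S}u_i(x,\mc{P})\le M_2(\emptyset,\mc{P}_S)$ and, in $Z_2$, $[V-u_{\min}]_+=0$. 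Subtracting gives $\phi(x',\mc{P}')-\phi(x,\mc{P})\ge M_3(\emptyset,\mc{P}_S)-M_2(\emptyset,\mc{P}_S)-u_{\min}(x,\mc{P})>0$, a strict increase.

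Thus each invocation of Step b on a $Z_3$-violating pair strictly increases $\phi$, and since $\phi$ is bounded above (by $\phi(x^*,\mc{P}^*)$) and the state space is finite, only finitely many such invocations can occur; after the last one, no pair violates Equation~\ref{eq:Z3}, so the state lies in $Z_3$ (it remains in $Z_2$ throughout because leaving $Z_2$ would mean re-entering $Z_1$, which by Lemma~\ref{thm:z1z2} is itself left again without ever decreasing $\phi$, so the interleaving still terminates). The main obstacle I anticipate is not the potential computation — that is a routine adaptation of the $Z_1$ case — but the bookkeeping about \emph{which} pair gets activated: one must argue that the scheduler (the ``pick $i\in\mc{N}$'' step together with the communication-tree reconstruction) cannot indefinitely avoid the $Z_3$-violating pairs, and that the auxiliary $\tilde i_{\min}$/$i_+$ construction in Lines~\ref{line:emproxy}--\ref{line:emnear} genuinely realizes the claimed split so that the identity $\sum_{i\in S}u_i(x',\mc{P}')=M_3(\emptyset,\mc{P}_S)-M_1(x'_{i_+},\mc{P}'_{i_+})$ holds with $\mc{P}'_{i_+}$ being the merged region actually handed back to the system. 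Handling the edge case $V(x',\mc{P}')=M_1(x'_{i_+},\mc{P}'_{i_+})$ (equality rather than strict inequality in the $[\cdot]_+$ term) is the same subtlety already flagged at the end of Lemma~\ref{thm:z1z2}, and the same argument — $i_+$ becomes $i^+_{\max}$ next round and carries the slack $\mc{P}_l$ toward $i_{\min}$ — applies verbatim.
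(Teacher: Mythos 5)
Your proposal is correct and follows essentially the same route as the paper's proof: the same potential argument, the same key chain $\sum_{i\in S}u_i(x',\mc{P}')+V(x',\mc{P}')\geq M_3(\emptyset,\mc{P}_S)>M_2(\emptyset,\mc{P}_S)+u_{\min}(x,\mc{P})$ (your use of $V(x')\geq M_1(x'_{i_+},\mc{P}'_{i_+})$ together with the Step-b identity is exactly the paper's inequality), and the same finiteness/interleaving-with-$Z_1$ conclusion. The only thing you compress away is the explicit check that a Step-a update (triggered when the randomly picked pair happens to satisfy the local $Z_3$ condition) does not decrease $\phi$ — the paper spells this out via $\phi(x')\geq M_2(\emptyset,\mc{P}_S)+\sum_{i\in\mc{N}\setminus S}u_i(x)\geq\phi(x)$ — and the scheduling/fairness caveat you flag is glossed over to the same degree in the paper itself.
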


\begin{proof}
Since $(x,\mc{P})\in Z_2$, $V(x,\mc{P})\leq u_{\min}(x,\mc{P})$.
In case of step a,
\begin{eqnarray}
\begin{array}{l}
\phi(x',\mc{P}')\\
=\underset{\substack{i\in\mc{N}}}{\sum}u_i(x',\mc{P}')+[V(x',\mc{P}')-u_{\min}(x',\mc{P}')]_+\\
\geq \underset{\substack{i\in\mc{N}}}{\sum}u_i(x',\mc{P}')\\
=M_2(\emptyset,\mc{P}_S)+\underset{\substack{i\in\mc{N}\setminus S}}{\sum}u_i(x,\mc{P})\\
\geq \underset{\substack{i\in\mc{N}}}{\sum}u_i(x,\mc{P})\\
=\underset{\substack{i\in\mc{N}}}{\sum}u_i(x,\mc{P})+[V(x,\mc{P})-u_{\min}(x,\mc{P})]_+\\
=\phi(x,\mc{P}).
\end{array}
\end{eqnarray}

Note that the equality holds only when $M_2(\emptyset,\mc{P}_S)=\underset{\substack{i\in S}}{\sum}u_i(x,\mc{P})$. This means that $\phi$ increases as long as the local state in $ S$ does not satisfy the condition of $Z_4$.

In case of step b,

\begin{eqnarray}
\begin{array}{l}
M_3(\emptyset,\mc{P}_S)\\
=\underset{\substack{i\in S}}{\sum}u_i(x',\mc{P}')+M_1(x'_{jnearest};\mc{P}'_{jnearest})\\
\leq\underset{\substack{i\in S}}{\sum}u_i(x',\mc{P}')+V(x',\mc{P}').
\end{array}
\end{eqnarray}

Since $(x,\mc{P})\in Z_2\setminus Z_3$, $M_3(\emptyset,\mc{P}_S)-M_2(\emptyset,\mc{P}_S)> u_{\min}(x,\mc{P})$. Then,

\begin{eqnarray}
\begin{array}{l}
\underset{\substack{i\in S}}{\sum}u_i(x',\mc{P}')+V(x',\mc{P}')\\
\geq M_3(\emptyset,\mc{P}_S)\\
>M_2(\emptyset,\mc{P}_S)+u_{\min}(x,\mc{P})\\
\geq \underset{\substack{i\in S}}{\sum}u_i(x,\mc{P})+u_{\min}(x,\mc{P}).
\end{array}
\end{eqnarray}

Then, since $i_{\min}(x,\mc{P})\not\in  S$,

\begin{eqnarray}
\begin{array}{l}
\phi(x',\mc{P}')\\
=\underset{\substack{i\in\mc{N}}}{\sum}u_i(x',\mc{P}')+[V(x',\mc{P}')-u_{\min}(x',\mc{P}')]_+\\
=\underset{\substack{i\in\mc{N}}}{\sum}u_i(x',\mc{P}')+[V(x',\mc{P}')-u_{\min}(x,\mc{P})]_+\\
\geq \underset{\substack{i\in\mc{N}}}{\sum}u_i(x',\mc{P}')+V(x',\mc{P}')-u_{\min}(x,\mc{P})\\
>\underset{\substack{i\in\mc{N}}}{\sum}u_i(x,\mc{P})\\
=\underset{\substack{i\in\mc{N}}}{\sum}u_i(x,\mc{P})+[V(x,\mc{P})-u_{\min}(x,\mc{P})]_+\\
=\phi(x,\mc{P}).
\end{array}
\end{eqnarray}

Then $\phi$ always increases as long as $(x,\mc{P})\in Z_2\setminus Z_3$. Note that $(x',\mc{P}')$ can be in $Z_1$. However, according to Lemma \ref{thm:z1z2}, the state will come back to $Z_2$ without cycle. Since $(Z_2\setminus Z_3) \cup Z_1$ is finite, $(x,\mc{P})$ reaches $Z_3$ in the end.
\end{proof}

\begin{lemma}
If $(x,\mc{P})\in Z_3$, then Algorithm \ref{alg:extmarden} will produce a sequence of states that results in a new state $(x',\mc{P}')\in Z_4$.
\label{thm:z3z4}
\end{lemma}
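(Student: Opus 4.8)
The plan is to show that once the state lies in $Z_3$, every iteration of Algorithm~\ref{alg:extmarden} executes Step~a on the activated pair, never decreases the potential $\phi$ of~(\ref{eq:pote}), and strictly increases it unless the activated pair $S=\{i,j\}$ already satisfies $u_i(x,\mc{P})+u_j(x,\mc{P})=M_2(\emptyset,\mc{P}_{ij})$; since $\phi$ can take only finitely many values, this drives the local optimality condition~(\ref{eq:Z4}) to hold for every neighboring pair. First I would note that in $Z_3\subseteq Z_2$ we have $V(x,\mc{P})\le u_{\min}(x,\mc{P})$ and $M_3(\emptyset,\mc{P}_{ij})-M_2(\emptyset,\mc{P}_{ij})\le u_{\min}(x,\mc{P})$ for all $(i,j)\in\neigh(\mc{P})$, so whenever a pair is activated the test on Line~\ref{line:emif} holds and the algorithm reallocates $\mc{P}_{ij}$ optimally for two agents.

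Next I would invoke the Step~a computation already performed in the proof of Lemma~\ref{thm:z2z3}: there it is shown that $\phi(x',\mc{P}')\ge\phi(x,\mc{P})$, with equality exactly when $M_2(\emptyset,\mc{P}_S)=\sum_{k\in S}u_k(x,\mc{P})$, that is, when $S$ already meets~(\ref{eq:Z4}). Moreover, after Step~a we have $\sum_{k\in S}u_k(x',\mc{P}')=M_2(\emptyset,\mc{P}'_S)$ by construction, so the activated pair becomes locally optimal. If the move leaves the state in $Z_3$ the argument can simply iterate; if it instead lands in $Z_1$ or $Z_2\setminus Z_3$, Lemmas~\ref{thm:z1z2} and~\ref{thm:z2z3} return it to $Z_3$ along a trajectory on which $\phi$ is nondecreasing, without revisiting a state.

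To make the above work I would verify that a Step~a move performed inside $Z_3$ preserves membership in $Z_3$ (or at worst causes only the controlled excursions just mentioned). The key estimate is that for $k\in S$ the single-agent gain satisfies $M_1(x'_k,\mc{P}'_k)\le M_3(\emptyset,\mc{P}_S)-M_2(\emptyset,\mc{P}_S)\le u_{\min}(x,\mc{P})$, since appending an optimally placed agent to the allocation obtained after Step~a within $\mc{P}'_k$ is a feasible three-agent allocation of $\mc{P}_S$; combined with the fact that $\phi$ does not decrease, this bounds $V$ and the gaps $M_3-M_2$ on the affected pairs by $u_{\min}$, keeping $V\le u_{\min}$. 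This is the delicate step of the proof --- arguing that optimally re-partitioning one pair's combined region cannot raise $V$ above $u_{\min}$ or open a large $M_3-M_2$ gap for an adjacent pair --- and is where most of the care is needed.

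Finally, since $x$ ranges over the finite set of allocations, $\mc{P}$ over the finitely many partitions the algorithm can generate, and $\phi$ is bounded above (using $\sum_i u_i(x,\mc{P})\le G(x)\le G(x^*)$ from~(\ref{eq:bb}) plus a bounded correction term), $\phi$ takes finitely many values and can therefore strictly increase only finitely often. Past that point no activated pair can have $u_i+u_j<M_2(\emptyset,\mc{P}_{ij})$, so under the fair activation rule of Line~\ref{line:empickup} --- every agent, and hence every neighboring pair, is eventually chosen --- condition~(\ref{eq:Z4}) holds for all $(i,j)\in\neigh(\mc{P})$ with the state still in $Z_3$, i.e.\ the algorithm reaches $(x',\mc{P}')\in Z_4$ and the while-loop exits.
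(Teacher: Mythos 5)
Your proof follows essentially the same route as the paper's: in $Z_3$ the test on Line~\ref{line:emif} always triggers Step~a, the potential $\phi$ of~(\ref{eq:pote}) is non-decreasing and strictly increases unless the activated pair already satisfies the $Z_4$ condition $\sum_{k\in S}u_k=M_2(\emptyset,\mc{P}_S)$, and finiteness of the state space then forces termination in $Z_4$. You are in fact more careful than the paper on the one delicate point --- whether a Step~a move can leave $Z_3$ --- which the paper simply asserts (``$V$ always decreases and $u_{\min}$ always increases'') while you supply the bound $M_1(x'_k,\mc{P}'_k)\le M_3(\emptyset,\mc{P}_S)-M_2(\emptyset,\mc{P}_S)\le u_{\min}(x,\mc{P})$ and, for any residual excursion, fall back on Lemmas~\ref{thm:z1z2} and~\ref{thm:z2z3}.
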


\begin{proof}
By definition,

\begin{eqnarray}
\begin{array}{l}
M_2(\emptyset,\mc{P}_S)\\
=\underset{\substack{y_1,\ldots,y_k\in\mc{P}_S}}{\max}G(y_1,\ldots,y_k;\mc{P}_S)\\
\geq G(x;\mc{P}_S).\\
\geq \underset{\substack{i\in S}}{\sum}u_i(x,\mc{P})\\
\end{array}
\end{eqnarray}

Since $(x,\mc{P})\in Z_3$, Algorithm \ref{alg:extmarden} always runs step a. In this case, $V$ always decreases and $u_{\min}$ always increases. Then $(x,\mc{P}), (x',\mc{P}')\in Z_2$. Then,

\begin{eqnarray}
\begin{array}{l}
\phi(x',\mc{P}')\\
=\underset{\substack{i\in\mc{N}}}{\sum}u_i(x',\mc{P}')+[V(x',\mc{P}')-u_{\min}(x',\mc{P}')]_+\\
=\underset{\substack{i\in\mc{N}}}{\sum}u_i(x',\mc{P}')\\
=M_2(\emptyset,\mc{P}_S)+\underset{\substack{i\in\mc{N}\setminus S}}{\sum}u_i(x,\mc{P})\\
\geq \underset{\substack{i\in\mc{N}}}{\sum}u_i(x,\mc{P})\\
=\phi(x,\mc{P}).
\end{array}
\end{eqnarray}

Note that the equality holds only when $M_2(\emptyset,\mc{P}_S)=\underset{\substack{i\in S}}{\sum}u_i(x,\mc{P})$. This means that $\phi(x,\mc{P})$ increases as long as $(x,\mc{P})\in Z_3$,  until $(x,\mc{P})$ reaches $Z_4$.
\end{proof}

\begin{proof}[Proof of Theorem \ref{thm:main} (Convergence)]
It follows Lemma \ref{thm:z1z2} to \ref{thm:z3z4}.
\end{proof}

Note: The proofs above hold even if we replace $ S$ with the general definition of neighborhood $\mc{N}_i$.

\section{Proof of Theorem \ref{thm:main} (Approximation ratio)}
\label{sec:proofpoa}

\begin{lemma}
If $(x;\mc{P}) \in Z_4$, $M_1(x;\mc{P})\leq u_{\min}(x,\mc{P})$.
\label{thm:umin}
\end{lemma}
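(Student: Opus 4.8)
The plan is to exploit the nesting $Z_4\subseteq Z_3\subseteq Z_2$. The inclusion in $Z_2$ already gives, for free, $V(x,\mc{P})=\max_{i\in\mc{N}}M_1(x_i;\mc{P}_i)\le u_{\min}(x,\mc{P})$, i.e.\ no agent can gain more than $u_{\min}$ by absorbing a new agent \emph{inside its own cell}; the whole content of the lemma is to upgrade this to a new agent placed \emph{anywhere} in $C$. So I would fix a location $y^\ast\in C$ attaining $M_1(x;\mc{P})=G(y^\ast,x;C)-G(x;C)$ (if $y^\ast$ already hosts an agent the gain is $0\le u_{\min}$ and we are done), let $k$ be the agent with $y^\ast\in\mc{P}_k$, and reduce the marginal contribution of $y^\ast$ to a quantity local to one neighbouring pair, to which the $Z_3$/$Z_4$ inequalities apply directly.

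First I would pin down the structure of $\mc{P}$ in the terminal state: that $\mc{P}$ is the geodesic Voronoi partition of $C$ induced by $x$, so that the coverage of any cell $c\in\mc{P}_m$ equals $v_c\,g(|x_m-c|)$. This is an invariant preserved by the $\mc{V}_k(\cdot)$ re-partitioning steps of Algorithm~\ref{alg:extmarden}, and it is consistent with $Z_4$: since $M_2(\emptyset,\mc{P}_{ij})\ge G(x_i,x_j;\mc{P}_{ij})\ge u_i(x,\mc{P})+u_j(x,\mc{P})$ holds unconditionally, the equality in (\ref{eq:Z4}) forces $\{\mc{P}_i,\mc{P}_j\}$ to be exactly the Voronoi split of $\mc{P}_{ij}$ under $(x_i,x_j)$, for every $(i,j)\in\neigh(\mc{P})$. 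Writing $m(c)$ for the owner of $c$ and $R=\{c\in C:\ v_c\,g(|y^\ast-c|)>v_c\,g(|x_{m(c)}-c|)\}$ for the cells actually improved by $y^\ast$, so that $M_1(x;\mc{P})=\sum_{c\in R}\bigl(v_c\,g(|y^\ast-c|)-v_c\,g(|x_{m(c)}-c|)\bigr)$, the decisive structural step is to show $R\subseteq\mc{P}_{kj}$ for a suitably chosen neighbour $j$ of $k$ (I would take $j$ to be the parent of $k$ in the communication tree, and any child of $k$ if $k=i_{\min}$).

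Granting the localization $R\subseteq\mc{P}_{kj}$, the rest is short: $M_1(x;\mc{P})=G(y^\ast,x_k,x_j;\mc{P}_{kj})-G(x_k,x_j;\mc{P}_{kj})\le M_3(\emptyset,\mc{P}_{kj})-M_2(\emptyset,\mc{P}_{kj})$, the equality because on $\mc{P}_{kj}$ the current coverage is realized by $x_k,x_j$ so the improvement there is exactly that of the three-agent placement $(y^\ast,x_k,x_j)$, the inequality because $(y^\ast,x_k,x_j)$ is one feasible placement of three agents in $\mc{P}_{kj}$ while $(x_k,x_j)$ is $2$-optimal there by $Z_4$; and $M_3(\emptyset,\mc{P}_{kj})-M_2(\emptyset,\mc{P}_{kj})\le u_{\min}(x,\mc{P})$ is precisely the defining inequality (\ref{eq:Z3}) of $Z_3$. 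The portion of $R$ lying inside $\mc{P}_k$ is in any case bounded by $M_1(x_k;\mc{P}_k)\le V(x,\mc{P})\le u_{\min}$, which is what lets the estimate close even when $R$ spills from $\mc{P}_k$ into $\mc{P}_j$. Finally, this lemma feeds the approximation ratio in the expected way: by submodularity, adding the agents of an optimal profile $x^\ast$ one at a time gains at most $M_1(x;\mc{P})$ each, so $G(x^\ast)\le G(x,x^\ast)\le G(x)+n\,M_1(x;\mc{P})\le G(x)+n\,u_{\min}(x,\mc{P})\le 2G(x)$, the last step using $n\,u_{\min}(x,\mc{P})\le\sum_{i}u_i(x,\mc{P})\le G(x)$ from (\ref{eq:bb}).

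The hard part will be the localization claim $R\subseteq\mc{P}_{kj}$. A priori a well-placed newcomer can siphon coverage from several of $\mc{P}_k$'s tree-neighbours at once, and bounding those pieces by separate neighbour-pair inequalities only yields a bound of the shape (number of affected regions)$\,\cdot\,u_{\min}$, which is too weak. The tight constant has to come from the fact that in $Z_4$ \emph{every} adjacent pair is $2$-optimal simultaneously, so the slack a newcomer can exploit in any direction is already capped by the same quantity $u_{\min}$ and cannot be accumulated, together with the geodesic-Voronoi structure of $\mc{P}$, which should force the cells $y^\ast$ genuinely improves to lie on geodesics from $y^\ast$ that never leave $\mc{P}_k\cup\mc{P}_j$. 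Ties in $g$ and in the lexicographic Voronoi assignment, and the boundary case $k=i_{\min}$ (root of the communication tree), are routine but need to be dispatched explicitly.
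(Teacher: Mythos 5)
Your proposal follows essentially the same route as the paper's proof: in $Z_4$ the pair $(x_i,x_j)$ realizes $M_2(\emptyset,\mc{P}_{ij})$ exactly, so the gain from a third agent placed in $\mc{P}_{ij}$ is at most $M_3(\emptyset,\mc{P}_{ij})-M_2(\emptyset,\mc{P}_{ij})\leq u_{\min}(x,\mc{P})$ by the $Z_3$ condition, and the global quantity $M_1(x;\mc{P})$ is then reduced to a maximum of such pair-local quantities. The one point worth noting is that the ``hard part'' you flag --- the localization claim that the cells improved by the newcomer $y^\ast$ all lie in $\mc{P}_{kj}$ for a single neighbouring pair --- is precisely the step the paper itself does not argue: its proof ends by simply asserting the equality $\max_{(i,j)}M_1(x_{\{i,j\}};\mc{P}_{ij})=M_1(x;\mc{P})$, whereas you correctly observe that cells outside $\mc{P}_{kj}$ could in principle also be improved by $y^\ast$ and that summing separate pair bounds would only give a multiple of $u_{\min}$. (The saving grace, which neither you nor the paper spells out, is that for cells inside $\mc{P}_{kj}$ the pair-local gain computed against only $x_k,x_j$ dominates the true gain computed against all agents, so only the spill-over outside $\mc{P}_{kj}$ remains to be controlled.) So your attempt is faithful to the paper's argument and, if anything, more candid about where the remaining work lies; to count as complete it would need the localization step that the paper takes for granted.
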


\begin{proof}
Since $(x;\mc{P}) \in Z_4\subseteq Z3$, $M_3(\emptyset,\mc{P}_S)-M_2(\emptyset,\mc{P}_S)\leq u_{\min}(x,\mc{P}),~\forall i\in\mc{N}$. Also, $M_2(\emptyset,\mc{P}_S)=\underset{\substack{i\in S}}{\sum}u_i(x,\mc{P})=G(x,\mc{P}_S),~\forall i\in\mc{N}$. 
%Let's say that $l\not\in\mc{N}$ is an additional agent and $x_l=\underset{\substack{y\in\mc{P}}}{\argmax}G(y,x;\mc{P})-G(x;\mc{P})$. Then,
Then, forall $i\in\mc{N}$,

\begin{eqnarray}
\begin{array}{l}
M_1(x_{\{i,j\}},\mc{P}_S)\\
=\underset{\substack{y\in\mc{P}_S}}{\max}G(y,x_{\{i,j\}};\mc{P}_S)-G(x_{\{i,j\}},\mc{P}_S)\\
=\underset{\substack{y\in\mc{P}_S}}{\max}G(y,x_{\{i,j\}};\mc{P}_S)-\underset{\substack{x_{\{i,j\}}\in\mc{P}_S}}{\max}G(x_{\{i,j\}},\mc{P}_S)\\
\leq\underset{\substack{y_1,\ldots,y_{3}\in\mc{P}_S}}{\max}G(y_1,\ldots,y_{3};\mc{P}_S)-\underset{\substack{y_1,y_2\in\mc{P}_S}}{\max}G(y_1,y_2;\mc{P}_S)\\
=M_3(\emptyset,\mc{P}_S)-M_2(\emptyset,\mc{P}_S)\\
\leq u_{\min}(x,\mc{P}).
\end{array}
\end{eqnarray}

\begin{comment}    
\begin{eqnarray}
\begin{array}{l}
u_{\min}(x,\mc{P})\\
\geq M_3(\emptyset,\mc{P}_S)-M_2(\emptyset,\mc{P}_S)\\
=\underset{\substack{y_1,\ldots,y_{3}\in\mc{P}_S}}{\max}G(y_1,\ldots,y_{3},x;\mc{P}_S)-G(x,\mc{P}_S)\\
\geq \underset{\substack{y\in\mc{P}_S}}{\max}G(y,x;\mc{P}_S)-G(x,\mc{P}_S)\\
=M_1(x;\mc{P}_S).
\end{array}
\end{eqnarray}
\end{comment}

Then, $u_{\min}(x,\mc{P})\geq \underset{\substack{i\in\mc{N}}}{\max}M_1(x_{\{i,j\}};\mc{P}_S)=M_1(x;\mc{P})$.
\end{proof}

\begin{lemma}
$G(x;\mc{C})$ is submodular in $x$. For any allocation $x\in\mc{X}$, agent sets $S\subseteq T\subseteq \mc{N}$, and agent $i\in S$,
\begin{eqnarray}
\begin{array}{l}
G(x_S;\mc{C})-G(x_{S\setminus \{i\}};\mc{C})\geq G(x_T;\mc{C})-G(x_{T\setminus \{i\}};\mc{C}).
\end{array}
\end{eqnarray}
\label{thm:submodG}
\end{lemma}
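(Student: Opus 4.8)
The plan is to reduce the claim to a pointwise (per-resource) argument, exploiting that $G(x;\mc{C})$ in Equation~\ref{eq:Gdisc} is a sum over $c\in\mc{C}$ of terms of the form $\max_{i} w_{ic}$ with nonnegative weights $w_{ic}:=v_c\, g(|x_i-c|)\ge 0$. For a subset $A\subseteq\mc{N}$ I would write $h_A(c):=\max_{i\in A} w_{ic}$, with the convention $h_\emptyset(c):=0$ (consistent with $G(\cdot;\mc{C})$ being nonnegative and an empty allocation contributing nothing), so that $G(x_A;\mc{C})=\sum_{c\in\mc{C}} h_A(c)$. Since the weights are nonnegative, $A\mapsto h_A(c)$ is non-decreasing in the subset order, which already delivers the monotonicity part of Equation~\ref{eq:submodular}; the real content is the diminishing-returns inequality.

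The first step is the elementary identity: for any $A\subseteq\mc{N}$ and $i\notin A$,
\[
h_{A\cup\{i\}}(c)-h_A(c)=\max\{w_{ic},h_A(c)\}-h_A(c)=[\,w_{ic}-h_A(c)\,]_+ ,
\]
where $[\,\cdot\,]_+$ denotes the positive part. The right-hand side is non-increasing in $h_A(c)$, hence non-increasing in $A$. Applying this with $A=S\setminus\{i\}$ and with $A=T\setminus\{i\}$, and noting that $i\in S\subseteq T$ implies $S\setminus\{i\}\subseteq T\setminus\{i\}$ with $i$ in neither set, we get $h_{S\setminus\{i\}}(c)\le h_{T\setminus\{i\}}(c)$ and therefore
\[
h_S(c)-h_{S\setminus\{i\}}(c)=[\,w_{ic}-h_{S\setminus\{i\}}(c)\,]_+\ \ge\ [\,w_{ic}-h_{T\setminus\{i\}}(c)\,]_+=h_T(c)-h_{T\setminus\{i\}}(c).
\]

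Finally I would sum this inequality over all $c\in\mc{C}$ and use $G(x_A;\mc{C})=\sum_{c}h_A(c)$ to obtain $G(x_S;\mc{C})-G(x_{S\setminus\{i\}};\mc{C})\ge G(x_T;\mc{C})-G(x_{T\setminus\{i\}};\mc{C})$, which is exactly the asserted submodularity of $x\mapsto G(x;\mc{C})$; taking $\mc{C}=C$ recovers the set-function submodularity of $G$ invoked in Equation~\ref{eq:submodular}. I do not expect any genuine obstacle: the only points needing care are the definitional conventions for partial allocations $x_A$ (the maximum over agents being taken only over $A$) and the degenerate case $S\setminus\{i\}=\emptyset$, both handled cleanly by the convention $h_\emptyset(c)=0$, and the whole argument hinges on nothing more than the nonnegativity of $v_c$ and $g$.
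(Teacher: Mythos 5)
Your proof is correct and complete. It rests on the same underlying mechanism as the paper's own (extremely terse) proof, but you make it fully explicit: the paper's entire argument is the one-liner ``Because $\mc{P}_i(x_S;\mc{C})\subseteq \mc{P}_i(x_T;\mc{C})$,'' i.e.\ the marginal contribution of agent $i$ is supported on the region where $i$ is the best-placed agent, and that region (and the gap to the runner-up) can only shrink as more agents are added. (Note the paper's inclusion is apparently written backwards --- with $S\subseteq T$ one should have $\mc{P}_i(x_T;\mc{C})\subseteq\mc{P}_i(x_S;\mc{C})$ --- and no further justification is given.) Your pointwise identity $h_{A\cup\{i\}}(c)-h_A(c)=[\,w_{ic}-h_A(c)\,]_+$, combined with the monotonicity of $h_A(c)$ in $A$, is exactly the rigorous content behind that sentence: the set where the positive part is nonzero is the region where $i$ strictly dominates, and its value is the gap to the second-best coverage. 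So the two proofs are the same idea at different levels of detail; yours is the one that would actually survive scrutiny, and it also cleanly handles the edge case $S\setminus\{i\}=\emptyset$ and delivers the monotonicity half of Equation~\ref{eq:submodular} for free.
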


\begin{proof}
Because $\mc{P}_i(x_S;\mc{C})\subseteq \mc{P}_i(x_T;\mc{C})$. This is because of agents $T-S$.
\end{proof}

\begin{lemma}
$M_k(x,\mc{C})$ is monotone decreasing in $x$.
\begin{eqnarray}
\begin{array}{l}
M_k(x\setminus \{x_i\};\mc{C})\geq M_k(x;\mc{C}), \forall x_i\in\mc{C}.
\end{array}
\end{eqnarray}
\end{lemma}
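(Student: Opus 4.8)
The plan is to deduce this from the submodularity of $G$ (Lemma \ref{thm:submodG}) via the standard ``fix the optimal witness'' argument for marginal-gain maximization. Intuitively, removing an occupied location can only enlarge the room left for the $k$ newly added agents, so the best achievable gain cannot decrease.

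Concretely, I would first fix an optimal placement for the \emph{unreduced} base: let $Y^\star=(y_1^\star,\dots,y_k^\star)=B_k(x;\mc{C})$, so that $M_k(x;\mc{C})=G(Y^\star,x;\mc{C})-G(x;\mc{C})$. Since $Y^\star$ is itself a (possibly non-optimal) choice of $k$ new locations relative to the reduced base $x\setminus\{x_i\}$, the definition of $M_k$ gives directly
\[
M_k(x\setminus\{x_i\};\mc{C})\ \ge\ G(Y^\star,x\setminus\{x_i\};\mc{C})-G(x\setminus\{x_i\};\mc{C}).
\]
It therefore suffices to show that $G(Y^\star,x\setminus\{x_i\};\mc{C})-G(x\setminus\{x_i\};\mc{C})\ge G(Y^\star,x;\mc{C})-G(x;\mc{C})$, i.e.\ that the marginal value of inserting the block $Y^\star$ is no smaller when fewer agents are already present. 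Writing the allocation $x$ as ``$x\setminus\{x_i\}$ together with the single location $x_i$'', this is precisely the statement that the marginal contribution of the set $Y^\star$ is non-increasing as the base set grows from $x\setminus\{x_i\}$ to $x$ --- a standard consequence of the submodularity of $G$ established in Lemma \ref{thm:submodG} (obtained from the single-element version by telescoping over the elements of $Y^\star$). Chaining this with the inequality above yields $M_k(x\setminus\{x_i\};\mc{C})\ge M_k(x;\mc{C})$.

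I do not expect a genuine obstacle: the argument is a routine application of submodularity, and the only points needing a word of care are (i) that the maximizer $Y^\star$ for the larger base remains \emph{feasible} for the smaller base, which is exactly what makes the one-sided bound legitimate, and (ii) the harmless degenerate case in which $x_i$ happens to coincide with some $y_l^\star$, where the right-hand side of the target inequality is non-positive while the left-hand side is non-negative by monotonicity of $G$, so it holds trivially. If one prefers not to invoke Lemma \ref{thm:submodG} as a black box, the same conclusion follows pointwise: for each resource $c$ the map $Z\mapsto\max_{z\in Z}v_c\,g(|z-c|)$ on sets $Z$ of occupied locations is monotone and submodular, $G(\cdot;\mc{C})$ is a finite sum of such maps, and the displayed marginal-monotonicity inequality then holds term by term.
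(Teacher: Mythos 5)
Your proposal is correct and follows essentially the same route as the paper: the paper likewise invokes the submodularity of $G$ (Lemma on $G(x_S)-G(x_{S\setminus\{i\}})\geq G(x_T)-G(x_{T\setminus\{i\}})$) to get $G(y_1,\ldots,y_k,x\setminus\{x_i\};\mc{C})-G(x\setminus\{x_i\};\mc{C})\geq G(y_1,\ldots,y_k,x;\mc{C})-G(x;\mc{C})$ for all $y_1,\ldots,y_k$, and then (implicitly) maximizes both sides, which is the same as your ``fix the optimal witness $Y^\star$'' step. Your version is if anything slightly more explicit, since you note the telescoping needed to pass from the single-element submodularity statement to the $k$-block marginal inequality, which the paper leaves implicit.
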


\begin{proof}
By Lemma \ref{thm:submodG}, 
\begin{eqnarray}
\begin{array}{l}
G(y_1,\ldots,y_k,x\setminus \{x_i\};\mc{C})-G(x\setminus \{x_i\};\mc{C})\geq G(y_1,\ldots,y_k,x;\mc{C})-G(x;\mc{C}),\\ \forall y_1,\ldots,y_k \in \mc{C}.
\end{array}
\end{eqnarray}
\end{proof}

\begin{proof}[Proof of Theorem \ref{thm:poa}]
Let $(x,\mc{P})$ be a solution of Algorithm \ref{alg:extmarden} and $(x^*,\mc{P}^*)$ be an optimal solution. Also, we denote $G(x;\mc{C})$ as $G(x)$. Then,
\begin{eqnarray}
\begin{array}{lll}
G(x^*)&\leq& G(x,x^*)\\
&=&G(x)\\
&&+[G(x,x^*_1)-G(x)]\\
&&+[G(x,x^*_1,x^*_2)-G(x,x^*_1)]\\
&&+\ldots\\
&&+[G(x,x^*)-G(x,x^*_1,\ldots,x^*_{N-1})]\\
&\leq& G(x)\\
&&+M_1(x,\mc{P})\\
&&+M_1(x,x^*_1,\mc{P})\\
&&+\ldots\\
&&+M_1(x,x^*_1,\ldots,x^*_{N-1},\mc{P})\\
&\leq& G(x)+n*M_1(x,\mc{P})\\
&\leq& G(x)+n*u_{\min}(x,\mc{P})\\
&\leq& 2G(x).
\end{array}
\end{eqnarray}
\end{proof}

Note: The proofs above also hold even if we replace $ S$ with the general definition of neighborhood $\mc{N}_i$.

\section{Comparison of algorithms in 1 dimensional setting}
\label{sec:comparison}

We examine the different behavior of algorithms (VVP, SOTA, NBO and the optimal solution, OPT) with simple 1-dimensional setting (Figure \ref{fig:comparison}). Initially, two agents are located at the left end. Agent 0 moves first in all cases. In SOTA, each agent is activated once. The agent first tries to improve the objective function by itself. Then, the dagent also tries to communicate with other agents to improve the objective function together. In B, agent 0 tries to move first, but it can't move because it is blocked by agent 1. Then, agent 0 is deactivated and does not move anymore by itself. Next, agent 1 moves right, stops at its best position, and tries to improve the objective function further by communicating with agent 0. However, the algorithm cannot find any collaborative moves to improve the situation and  terminates. Meanwhile, in C, NBO finds an optimal solution because the two agents are neighbors. In this case, the coverage control reaches another optimal solution as in D.

As shown above, SOTA depends on the action order of agents. Then we ran the same simulation by switching the initial locations of the agents. In E, SOTA can move both agents, and they reach a better solution than B (but not optimal). Meanwhile, NBO can find an optimal solution  in a stable manner as in F.

\begin{figure}[t]
\centering
\includegraphics[width=.6\linewidth]{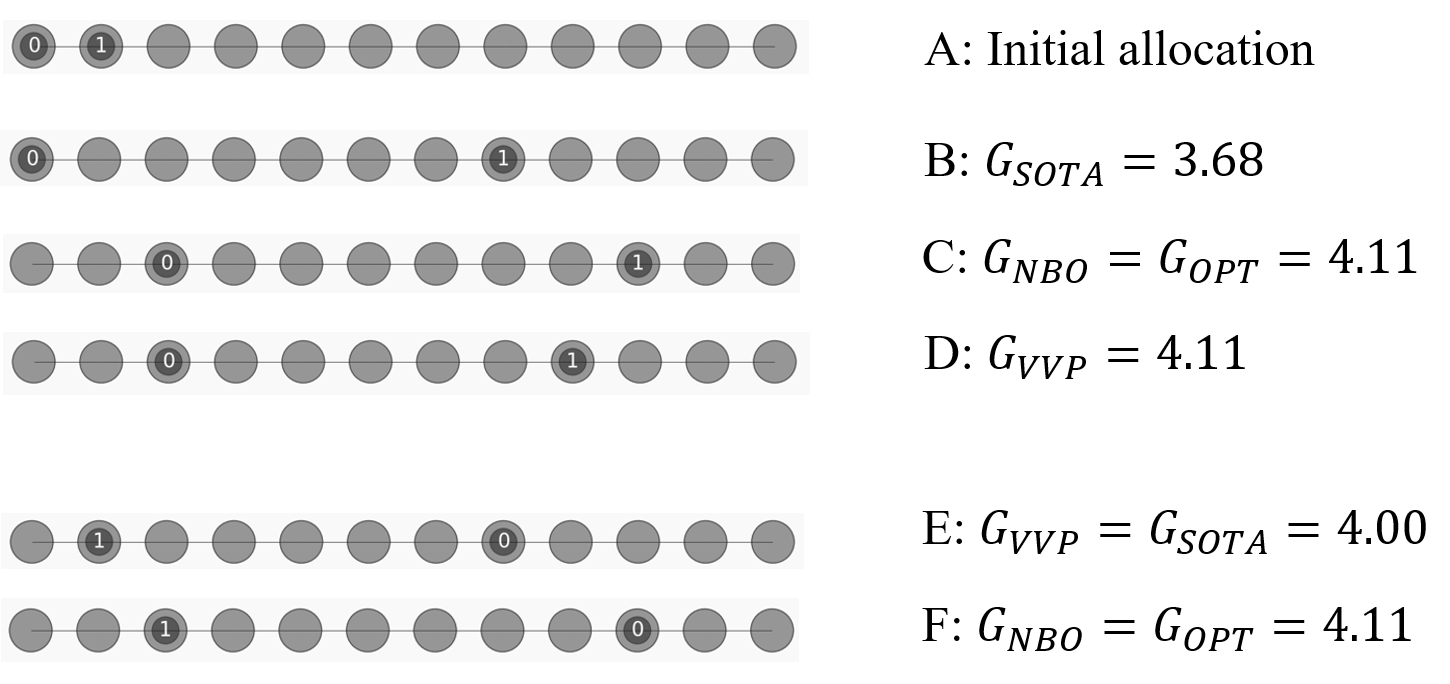}
\caption{1-dimensional setting for comparison of algorithms with $n=2, m=12$. All nodes have the same weight $v_c=1$. Agent 0 moves first in all results. A: The initial allocation. B \ldots D: Outcomes of algorithms. $G_{*}$ is the value of the objective function. $G_{OPT}$ is the optimal value. E,F: Outcomes with switching the initial locations of the agents.}
\label{fig:comparison}
\end{figure}

\section{Details of Experiments}
\label{sec:scalability}

\paragraph{Shapes.}
We ran our experiments on different shapes: 
\begin{itemize}
    \item (1D) \emph{lines} ---While the original approach of Marden \cite{marden2009overcoming} was sufficient on one-dimensional structures, we use it as a benchmark since optimal solutions can be computed when the number of agents remains reasonable.  
    \item (2D) \emph{stars} and \emph{trees} ---We generated two representatives of these standard shapes (Figure \ref{fig:startree}). 
    \item (2D) \emph{simulated indoor environment} ---This is based on a possible application which could consist in building a mesh network in an indoor environment (Figure \ref{fig:indoor}), where coverage could be needed after a catastrophic event (e.g. building inaccessible after a nuclear accident). 
    \item (2D) \emph{randomly generated structures} ---We start from a template structure with the parameter of corridor width $w \in \{1,2\}$ (see Figure~\ref{fig:randmaze}). From such a template structure, a connected structure is then generated by randomly removing nodes. 
    \item (2D) \emph{a small bridge} ---This is an example of non-convex discrete environment shown in \cite{yun_rus_2014}.
    \item (3D) \emph{non-planar graphs} ---To illustrate how the approach can work in higher dimensions, we selected some non-planar graphs (Figure \ref{fig:3d}). 
    \item ($\geq$ 3D) \emph{OR library} ---This is a test dataset consisting of random non-planar graphs for $p$-median problem \cite{beasley1990or}.
\end{itemize}

\aamasoct{Though we also try to evaluate the cooperative version of VVP, where each agent $i$ tries to maximize the social welfare of her neighborhood, we discontinue the experiment due to its lack of convergence in practice. (For example, it fails to converge in more than 60\% of cases in 3D setting). We also omit the evaluation of the algorithm in \cite{yun_rus_2014}, because it can be regarded as a variant of VVP by skipping the moves that could change $\mc{N}_{\mc{N}_i}$. In the cases of OR library, we have results only for 6 instances due to the large problem size. }

\begin{figure}[t]
\centering
\includegraphics[width=.8\linewidth]{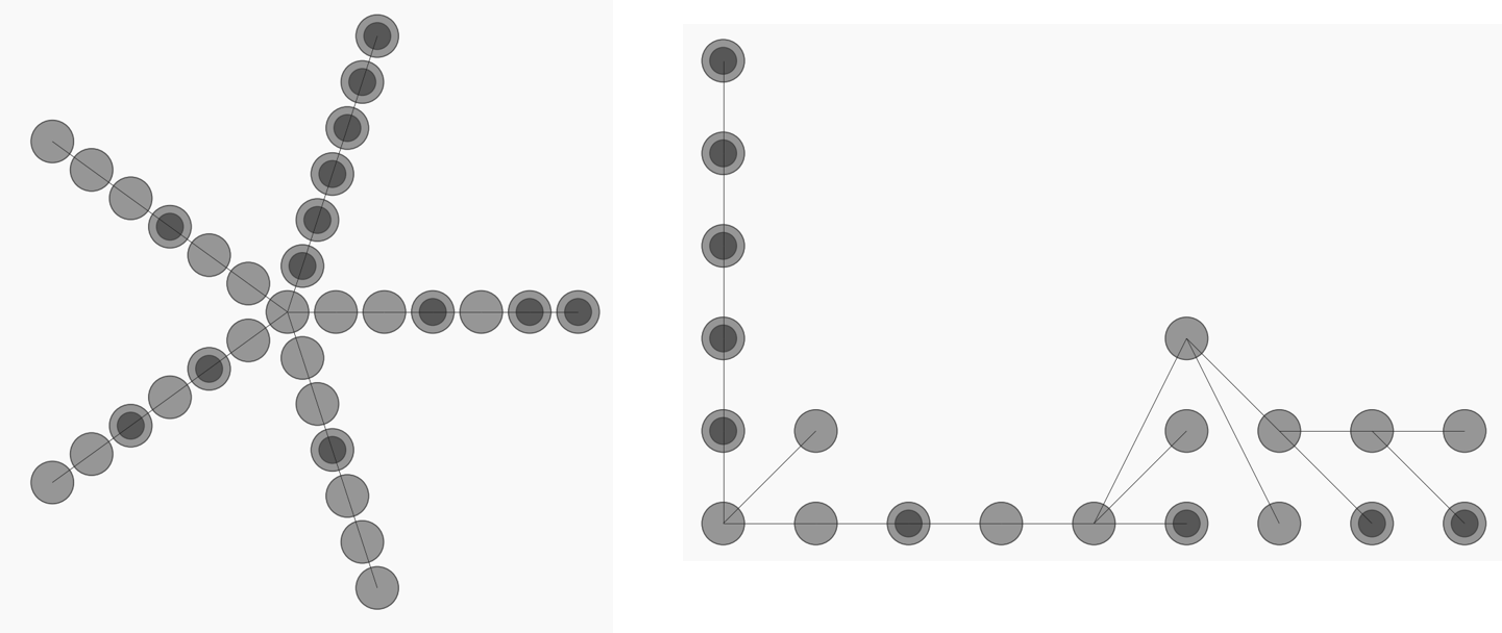}
\caption{Left: A star with extended branches. Right: A tree. Light gray nodes show targets ($v_c=1$), and dark gray nodes show agents.}
\label{fig:startree}
\end{figure}

\begin{figure}[t]
\centering
\includegraphics[width=1.\linewidth]{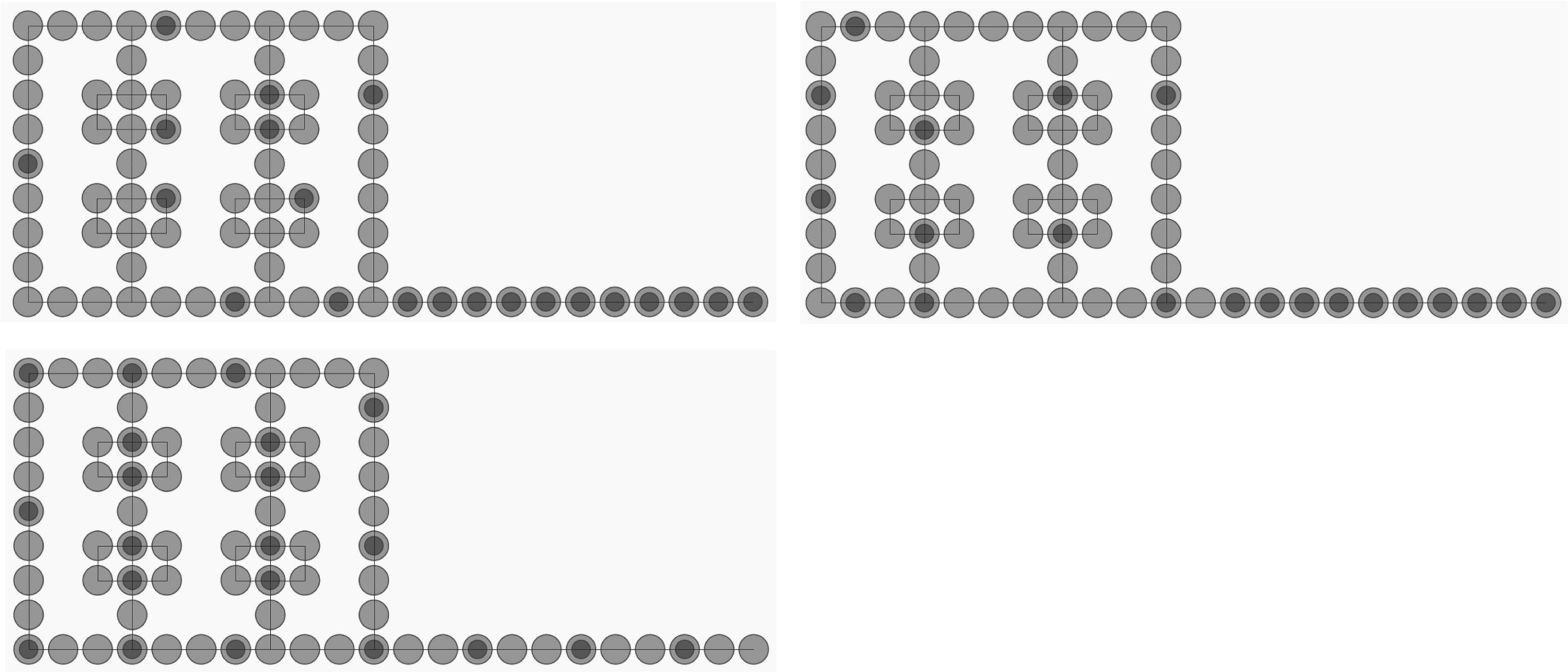}
\vspace{-0.3cm}
\caption{Setting for indoor applications with $N=21$. The space 
%has a thin structure with $\rho(C)=1,D=4$ consisting
consists of narrow corridors and small rooms. Top left: initial allocation, Bottom left: proposed method, Top right: benchmark \cite{sadeghi2020approximation}.}
\label{fig:indoor}
\end{figure}

\begin{figure}[t]
\centering
\includegraphics[width=1.\linewidth]{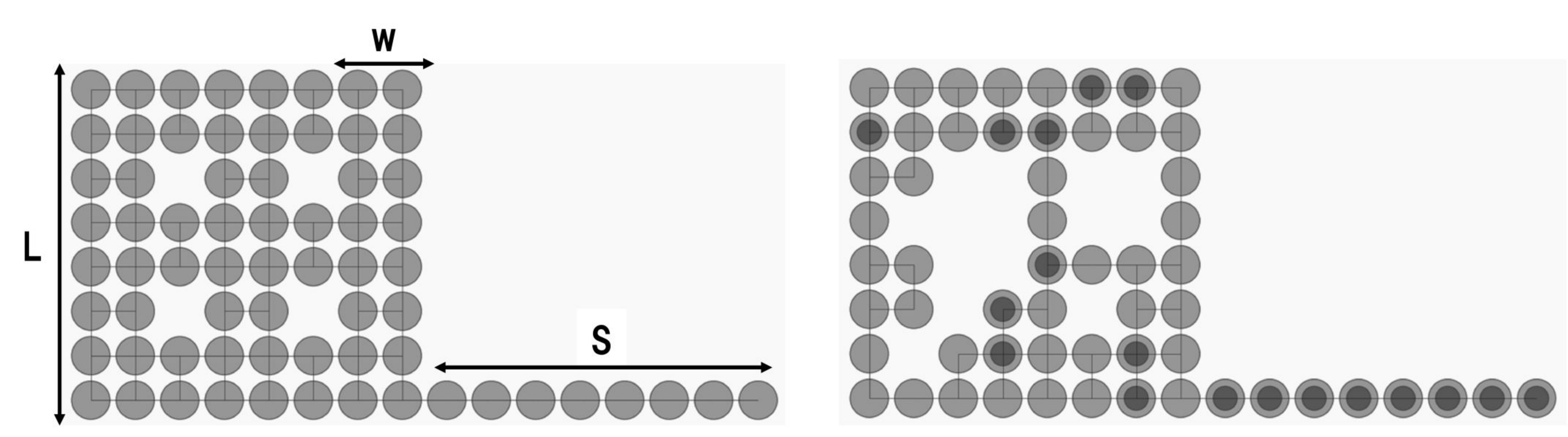}
\vspace{-0.3cm}
\caption{Randomly generated structures. Left: the template structure with the parameter of corridor width $w$, with $L=3(w+1)-1$, and the length of the tail $S=2w+4$. Right: connected structure generated by randomly removing nodes.}
\label{fig:randmaze}
\end{figure}

\begin{figure}[t]
\centering
\includegraphics[width=0.9\linewidth]{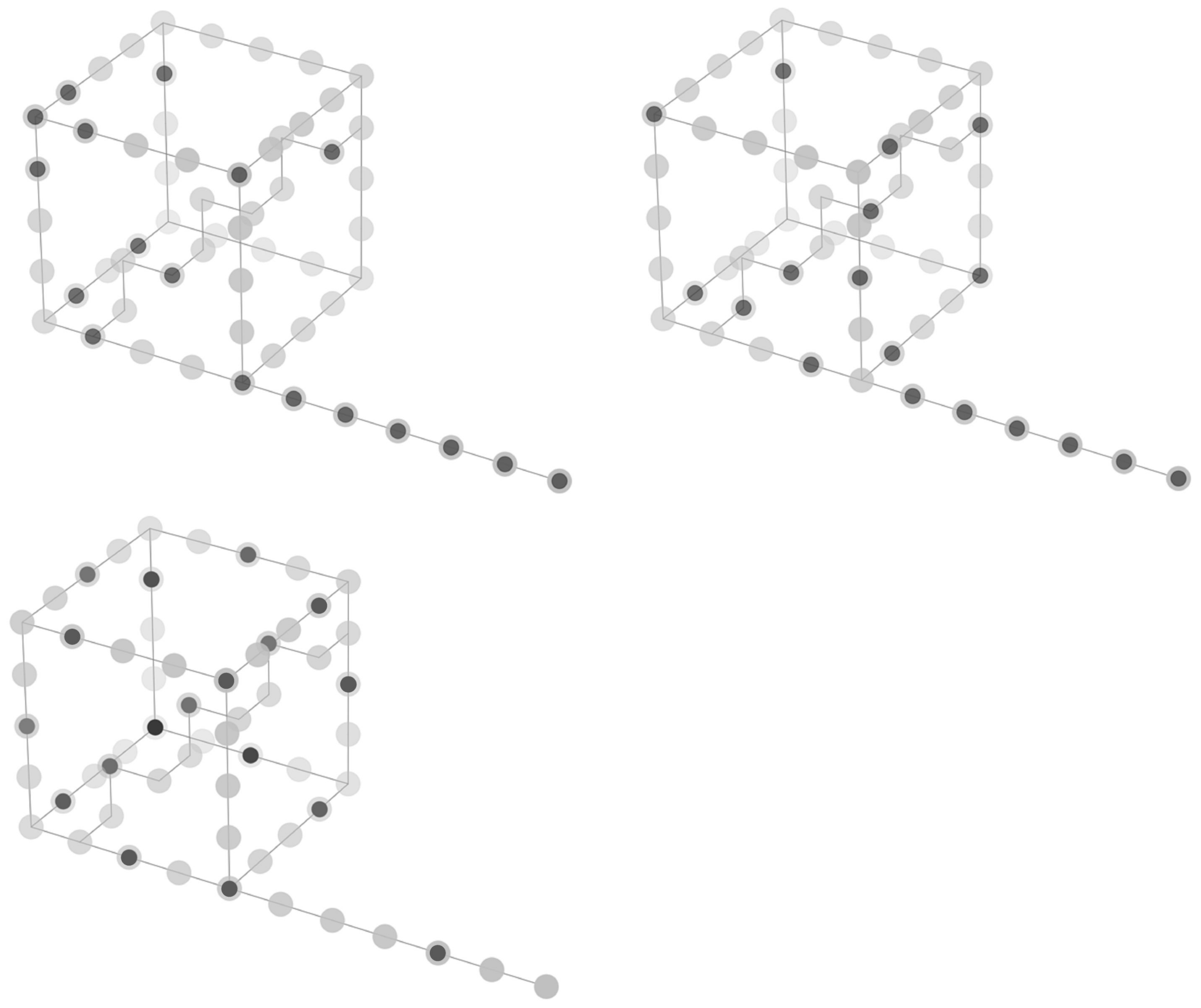}
\vspace{-0.3cm}
\caption{Setting for 3D application with $N=18$. The space 
%has a thin structure with $\rho(C)=0,D=6$ consisting
consists of narrow corridors. Top left: initial allocation, Bottom left: proposed method, Top right: benchmark \cite{sadeghi2020approximation}.}
\label{fig:3d}
\end{figure}

%We evaluate the scalability of the proposed algorithm by changing the height $h$ of the grid space which is proportional to the thinness $\rho(C)$ (the upper figure in Figure \ref{fig:thinexp}). We run simulations with $N=\lfloor |C| \rfloor$ for each $h\in\{1 \ldots 4\}$. The middle part of Figure \ref{fig:thinexp} shows the runtime grows exponentially when the thinness and the problem size increase. Meanwhile, the bottom figure in Figure \ref{fig:thinexp} shows the maximum size of the neighborhood $n_i$ when varying the thinness. It shows that $n_i$ is upper bounded almost proportionally to the thinness.

%Then we evaluate the runtime with fixed thinness. With a thin environment ($\rho(C)=1,D=6$) as in the upper part of Figure \ref{fig:runtime}, the size of the neighborhood $n_i$ is bounded and then the computation of neighborhood maximum can scale. We change the size of the environment as $|C|\in\{72,112,152,192\}$ by keeping thinness $\rho(C)=1$. 

We evaluate the scalability of the proposed algorithm by changing the size of the space $|C|$ and the number of agents $n$. Also, we set $|\mc{C}_+|=|C|$ (the top figure in Figure \ref{fig:runtime}). The middle figure shows the runtime until the convergence with different $|C|$. The growing speed of the runtime closely matches the theoretical prediction of $\mc{O}(|\mc{P}_S|^2)$, which is polynomial. 
%This result shows an advantage of the proposed method that performs well in the thin environments where the local optimality is problematic causing poor performances of existing approaches. 
The bottom figure shows the runtime with different $n$. Surprisingly, the runtime decreases when $n$ increases, because the size of the partitions $|\mc{P}_S|$ also decreases.
Note that the current implementation uses a single CPU just for theoretical verification, and parallel computation for each agent can reduce the runtime further. %by skipping the case that does not improve the potential function (Equation~\ref{eq:pote}). 

\begin{comment}
\begin{figure}[t]
\centering
\includegraphics[width=.6\linewidth]{}
\includegraphics[width=.6\linewidth]{}
\includegraphics[width=.6\linewidth]{}
\caption{Top: Grid space $C$ ($D=4$) with different heights $h\in\{1\ldots 4\}$ and fixed width $w=20$. Middle: Runtime when varying the thinness. Bottom: Maximum size of neighborhood when varying the thinness.}
\label{fig:thinexp}
\end{figure}

\begin{figure}[t]
\centering
\includegraphics[width=1.\linewidth]{fig/thinexp.pdf}
\caption{Grid space $C$ ($D=4$) with different height $h\in\{1\ldots 4\}$ with fixed width $w=20$.}
\label{fig:thinexp}
\end{figure}

\begin{figure}[t]
\centering
\includegraphics[width=1.\linewidth]{}
\caption{Runtime when changing thinness}
\label{fig:hruntime}
\end{figure}

\begin{figure}[t]
\centering
\includegraphics[width=1.\linewidth]{fig/ni.pdf}
\caption{Maximum size of neighborhood when changing thinness}
\label{fig:ni}
\end{figure}
\end{comment}

\begin{figure}[t]
\centering
\includegraphics[width=0.4\linewidth]{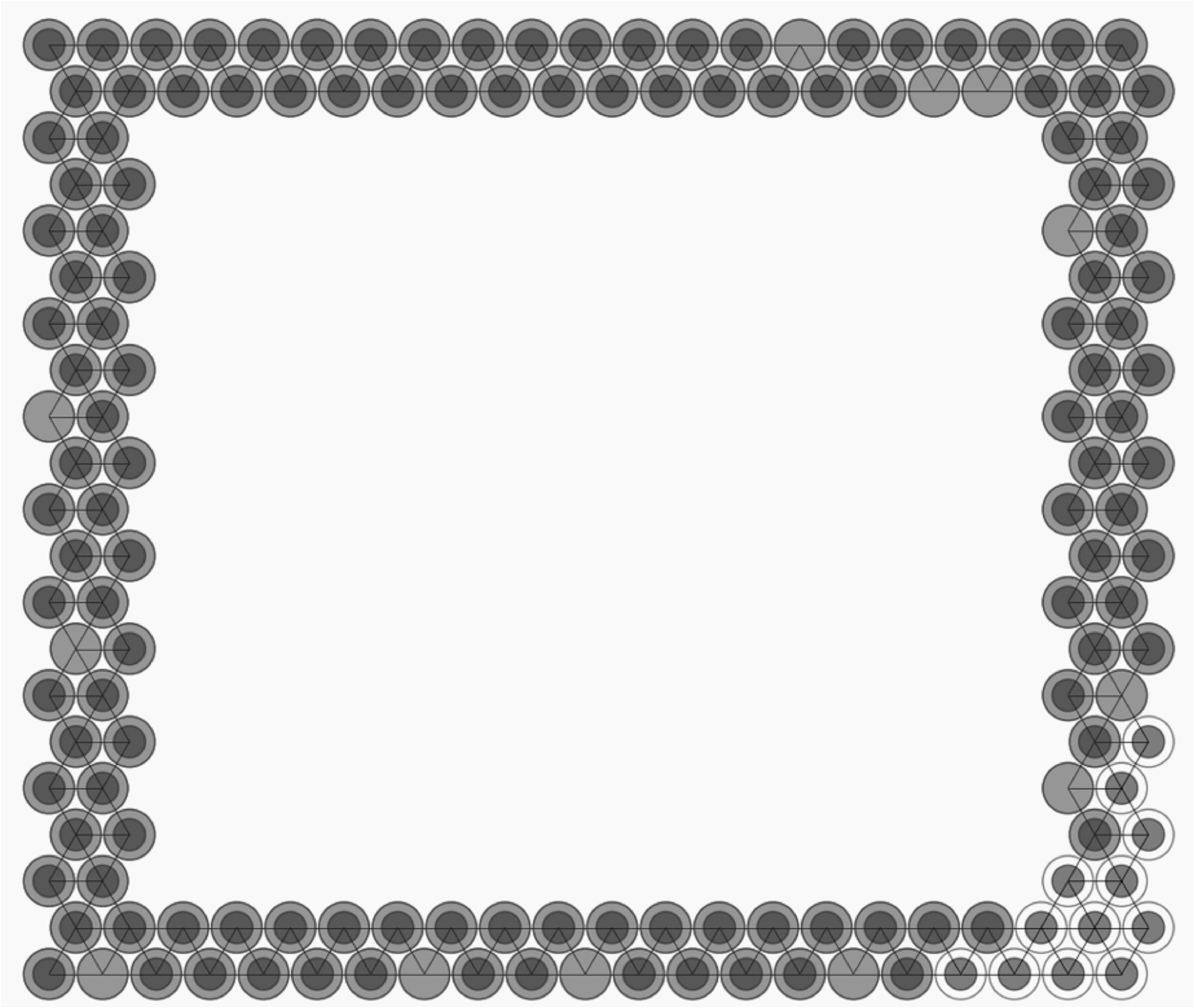}
\includegraphics[width=.6\linewidth]{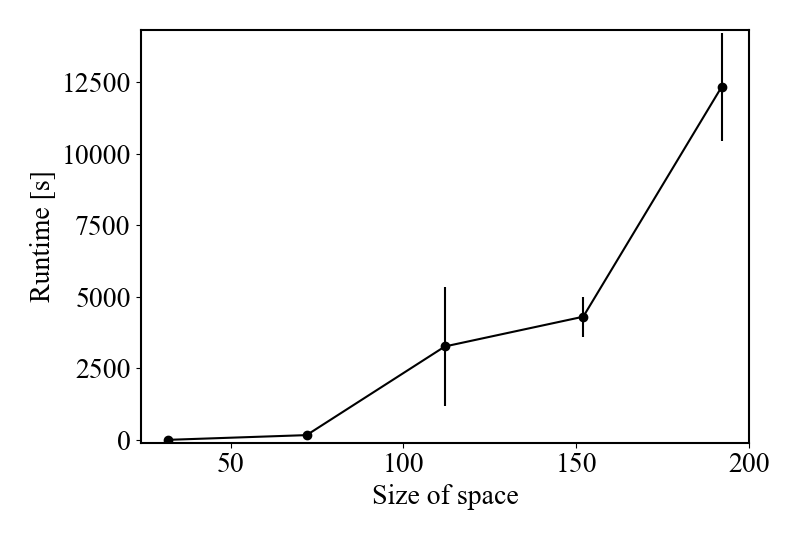}
\includegraphics[width=.6\linewidth]{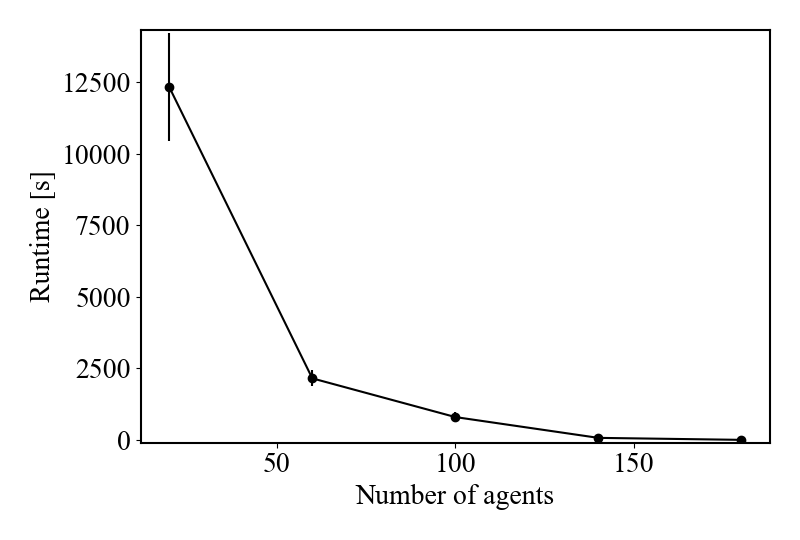}
\caption{Up: The environment ($|C|=152$) to evaluate the scalability. Middle: Runtime of our algorithm (Section \ref{sec:scale}) until the convergence when chainging $|C|$ ($n=20$). Bottom: Runtime of our algorithm when chainging $n$ ($|C|=192$).}
\label{fig:runtime}
\end{figure}
\color{black}

\section{Reproducibility Checklist}
\label{sec:apprepro}

\subsection{Algorithm}

If the paper introduces a new algorithm, it must include a conceptual outline and/or pseudocode of the algorithm for the paper to be classified as CONVINCING or CREDIBLE. (CONVINCING)

\subsection{Theoretical contribution}

If the paper makes a theoretical contribution: 
\begin{enumerate}
    \item All assumptions and restrictions are stated clearly and formally (yes)
    \item All novel claims are stated formally (e.g., in theorem statements) (yes)
    \item Appropriate citations to theoretical tools used are given (yes)
    \item Proof sketches or intuitions are given for complex and/or novel results (yes)
    \item Proofs of all novel claims are included (yes)
\end{enumerate}

For a paper to be classified as CREDIBLE or better, we expect that at least 1. and 2. can be answered affirmatively, for CONVINCING, all 5 should be answered with YES. (CONVINCING)

\subsection{Data sets}

If the paper relies on one or more data sets: 

\begin{enumerate}
    \item All novel datasets introduced in this paper are included in a data appendix (NA)
    \item All novel datasets introduced in this paper will be made publicly available upon publication of the paper (NA)
    \item All datasets drawn from the existing literature (potentially including authors’ own previously published work) are accompanied by appropriate citations (NA)
    \item All datasets drawn from the existing literature (potentially including authors’ own previously published work) are publicly available (NA)
    \item All datasets that are not publicly available (especially proprietary datasets) are described in detail (NA)
\end{enumerate}

Papers can be qualified as CREDIBLE if at least 3., 4,. and 5,. can be answered affirmatively, CONVINCING if all points can be answered with YES. (NA)

\subsection{Experiments}

If the paper includes computational experiments:

\begin{enumerate}
    \item All code required for conducting experiments is included in a code appendix (yes)
    \item All code required for conducting experiments will be made publicly available upon publication of the paper (yes)
    \item Some code required for conducting experiments cannot be made available because of reasons reported in the paper or the appendix (NA)
    \item This paper states the number and range of values tried per (hyper-)parameter during development of the paper, along with the criterion used for selecting the final parameter setting (yes)
    \item This paper lists all final (hyper-)parameters used for each model/algorithm in the experiments reported in the paper (yes)
    \item In the case of run-time critical experiments, the paper clearly describes the computing infrastructure in which they have been obtained (yes)
\end{enumerate}

For CREDIBLE reproducibility, we expect that sufficient details about the experimental setup are provided, so that the experiments can be repeated provided algorithm and data availability (3., 5., 6.), for CONVINCING reproducibility, we also expect that not only the final results but also the experimental environment in which these results have been obtained is accessible (1., 2., 4.). (CONVINCING)

\end{document}